\theoremstyle{definition} \newtheorem{de}{Definition}[section]
\theoremstyle{plain}      \newtheorem{te}[de]{Theorem}
\theoremstyle{remark}     \newtheorem{os}[de]{Remark}
\theoremstyle{plain}      \newtheorem{pr}[de]{Proposition}
\theoremstyle{plain}      \newtheorem{lem}[de]{Lemma}
\theoremstyle{plain}	  \newtheorem{co}[de]{Corollary}
\theoremstyle{definition} 
\theoremstyle{remark}		
\theoremstyle{remark}		
\theoremstyle{plain}        \newtheorem*{beware*}{Beware}
\theoremstyle{plain}      \newtheorem{as}{Assumption}
\newcommand{\numberset}{\mathbb}
\newcommand{\N}{\numberset{N}}
\newcommand{\R}{\numberset{R}}
\newcommand{\p}{\numberset{P}}
\newcommand{\E}{\numberset{E}}
\newcommand{\T}{\numberset{T}}
\newcommand{\F}{\mathscr{F}}
\newcommand{\<}{\langle}
\newcommand{\Saff}{S_{t_k}^{\Xi}}
\newcommand*\rot{\rotatebox{90}}
 \title{Instabilities in Multi-Asset and Multi-Agent \\Market Impact Games}
\author[a,\footnote{Corresponding author. ORCID: 0000-0003-0915-9002}]{ Francesco Cordoni}
\author[b,c]{ Fabrizio Lillo}
\affil[a]{Dipartimento di Economia e Management, Universit\`a di Pisa, \protect\\ Via C. Ridolfi, 10 - 56124 Pisa (PI), Italy\protect\\ E-mail: francesco.cordoni@sns.it \vspace{2.5pt}} 
\affil[b]{Scuola Normale Superiore, \protect\\ Piazza dei Cavalieri, 7 - 56126 Pisa (PI), Italy \vspace{2.5pt}}
\affil[c]{Dipartimento di Matematica, Universit\`a di Bologna, \protect\\ 
Piazza di Porta San Donato, 5 - 40126
Bologna (BO), Italy\protect\\ E-mail: fabrizio.lillo@unibo.it
} 
\date{\today}
\begin{document}

%- Intro & relation with literature
%- The solution with M assets
%- Caso a 2 asset
%- Strategie (fundamentalist, market neutral e arbitrageur) 
%- Implicazioni sulla liquidation strategy su un asset se l'arbitrageur trada su più asset e c'è %cross-impact
%- Stabilità e necessità che i costi salgano con M
%- Theta eterogenei e HFT
%- Conclusioni 
% caso multi players

\maketitle
% \clearpage
\begin{abstract}
    We consider the general problem of a set of agents trading a portfolio of assets in the presence of transient price impact and additional quadratic transaction costs and we study, with analytical and numerical methods, the resulting Nash equilibria. Extending significantly the framework of \cite{schied2018market} and \cite{luo_schied},
who considered the single asset case, we prove the existence and uniqueness of the corresponding Nash equilibria for the related mean-variance optimization problem. We then focus our attention on the conditions on the model parameters making the trading profile of the agents at equilibrium, and as a consequence the price trajectory, wildly oscillating and the market unstable. While \cite{schied2018market} and \cite{luo_schied} highlighted the importance of the value of transaction cost in determining the transition between a stable and an unstable phase, we show that also the scaling of market impact with the number of agents $J$  and the number of assets $M$ determines the asymptotic stability (in $J$ and $M$) of markets. 

\iffalse
% partire con definizione di stabilità

% generalization caso multi asset e soluzione Nash equilibrium

% caratterizzazione della stabilità in termini di spettro della marice di cross impact

%presentazione risultati su instabilità 

%casi heteregounes agent

%caso multi agents

% We extend the framework of \cite{schied2018market} 
% by considering a multi-asset and multi-agent market.
% We start by considering a market with $M$ assets 
% and we introduce a cross impact function to model the impact of correlated strategy respect to a given asset. According to both theoretical results, \cite{schneider2018cross} and empirical evidence \cite{mastromatteo2017trading} we select a linear and symmetric cross impact.
% We identify the Nash equilibrium for 
% two agents, where they solve an optimal execution problem considering both self and cross impacts of other player's orders. 
% From the equilibrium follows that even if 
% one of the two player has to liquidate 
% only a portion of the market, it is optimal to trade 
% also other assets although they should not be liquidate or purchase. This is attributed to 
% cross impact effect which allows to agents to protect 
% against predatory trading of Arbitrageur.
% Then, we study the stability of the market respect to 
% transaction costs and we introduce two concept of 
% market stability.
% \cite{schied2018market} fully characterized the 
% stability by identifying a critical value for transaction costs which prevent the equilibrium from
% spurious oscillations.
% We ask ourselves if the
% increase of the number of assets helps to avoid 
% oscillations at the equilibrium. Once selected cross impact matrices which are in line with 
% the empirical evidence of \cite{mastromatteo2017trading}, we show how the market is intrinsically unstable when the number of assets increases.
% However, we find that this instability 
% appears to be relaxed with an increased complexity 
% of the market, which is represented in our framework by the cross impact matrix structure.
% We study numerically the Nash equilibrium also when the two players are heterogeneous with different transaction costs and trading velocity.
% We conclude by presenting an extension of the framework in a multi-agent market where we analyse
% numerically the Nash equilibrium. 

We consider the general problem of a set of agents trading a portfolio of assets in the presence of transient price impact and additional quadratic transaction costs and we study, with analytical and numerical methods, the resulting Nash equilibria. Extending significantly the framework of \cite{schied2018market} and \cite{luo_schied},
who considered the one asset case, we focus our attention on the conditions on the value of transaction cost making the trading profile of the agents, and as a consequence the price trajectory, wildly oscillating and the market unstable. 
We prove the existence and uniqueness of the 
corresponding Nash equilibria for the related mean-variance optimization problem.
We find that the presence of more assets and a large number of agents make the market more prone to large oscillations and instability.  When the number of assets is fixed, a more complex structure of the cross-impact matrix, i.e. the existence of multiple factors for liquidity, makes the market less stable compared to the case when a single liquidity factor exists. 
% We discuss the policy and regulatory implications of our theoretical findings.

\fi
\end{abstract}
%   \vspace{1cm}
\textbf{Keywords}: Market impact;
Game theory and Nash equilibria;
Transaction costs;
Market microstructure; High Frequency Trading; 
Cross-Impact.
\\

\noindent
\textbf{Statements and Declarations}: Not applicable.
\clearpage

\section{Introduction}

Instabilities in financial markets have always attracted the attention of researchers, policy makers and practitioners in the financial industry because of the role that financial crises have on the real economy. Despite this, a clear understanding of the sources of financial instabilities is still missing, in part probably because several origins exist and they are different at different time scales. The recent automation of the trading activity has raised many concerns about market instabilities occurring at short time scales (e.g. intraday), also because of the attention triggered by the Flash Crash of May 6th, 2010 (\cite{Kirilenko}) and the numerous other similar intraday instabilities observed in more recent years (\cite{Johnson}, \cite{golub}, \cite{Calcagnile}, \cite{Brogaard}), such as the Treasury bond flash crash of October 15th, 2014.
The role of High Frequency Traders (HFTs), Algo Trading, and market fragmentation in causing these events has been vigorously debated, both theoretically and empirically (\cite{golub}, \cite{Brogaard}).

One of the puzzling characteristics of market instabilities is that a large fraction of them appear to be endogenously generated, i.e. it is often very difficult to find an exogenous event (e.g. a news) which can be considered at the origin of the instability (\cite{Cutler,Fair,Joulin}). Liquidity plays a crucial role in explaining these events. Markets are, in fact, far from being perfectly elastic and any order or trade causes prices to move, which in turn leads to a cost (termed slippage) for the investor. The relation between orders and price is called market impact. In order to minimize market impact cost, when executing a large volume it is optimal for the investor to split the order in smaller parts which are executed incrementally over the day or even across multiple days. One of the origins of  market impact cost is predatory trading (\cite{Brunnermeier,Carlin}): the knowledge that a trader is purchasing progressively a certain amount of assets can be used to make profit by buying at the beginning and selling at the end of the trader's execution. Part of the core strategy of HFTs is exactly predatory trading. Now, the combined effect on price of the trading of the predator and of the prey can lead to large price oscillations and market instabilities. In any case, it is clear that the price dynamics is the result of the (dynamical) equilibrium between the activity of two or more agents simultaneously trading. 

This equilibrium can be studied by modeling the above setting as a market impact game (\cite{Carlin} , \cite{schoneborn2008trade}, \cite{Moallemi},  \cite{Lachapelle,schied2018market,Strehle,strehle2017single}). In a nutshell, in a market impact game, two traders want to trade the same asset in the same time interval. While trading, each agent modifies the price because of market impact, thus when two (or more) traders are simultaneously present, the optimal execution schedule of a trader should take into account the simultaneous presence of the other trader(s). As customary in these situations, the approach is to find the Nash equilibrium, which in general depends on the market impact model.

Market impact games are a perfect modeling setting to study endogenously generated market instabilities. A major step in this direction has been recently made\footnote{
% FC:
Furthermore, many works have recently examined the continuous time setting, e.g., \cite{schied2017high}, 
\cite{Bayraktar} and mean field games approach, e.g., \cite{cardaliaguet2018mean}, \cite{Fu_mean_field}.} by \cite{schied2018market}.  
By using the transient impact model of \cite{Bouchaud, BFL} plus a quadratic temporary impact cost (which can alternatively be interpreted as a quadratic transaction cost, see below), they have recently considered a simple setting with two identical agents liquidating a single asset and derived the Nash equilibrium. Interestingly, they also derived analytically the conditions on the transaction cost under which the Nash equilibrium displays huge oscillations of the trading volume and, as a consequence, of the price, thus leading to market instabilities\footnote{In their paper, Schied and Zhang interpret the large alternations of buying and selling activity observed at instability as the ``hot potato game" among HFTs empirically observed during the Flash Crash \citep{CFTC,Kirilenko}.}. Specifically, they proved the existence of a sharp transition between stable and unstable markets at a specific value of the transaction cost parameter.

Although the paper of Schied and Zhang highlights a key mechanism leading to market instability, several important aspects are left unanswered. First, market instabilities rarely involve only one asset and, as observed for example during the Flash Crash, a cascade of instabilities affects very rapidly a large set of assets or the entire market (\cite{CFTC}). This is due to the fact that optimal execution strategies often involve a {\it portfolio} of assets rather than a single one (see, e.g. \cite{Gerry}). Commonality of liquidity across assets (\cite{Chordia} and cross-impact effects (\cite{alfonsi2016multivariate}, \cite{schneider2018cross}) makes the trading on one asset triggers price changes on other assets.
Furthermore,
\cite{cespa2014} show that a drop in liquidity in one asset can propagate in another asset, causing a market liquidity crash.
% raise uncertainty for liquidity provides in another asset triggering an illiquidity propagation. }
Thus, it is natural to ask: is a large market more or less prone to market instabilities? How does the structure of cross-impact and therefore of liquidity commonality affect the market stability? 
A second class of open questions regards instead the market participants. Do the presence of more agents simultaneously trading one asset tends to stabilize the market? While the solution of Schied and Zhang considers only two traders, it is important to know whether having more agents is beneficial or detrimental to market stability. For example, regulators and exchanges could implement mechanisms to favor or disincentive participation during turbulent periods. Answering this question requires solving the impact game with a generic number of agents and it is discussed in the single asset case in  \cite{luo_schied}. 
%Furthermore, they also extended the original framework by considering the agents' risk aversion and  the related mean-variance and CARA optimization problems. In particular,  they derived explicit solutions for the corresponding Nash equilibria and they studied numerically how the stability is influenced by the presence of many agents.
% Moreover, while in Schied and Zhang the two investors are identical, real markets are characterized by huge heterogeneity in trading skills. For example, some agents (e.g. HFTs) are much faster than others, some agents use more sophisticated trading strategies and have smaller trading costs, etc. Is this traders' heterogeneity beneficial to market stability? Do HFTs destabilize markets?

In this paper we extend considerably the setting of Schied and Zhang by answering the above research questions. Specifically, starting from \cite{luo_schied},
we consider (i) the case when agents trade multiple assets simultaneously and cross market impact is present and we provide explicit representations of related Nash equilibria; (ii) after studying how trading conditions may be affected by the cross impact, we derive theoretical results on market stability for the $J=2$ agents by showing
how it is related to cross-impact effects; (iii)
we study numerically market stability in the general case and we extend a previous result and conjecture of \cite{luo_schied} in the multi-asset case.
% the case when $J> 2$ agents simultaneously trade one asset; (iii) the case when agents have different trading skills, either because the temporary or permanent impact paid by one agent is smaller than the one paid by the other or because one of the two agents (an HFT) is faster than the other in submitting orders. In each of these cases we derive the Nash equilibrium of the corresponding impact game. 
% Then we derive, either analytically or numerically, the conditions under which the market does display huge oscillations and is therefore unstable.  

It is important to notice that in market impact games, market impact is taken as exogenously given. Market microstructure literature has extensively discussed its endogenous nature since the seminal work of \cite{Kyle}.  Theoretical and empirically studies have investigated and provided evidence of how market impact might depend on number of agents and of traded assets, e.g., \cite{Bagnoli}, 
\cite{benzaquen2017dissecting}, 
\cite{coimpact}, \cite{garcia2020multivariate}. Therefore we will consider this dependence and show how the stability of markets in market impact games depends on the way impact scales with the number of agents and the number of assets. We find that, if market impact is independent from the number of agents and assets, larger and more crowded markets are more prone to market instability. However, if, as observed empirically and proposed theoretically, market impact suitably scales with these two quantities, stability can be recovered.

%The different `paths' leading to market instability are therefore highlighted, finding, surprisingly, that larger and more competitive markets are more prone to market instability. {\color{blue} However, this anti-intuitive result highlights how the whole framework of market impact games might be fundamentally flawed. The main reason is that the impact 
%of a single trader does not scale appropriately with the number of agents simultaneously present and the number of assets.  Theoretical and empirically studies have investigated and provided evidence of how market impact might depend on number of agents and traded assets, e.g., \cite{Bagnoli},
%\cite{benzaquen2017dissecting},
%\cite{coimpact},\cite{garcia2020multivariate}.
%Therefore, we conclude by showing how we may properly scale the model in order to prevent instability phenomena. % allowing to
% Finally, we exhibit a way to reduce
% we also exhibit a possible way to reduce 
% these instabilities which a policy regulator
% would like to prevent.

The paper is organized as follows. In Section
\ref{sec_market_impact} we recall some notation of the market impact games framework and the \cite{luo_schied}
model. We extend the basic model of \cite{luo_schied} to the multi-asset case in Section \ref{se_MUlti_assets_case}, where we find the
corresponding Nash equilibria for different objective functions.
We analyse how the cross-impact modifies the trading
profile and trading conditions in Section \ref{cross_trading_conditions}.
Finally, in Section \ref{sec_inst_comm} we study 
how the cross-impact matrix affects
the market stability and we present how impact must scale with the number of assets to preserve stability. Finally, in Section 6 we draw somw conclusions. 

\section{Market Impact Games}\label{sec_market_impact}
Consider two traders who want to trade simultaneously a certain number of shares, minimizing the trading cost. 
Since the trading of one agent affects the price, the other agent must take into account the presence of the former in optimizing her execution. This problem is termed {\it market impact game} and has received considerable attention in recent years (\cite{Carlin,schoneborn2008trade,Moallemi, Lachapelle,schied2018market,Strehle,strehle2017single}). The seminal paper by Schied and Zhang,  (\cite{schied2018market}), considers a market impact game between two identical agents trading the same asset in a given time period. 

 When none of the two agents trade, the price dynamics is 
 described by 
 the so called unaffected price process $S_t^0$ which is 
 a right-continuous martingale defined 
 on a given probability space 
 $(\Omega,(\F_t)_{t\geq0},\F,\p)$.
 A trader wants to unwind
 a given initial position with inventory $X$, 
 where a positive (negative) inventory means a short (long) position,
 during a given trading time grid $ \T=\{t_0,t_1,\ldots,t_N\},$ where
 $0=t_0<t_1<\cdots <t_N=T$ and following 
 an admissible strategy, which is defined as follows: 

 \begin{de}[Admissible Strategy]\label{de_admi_strate_schied}
 Given $\T$ and $X$, an \emph{admissible trading strategy} for $\T$ and $X\in \R$ is a vector 
 $\bm{\zeta}=(\zeta_0,\zeta_1,\ldots,\zeta_N)$ of random variables such that: 
 \begin{itemize}
  \item $\zeta_k \in \F_{t_k} \text{ and bounded},\  \forall k=0,1,\ldots,N.$
  \item $\zeta_0+\zeta_1+\cdots+\zeta_N=X$.
 \end{itemize}
 \end{de} 

The random variable
$\zeta_k$ represents the order flow at trading time $t_k$ where positive (negative)
flow corresponds to a sell (buy) trade of volume $|\zeta_k|$. We denote with $X_1$ and $X_2$ the initial 
inventories of the two considered agents playing the game and with  $\Xi=(\xi_{i,k})\in \R^{2 \times (N+1)}$ the matrix 
of the respective strategies, where 
$\bm{\xi}_{1,\cdot}=\{ \xi_{1,k} \}_{k \in \numberset{T}}$
and 
$\bm{\xi}_{2,\cdot}=\{ \xi_{2,k} \}_{k \in \numberset{T}}$
are the strategies of trader $1$ and $2$, respectively.
Traders are subject to fees and transaction costs and their objective is to minimize them by optimizing the execution. As customary in the literature, the costs are modeled by two components. The first one is a temporary impact component modeled by a quadratic term $\theta \xi_{j,k}^2$, respectively for trader $j$, which does not affect the price dynamics and depends on the immediate liquidity present in the order book. Notice that, as discussed in \cite{schied2018market}, this term can also be interpreted as a quadratic transaction fee. Here we do not specify exactly what this term represents, sticking to the mathematical modeling approach of Schied and Zhang. 

The second component is related to permanent impact and affects future price dynamics. Following \cite{schied2018market}, we consider
the celebrated transient impact model of \cite{Bouchaud, BFL}, which describes the price process $S_{t}^{\Xi}$
affected by the 
strategies $\Xi$ of the two traders, i.e.,
\[
   S_t^{\Xi}= S_t^0 -\sum_{t_k <t} G(t-t_k)(\xi_{1,k}+\xi_{2,k}), 
   \quad \forall\ t \in \numberset{T},
\]
where $G:\R_{+}\to \R_+$ is the so called \emph{decay kernel}, which describes the lagged price impact of a unit buy or sell order
over time.
Usual assumptions on $G$ are satisfied, i.e., it is convex, nonincreasing, 
nonconstant so that
 $t\mapsto G(|t|)$ is strictly positive definite in the sense of Bochner,
 see \cite{Alfonsi} and \cite{schied2018market}. Notice that by choosing a constant kernel $G$, one recovers the celebrated Almgren-Chriss model (\cite{almgren2001optimal}).

The cost faced by each agent is the sum of the two components above. Specifically, 
%market impact cost, in the form of the implementation shortfall, and a quadratic transaction costs which depends on a parameter $\theta$. 
let us denote with $\mathscr{X}(X,\T)$ the set of admissible strategies
for the initial inventory $X$ on a specified time grid $\T$, the cost functions are defined as:
%and let $(\bm{\xi}^*,\bm{\eta}^*)
%\in \mathscr{X}(X_0,\T)\times \mathscr{X}(Y_0,\T)$ be the Nash equilibrium for the two traders.
  \begin{de}[\cite{schied2018market}]\label{de_cost} Given $\T=\{t_0,t_1,\ldots,t_N\}$, $X_1$ and $X_2$. Let $(\varepsilon_i)_{i=0,1,\ldots N}$
     be an i.i.d. sequence of Bernoulli $\left(\frac{1}{2}\right)$-distributed random variables 
     that are independent of $\sigma(\bigcup_{t\geq0}\F_t)$. Then the 
     \emph{cost of $\bm{\xi}_{1,\cdot}\in \mathscr{X}(X_1,\T)$ given $\bm{\xi}_{2,\cdot}\in \mathscr{X}(X_2,\T)$}
     is defined as
     \[
      C_{\T}(\bm{\xi}_{1,\cdot}|\bm{\xi}_{2,\cdot})=
     \sum_{k=0}^N \left(
      \frac{G(0)}{2}\xi_{1,k}^2-\Saff \xi_{1,k}+\varepsilon_k G(0) \xi_{1,k} \xi_{2,k} +
      \theta \xi_{1,k}^2
      \right)+ X_1 S_0^0
     \]
     and the \emph{costs of $\bm{\xi}_{2,\cdot}$ given $\bm{\xi}_{1,\cdot}$} are
          \[
      C_{\T}(\bm{\xi}_{2,\cdot}|\bm{\xi}_{1,\cdot})=
     \sum_{k=0}^N \left(
      \frac{G(0)}{2}\xi_{2,k}^2-\Saff \xi_{2,k}+(1-\varepsilon_k ) G(0) \xi_{1,k} \xi_{2,k} +
      \theta \xi_{2,k}^2
      \right)+ X_2 S_0^0.
     \]
%     Since we are in the context of temporary price impact we introduce the quadratic term $\theta \xi_k^2$ to model the so called slippage which arise from the transient price impact, as suggested by \cite{schied2018market}. 
    \end{de}
Thus the execution priority at time $t_k$ is given to the agent who wins an independent coin toss game, represented by a Bernoulli variable $\varepsilon_k$,
which is a fair game in the framework of \cite{schied2018market}.
% In this setting, the two agents are equally likely to arrive first in each interval. \textcolor{red}{La frase seguente va tolta, no?} In Section \ref{sec_hetero_agent} we study how the Nash equilibrium changes when this hypothesis is relaxed, i.e. one agent is "faster" than the other one, possibly because she is a HFT.
% In Section \ref{sec_hetero_agent}, we explore the case when one of the two traders has an advantage, i.e., its trading velocity is greater compared to other player.
  Given the time grid $\T=\{t_0,t_1,\ldots,t_N\}$ and the initial values 
  $X_1,X_2 \in \R$, we define the \emph{Nash Equilibrium}
  as a pair $(\bm{\xi}_{1,\cdot}^*,\bm{\xi}_{2,\cdot}^*)$ of strategies in $\mathscr{X}(X_1,\T)\times 
  \mathscr{X}(X_2,\T)$ such that
  \[
  \begin{split}
   \E[C_{\T}(\bm{\xi}_{1,\cdot}^*|\bm{\xi}_{2,\cdot}^*)]&=\min_{\bm{\xi}_{1,\cdot}\in \mathscr{X}(X_1,\T)}
   \E[C_{\T}(\bm{\xi}_{1,\cdot}|\bm{\xi}_{2,\cdot}^*)] \text{ and }\\
   \E[C_{\T}(\bm{\xi}_{2,\cdot}^*|\bm{\xi}_{1,\cdot}^*)]&=\min_{\bm{\xi}_{2,\cdot}\in \mathscr{X}(X_2,\T)}
   \E[C_{\T}(\bm{\xi}_{2,\cdot}|\bm{\xi}_{1,\cdot}^*)].
   \end{split}
  \]
One of main results of \cite{schied2018market} is the proof, under general assumptions,
of the existence and uniqueness of the Nash equilibrium.
%which minimizes the expected costs over the admissible strategies $\mathscr{X}(X_0,\T)\times \mathscr{X}(Y_0,\T)$. 
Moreover, they showed that 
this equilibrium is deterministically given by a linear combination of two constant vectors, namely
\begin{align}\label{eq_1_S&Z}
\bm{\xi}_{1,\cdot}^*&=\frac{1}{2} (X_1 +X_2)\bm{v}
+\frac{1}{2} (X_1 -X_2)\bm{w}
\\
\label{eq_2_S&Z}
\bm{\xi}_{2,\cdot}^*&=\frac{1}{2} (X_1 +X_2)\bm{v}
-\frac{1}{2} (X_1-X_2)\bm{w},
\end{align}
where the fundamental solutions $\bm{v}$ and $\bm{w}$ are 
defined as 
\begin{align*}
   \bm{v}&=\frac{1}{\bm{e}^T (\Gamma_{\theta}+\widetilde{\Gamma})^{-1}\bm{e}}(\Gamma_{\theta}+\widetilde{\Gamma})^{-1}\bm{e}\\ 
     \bm{w}&=\frac{1}{\bm{e}^T (\Gamma_{\theta}-\widetilde{\Gamma})^{-1}\bm{e}}(\Gamma_{\theta}-\widetilde{\Gamma})^{-1}\bm{e}.
\end{align*}
and  $\bm{e}=(1,\ldots,1)^T \in \R^{N+1}$. The kernel matrix $\Gamma \in \R^{(N+1)\times (N+1)}$ is given by
  \[
   \Gamma_{ij}=G(|t_{i-1}-t_{j-1}|), \quad i,j=1,2,\ldots,N+1,
  \]
 and for $\theta\geq0$ it is $
   \Gamma_{\theta}:=\Gamma+2\theta I$,   and the matrix $\widetilde{\Gamma}$ is given by
  \[
\widetilde{\Gamma}_{ij}=
  \begin{cases}
   \Gamma_{ij} & \text{ if }i>j\\
      \frac{1}{2}G(0) & \text{ if }i=j,\\
      0 & \text{ otherwise.}
  \end{cases}
  \]
   As shown by \cite{schied2018market} all these matrices are positive definite.
   
An interesting result of \cite{schied2018market} concerns the stability
of the Nash equilibrium related to the transaction costs parameter $\theta$ and the decay kernel $G$. Generically, following \cite{schied2018market}, we say that a market is \emph{unstable} if the trading strategies at the Nash equilibrium exhibit spurious oscillations, i.e., if there exists 
a sequence of trading times such that the orders are consecutively composed by buy and sell trades, for all initial inventories $X_1$ and $X_2$. In the optimal execution literature such behavior is termed {\it transaction triggered price manipulation}, see \cite{Alfonsi}. 
Figure \ref{fig_price} shows the simulation of the price process under the Schied and Zhang model when both investors have an inventory equal to $1$ for two values of $\theta$. The unaffected price process is a simple random walk with zero drift and constant volatility and the trading of the two agents, according to the Nash equilibrium, modifies the price path. For small $\theta$ (top panel) the affected price process exhibits wild oscillations, while when $\theta$ is large (bottom panel) the irregular behavior disappears\footnote{Moreover, we observe that the presence of spurious oscillations in the price
dynamics may affect the consistency of the spot volatility estimation. Indeed, these oscillations
act as a market microstructure noise, even if this noise is caused by the oscillations of a deterministic trend, while usually it is characterized by some additive noise term.
In particular, we find that when $\theta$ is close to zero the noise is amplified by spurious oscillations, while for sufficiently large $\theta$ these oscillations do not compromise the consistency of the spot volatility.}.
        \begin{figure}[!h]
\centering
{\includegraphics[width=0.85\textwidth]{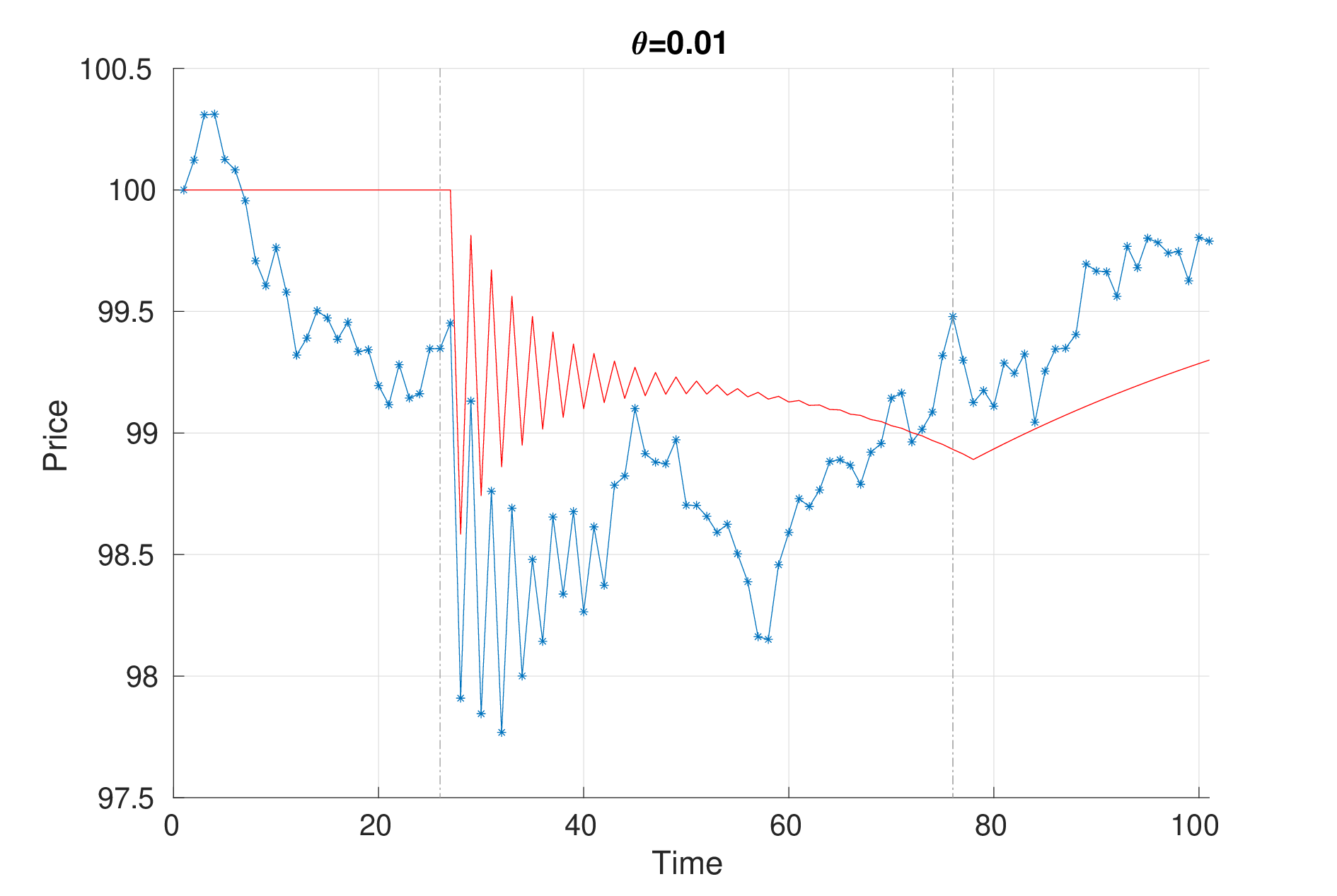}}
{\includegraphics[width=0.85\textwidth]{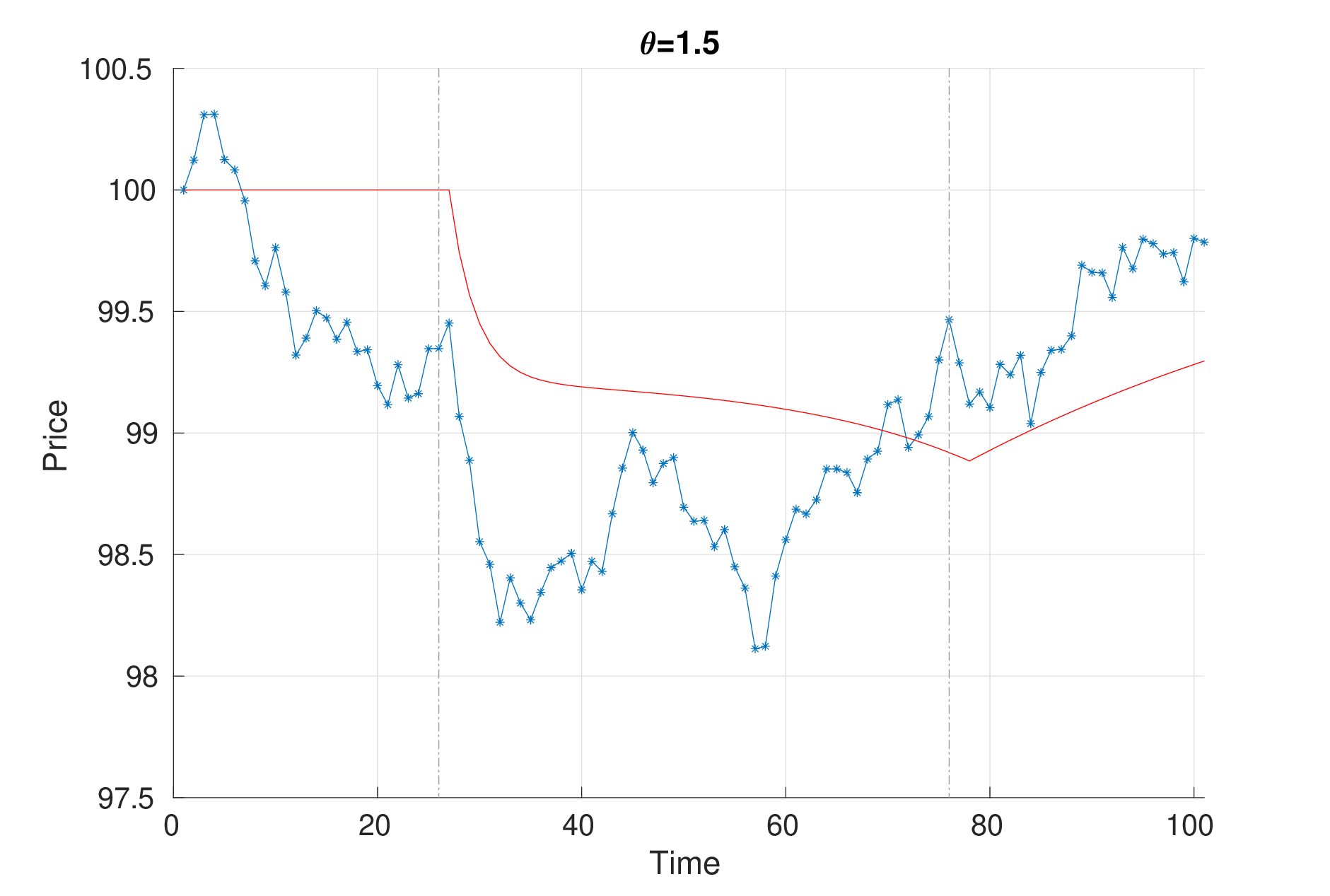}}
\caption{Blue lines exhibit the price process when both agents have inventory equals to $1$. The top (bottom) panel shows the dynamics when $\theta=0.01$ ($\theta=1.5$). The trading time 
grid has $N+1=51$ points, $G(t)=\exp(-t)$, the volatility of the 
unaffected price process is fixed to $1$ and $S_0=100$. The 
vertical grey dotted lines delineates the trading session. The 
red lines shows the drift dynamics due to trading.}\label{fig_price}
\end{figure}

Thus, 
\cite{schied2018market} showed,
when the trading time grid is equispaced, $\numberset{T}_N$,
and under general assumptions on $G$, 
 the existence of a critical value $\theta^*=G(0)/4$
 such that for $\theta<\theta^*$ the 
equilibrium strategies exhibit oscillations of buy and sell orders for both traders. 
Hence, the behavior at zero of the 
kernel function plays a relevant role for the 
equilibrium stability.
% The source of these oscillations is given by the
%   interaction between the two agents.
% One of the most common interactions between two agents is the Predatory Trading, which consists in the exploitation of price impact generated by the other
%  agent. The strategy is predatory in the sense 
%  that it generates profit by simultaneously decreasing the other agent's revenues \cite{schied2018market}.
%  Moreover, under transient impact model 
%  a protection against predatory trading is 
%  to place orders with alternate signs in order to 
%  eliminate the previous price impacts.
%As mentioned in the introduction, this result has been proved for a market with only $M=1$ asset, two $(J=2)$ risk-neutral traders, {\color{blue}  where the impact of a single agent does not depend on $J$}.
Now, we recall the extension of this framework in a multi-agent market ($J >2$) of
\cite{luo_schied}. Then, we first extend 
their framework in the multi-asset ($M>1$) case, where we show the existence and uniqueness of
the related Nash equilibrium.
Finally,
we generalize the stability result of \cite{schied2018market}
in the multi-asset case and we show 
how to appropriately scale with $J$ and $M$ the kernel function in order to prevent huge oscillations in the related equilibria.
% Finally, we study the Nash equilibrium also when the two players are heterogeneous with different
% transaction costs and trading velocity. 

\subsection{The Luo and Schied multi-agent market impact model}

The \cite{luo_schied} model 
is an extension of the \cite{schied2018market} model where 
$J$ risk-averse traders want to trade the same asset.
The unaffected price process $S_t^0$ is always assumed to be a right continuous martingale in a suitable filtered probability space
$(\Omega,\F,(\F_t)_{t\geq0},\p)$ and it is also required that $S_{\cdot}^0$
is a square-integrable process. As before, 
let $\T=\{t_0,t_1,\ldots,t_N\}$ be the trading time grid.
 Consistently with the previous notation, we denote with 
 $\Xi=(\xi_{j,k})\in \R^{J\times (N+1)}$
 the matrix of all strategies, where
 $\xi_{j,k}$ is the order flow of agent $j$ at time 
 $t_k$, so that the affected price process is defined as
 \[
 S_t^{\Xi}:=S_t^0 -\sum_{t_k<t} G(t-t_k) 
 \cdot \sum_{j=1}^J \xi_{j,k},
  \]
  where $G$ is the decay kernel.
When comparing markets with a variable number of agents, differently from \cite{luo_schied}, we will assume that the function $G$ can depend on $J$ (\cite{Bagnoli,coimpact}, see below).  The generalization of admissible strategy is straightforward, 
  indeed if $X_{j}$ denotes the inventory of the $j$-th agent, 
  $\Xi$ is admissible for $\bm{X}\in \R^{J}$ and $\T$,
  if $\bm{\xi}_{j,\cdot}$ is admissible for $X_j$ and $\T$
  for each $j$ according to definition \ref{de_admi_strate_schied}, i.e.,
  it is adapted to the filtration, bounded and $\sum_{k=0}^N \xi_{j,k}=X_j$. The set of admissible strategy is denoted 
  as $\mathscr{X}(\bm{X},\T)$.
 Then, 
 if we consider all the possible time priorities among the 
 $J$ traders at each time step, i.e. 
 all the possible permutations that determine the time priority for each trading time $t_k$
 assumed
 to be equiprobable,
 it is possible to generalize the previous definition of liquidation cost for a trader strategy, see \cite{luo_schied} for further details. 
 We denote $\Xi_{-j,\cdot}$ the matrix $\Xi$ where the $j$-th row is eliminated.
\begin{de}[\cite{luo_schied}]\label{de_cost_luo}
 Given a time grid $\T$, the execution costs of a strategy 
 $\bm{\xi}_{j,\cdot}$ given all other strategies  $\bm{\xi}_{l,\cdot}$
 where $l\neq j$ is defined as
 \[
 C_{\T}(\bm{\xi}_{j,\cdot}|\Xi_{-j,\cdot})=
 \sum_{k=0}^{N} \bigg (
 \frac{G(0)}{2}\xi_{j,k}^2- S_{t_k}^{\Xi} \xi_{j,k}+
 \frac{G(0)}{2}\sum_{l\neq j}
 \xi_{j,k}\xi_{l,k}+\theta\ \xi_{j,k}^2
 \bigg),
 \]
 where $\theta\geq0$.
\end{de}

In the framework of \cite{schied2018market}
we have two risk-neutral agents which want to minimize the expected costs of a strategy, i.e. implementation shortfall orders. Now, following \cite{luo_schied}, we consider
the agents' risk aversion by introducing the mean-variance 
and expected utility functionals, respectively
\begin{align} \label{functional_luo_schied}
    MV_{\gamma}(\bm{\xi}_{j,\cdot}|\Xi_{-j,\cdot})&:=
    \E[C_{\T}(\bm{\xi}_{j,\cdot}|\Xi_{-j,\cdot})]+\frac{\gamma}{2}
    \mbox{Var}[C_{\T}(\bm{\xi}_{j,\cdot}|\Xi_{-j,\cdot})],\\
U_{\gamma}(\bm{\xi}_{j,\cdot}|\Xi_{-j,\cdot})&:=
    \E[u_{\gamma}(-C_{\T}(\bm{\xi}_{j,\cdot}|\Xi_{-j,\cdot}))],
\end{align}
where $\gamma$ is the risk-aversion parameter and
$u_{\gamma}(x)$ is the CARA utility function,
\[u_{\gamma}(x)=
\begin{cases}
    \frac{1}{\gamma} (1-e^{-\gamma x})& \quad \text{if } \gamma>0,\\
x& \quad \text{if } \gamma=0.
\end{cases}
\]
As usual, see e.g. \cite{almgren2001optimal}, the minimization of the mean-variance functional is 
restricted to deterministic admissible strategies, which is denoted as 
$\mathscr{X}_{\det}(\bm{X},\T)$.
All agents are assumed to have the same risk-aversion $\gamma\geq0$, 
see \cite{luo_schied} for further details.
Moreover, they introduced the corresponding Nash equilibrium for the 
previously defined functionals.

\begin{de}[from \cite{luo_schied}]
 Given the time grid $\T$ and initial inventories $\bm{X} \in \R^J$
 for $J$ traders with risk aversion parameter $\gamma\neq0$, then: 
\begin{itemize}[leftmargin=0.5in]
    \item  a \emph{Nash Equilibrium for mean-variance optimization} is a matrix of strategies $\Xi^*\in \mathscr{X}_{\det}(\bm{X},\T)$ such that each row $\bm{\xi}^*_{j,\cdot}$ minimizes the mean-variance functional \\ $MV_{\gamma}(\bm{\xi}_{j,\cdot}|\Xi^*_{-j,\cdot})$ over 
    $\bm{\xi}_{j,\cdot}\in\mathscr{X}_{\det}(X_j,\T)$;
    \item 
     a \emph{Nash Equilibrium for CARA expected utility maximization} is a matrix of strategies $\Xi^*\in \mathscr{X}(\bm{X},\T)$ such that each row $\bm{\xi}_{j,\cdot}^*$ maximizes the CARA expected utility functional $ U_{\gamma}(\bm{\xi}_{j,\cdot}|\Xi^*_{-j,\cdot})$ over 
    $\bm{\xi}_{j,\cdot}\in\mathscr{X}(X_j,\T)$.
\end{itemize}

\end{de}

In particular, \cite{luo_schied} showed that 
when the decay kernel is strictly positive definite and
for any $\T$, parameters $\theta,\gamma\geq0$ and initial inventories $\bm{X}\in\R^J$,
there exists a unique Nash equilibrium for the mean-variance optimization which is given by 
\begin{equation}\label{NE_multi_agent}
    \bm{\xi}_{j,\cdot}^* =\overline{X} \bm{v}+(X_j-\overline{X})\bm{w},
    \quad j=1,2,\ldots,J,
\end{equation}
    where $\overline{X}=\frac{1}{J}\sum_{j=1}^J X_j$
    and $\bm{v}$, $\bm{w}$ are the 
    fundamental solutions  defined as
\begin{align*}
   \bm{v}&=\frac{1}{\bm{e}^T [\Gamma^{\gamma,\theta}+(J-1)\widetilde{\Gamma}]^{-1}\bm{e}}
   [\Gamma^{\gamma,\theta}+(J-1)\widetilde{\Gamma}]^{-1}\bm{e}\\ 
     \bm{w}&=\frac{1}{\bm{e}^T [\Gamma^{\gamma,\theta}-\widetilde{\Gamma}]^{-1}\bm{e}}
     [\Gamma^{\gamma,\theta}-\widetilde{\Gamma}]^{-1}\bm{e},
\end{align*}
and, if $\varphi(t):= \mbox{Var}(S_t^0)$, for $t\geq0$, the matrix $\Gamma^{\gamma,\theta}$ is defined for $\theta,\gamma \geq0$ as 
$$\Gamma_{i,j}^{\gamma,\theta}:=(\Gamma_{\theta})_{i,j}+\gamma \varphi(t_{i-1} \wedge t_{j-1}), \quad  i,j=1,2,\ldots,N+1,$$ 
where $\Gamma_{\theta}$
is the previously defined kernel matrix.
     Moreover, if 
    $S_{t}^0=S_0+\sigma B_t$, for $t\geq0$, where $S_0,\sigma>0$ are constants 
    and $B_t$ is a standard Brownian motion, i.e., the unaffected price process is a Bachelier model, then \eqref{NE_multi_agent}
    is also a Nash equilibrium for CARA expected utility maximization
   and it is unique if we restrict all trader strategies 
     to be deterministic, see \cite{luo_schied} for further details.

\section{Multi-asset market impact games}
\label{se_MUlti_assets_case}

We now extend the previous framework allowing the $J$ agents
     to trade a portfolio of $M>1$ assets.
Indeed, agents often liquidate portfolio positions, which accounts in trading simultaneously many assets. In general, the optimal execution of a portfolio is different from many individual asset optimal executions, because of (i) correlation in asset prices, (ii) commonality in liquidity across assets (\cite{Chordia}), and (iii) cross-impact effects. In the following we will focus mainly on the third effect, even if disentangling them is a challenging statistical problem and we will discuss its relations with the 
correlation in asset prices which ensure the existence of Nash 
equilibrium.

To proceed, we first extend the notion of admissible strategy to the multi-asset case. 
     A strategy for $J$ traders during the trading time interval $\T$ 
     for $M$ assets is a multidimensional array  $\Xi=(\xi_{i,j,k})\in \R^{M\times J \times (N+1)}$, where 
     ${\xi}_{i,j,k}$ is the strategy for the $j$-th trader in the $i$-th asset at time step $k$.
Straightforwardly, given a fixed time grid $\numberset{T}$
and initial inventory $X\in\R^{M\times J}$, 
     where each column $j$ contains the inventories of trader $j$ for the $M$ assets, a strategy $\Xi$
     of random variables
     is admissible for $X$ if i)  for all time step  $k$,
     $\Xi_{\cdot,\cdot,k}$ is $\F_{t_k}$-measurable and bounded and ii) $\sum_{k=0}^N \bm{\xi}_{\cdot,j,k}=\bm{X}_j \in \R^M$ for each $j$, where 
     $\bm{X}_j$ is the $j$-th column of $X$. 
    %  Consistently with the previous notation, for each trading time $\bm{\xi}_k$ and $\bm{\eta}_k$ are $M$-vectors
    % containing the order flows for the $M$-assets of the two agents $X$ and $Y$, i.e. $\eta_{k,i}$ is the order flow of the first agent when trading asset $i$ ar time $t_k$.  Finally, $\bm{X}_0,\bm{Y}_0 \in \R^M$ are the initial inventories of the two agents. 

The second important point is that the trading of one asset modifies also the price of the other asset(s). This effect is termed \emph{cross-impact}. While \emph{self-impact}
may be attributed to a mechanical and induced consequence of the order book, the 
cross-impact may be understood as an effect related to 
mispricing in correlated assets 
which are exploited by arbitrageurs betting on a 
reversion to normality, see \cite{almgren2001optimal} and \cite{schneider2018cross} 
for further details. Cross-impact has been empirically studied recently, see e.g. \cite{mastromatteo2017trading,schneider2018cross} and its role in optimal execution has been highlighted in \cite{Gerry}.

Mathematically cross-impact is modeled by introducing a function  $\mathcal{Q}^{(J,M)}: \R_+ \times \R^M \to \R^M$
 describing how the trading of the $M$ assets affect their prices at a certain future time. Note that in general the cross-impact function might depend on the number of assets $M$ and on the number of agents $J$. Later we will discuss more in detail how this dependence affects market stability. \cite{schneider2018cross} have discussed necessary
    conditions for the absence of price manipulation for multi-asset 
    transient impact models. They have shown that the cross-impact function need to be symmetric and linear in order to avoid arbitrage and manipulations.
    Moreover, following example 3.1 of \cite{alfonsi2016multivariate} and
    as empirically observed by \cite{mastromatteo2017trading}, we assume the same 
    temporal dependence of $G$ among the assets.
    Then, we assume that  $\mathcal{Q}^{(J,M)}= Q \cdot G(t)$ where $Q$ is linear and symmetric, i.e., $Q\in \R^{M\times M}$ and $Q=Q^T$ and $G:\R_+\to \R_+$. Clearly the dependence from $J$ and $M$ can be in $Q$ and/or in $G(t)$.
    We also assume that $Q$  is a nonsingular matrix. Therefore, the 
   price process during order execution is defined as
 \[
        \bm{S}_t^{\Xi}=\bm{S}_t^0-
    \sum_{t_k<t} 
    G(t-t_k)\cdot Q\cdot  \sum_{j=1}^J \bm{\xi}_{\cdot,j,k} %\in \R^M,
    \]
    where we refer to $Q\in \R^{M\times M}$ as the cross-impact matrix, $\bm{S}_t^0\in \R^M$ is 
    the unaffected price process which is 
     assumed to be a right-continuous martingale defined on a suitable 
     filtered probability space and it is a square-integrable process.
 
 If for each asset the time priority among the traders 
 is determined by considering all the possible 
 permutations of agents for each trading time $t_k$,
 then, following the same motivation of \cite{schied2018market} and \cite{luo_schied}, the
  definition  \ref{de_cost_luo} of
  liquidation cost 
  is generalized as follows:
 
   \begin{de}[Execution Cost]
 Given a time grid $\T$ and $\theta\geq0$, the execution cost of a strategy 
 $\Xi_{\cdot,j,\cdot}$ given all other strategies  $\Xi_{\cdot,l,\cdot}$
 where $l\neq j$ is defined as
 \[
 \begin{split}
 C_{\T}(\Xi_{\cdot,j,\cdot}|\Xi_{\cdot,-j,\cdot})&= \sum_{k=0}^{N} \bigg (
 \frac{G(0)}{2}\<Q\bm{\xi}_{\cdot,j,k},\bm{\xi}_{\cdot,j,k}\rangle- 
 \<\bm{S}_{t_k}^{\Xi}, \bm{\xi}_{\cdot,j,k}\rangle+\\&+
 \frac{G(0)}{2}\sum_{l\neq j}
 \<Q\bm{\xi}_{\cdot,l,k},\bm{\xi}_{\cdot,j,k}\rangle+\theta\ \< \bm{\xi}_{\cdot,j,k},\bm{\xi}_{\cdot,j,k}    \rangle.
 \bigg).
 \end{split}
 \]
\end{de}
 The previous definition is motivated by the following 
 argument. When only agent $j$ trades, the prices are moved from $\bm{S}_{t_k}^{\Xi}$ to 
    $\bm{S}_{t_k +}^{\Xi}=\bm{S}_{t_k}^{\Xi} -G(0) Q \bm{\xi}_{\cdot,j,k}$.
    However, the order is executed at the average price and
    the player incurs in the 
    expenses
    \[
     -\frac{1}{2} \langle(\bm{S}_{t_k}^{\Xi}+\bm{S}_{t_k +}^{\Xi}), \bm{\xi}_{\cdot,j,k} \rangle=
     \frac{G(0)}{2} \langle Q \bm{\xi}_{\cdot,j,k},\bm{\xi}_{\cdot,j,k}\rangle- \langle \bm{S}_{t_k }^{\Xi} ,\bm{\xi}_{\cdot,j,k}\rangle.
    \]
    Then, suppose that immediately after $j$ the agent $l$
    place an order and the prices are moved 
    linearly from $\bm{S}_{t_k +}^{\Xi}$ to 
    $\bm{S}_{t_k +}^{\Xi} -G(0) Q \bm{\xi}_{\cdot,l,k}$, so the cost for $l$ is 
    given by:
    \[
     -\frac{1}{2} \langle(\bm{S}_{t_k+}^{\Xi}+\bm{S}_{t_k +}^{\Xi})-G(0)Q\bm{\xi}_{\cdot,l,k}, 
     \bm{\xi}_{\cdot,l,k} \rangle=
     \frac{G(0)}{2} \langle Q\bm{\xi}_{\cdot,l,k},\bm{\xi}_{\cdot,l,k}\rangle- \langle \bm{S}_{t_k }^{\Xi} ,\bm{\xi}_{\cdot,l,k}\rangle
     +G(0) \langle Q\bm{\xi}_{\cdot,j,k} , \bm{\xi}_{\cdot,l,k}\rangle.
    \]

    The term $G(0) \langle Q\bm{\xi}_{\cdot,j,k} , \bm{\xi}_{\cdot,l,k}\rangle$ is the additional cost due to the latency, where on average for each asset 
    half of the times the order of agent $j$ will be executed before the one of agent $l$, so that the latency costs for agent $j$ at time step $k$ is given by 
    $ \frac{G(0)}{2}\sum_{l\neq j}
 \<Q\bm{\xi}_{\cdot,l,k},\bm{\xi}_{\cdot,j,k}\rangle$, 
 see \cite{luo_schied} for further details.
%  We derive a compact formula for the expected costs which will be used for deriving the equation determining the Nash equilibrium. Specifically, we prove the following lemma:
% \begin{lem}\label{lem_cost_multi_asset}
% The expected cost of an admissible strategy 
% $\Xi\in\mathscr{X}(\bm{X}_0,\T)$ given another 
% admissible strategy $H\in\mathscr{X}(\bm{Y}_0,\T)$ 
% is given
%   \[
%   \E[C_{\numberset{T}}(\Xi|H)]=
%   \E[\tr(\Xi \ \tgamma (Q{\Xi})^T)+\tr({\Xi}\ \tgamma (Q{H})^T)
%   +\theta \tr({\Xi}^T\ {\Xi})]
%   \]

%  \end{lem}
The mean-variance and CARA expected utility functionals are straightforwardly
generalized using the previous defined execution cost.  Indeed,
\begin{align}
    MV_{\gamma}(\Xi_{\cdot,j,\cdot}|\Xi_{\cdot,-j,\cdot})&:=
    \E[C_{\T}(\Xi_{\cdot,j,\cdot}|\Xi_{\cdot,-j,\cdot})]+\frac{\gamma}{2}
    \mbox{Var}[C_{\T}(\Xi_{\cdot,j,\cdot}|\Xi_{\cdot,-j,\cdot})],\\
U_{\gamma}(\Xi_{\cdot,j,\cdot}|\Xi_{\cdot,-j,\cdot})&:=
    \E[u_{\gamma}(-C_{\T}(\Xi_{\cdot,j,\cdot}|\Xi_{\cdot,-j,\cdot}))].
\end{align}
Therefore, we may define
the related Nash equilibria definitions:

\begin{de}
 Given the time grid $\T$ and initial inventories $X\in \R^{M\times J}$
 for $M$ assets and $J$ traders with risk aversion parameter $\gamma\geq0$, then: 
\begin{itemize}[leftmargin=0.25in]
    \item  a \emph{Nash Equilibrium for mean-variance optimization} is a multidimensional array of strategies $\Xi^*\in \mathscr{X}_{\det}(X,\T)$ such that  $\Xi_{\cdot,j,\cdot}^*$ minimizes the mean-variance functional $MV_{\gamma}(\Xi_{\cdot,j,\cdot}|\Xi^*_{\cdot,-j,\cdot})$ over 
    $\Xi_{\cdot,j,\cdot}\in\mathscr{X}_{\det}(\bm{X}_j,\T)$;
    \item 
     a \emph{Nash Equilibrium for CARA expected utility maximization} is a  multidimensional array of strategies $\Xi^*\in \mathscr{X}(X,\T)$ such that each $\Xi_{\cdot,j,\cdot}^*$ maximizes the CARA expected utility functional $ U_{\gamma}( \Xi_{\cdot,j,\cdot}|\Xi^*_{\cdot,-j,\cdot})$ over 
    $\Xi_{\cdot,j,\cdot}\in\mathscr{X}(\bm{X}_j,\T)$.
\end{itemize}

\end{de}

We recall that $\bm{S}_t^0$ follows a Bachelier model 
  if $\bm{S}_{t}^0=\bm{S}_0+ L\bm{B}_t$ where $\bm{S}_0$ is a fixed vector and $\bm{B}_t$ is a multivariate (standard) Brownian motion, where its components are independent with unit variance so that the variance-covariance matrix of $\bm{S}_t^0$ is given by $\Sigma=LL^T$.

    \subsection{Nash equilibrium for the linear cross impact model}\label{solution_linear_cross}
      We now prove the existence and uniqueness of the Nash equilibrium in this multi-asset setting. 
      This is achieved by using the spectral decomposition of $Q$ to
 orthogonalize the assets, which we call ``virtual'' assets, so that 
 the impact of the orthogonalized strategies on the virtual assets 
 is fully characterized by the self-impact, i.e., the transformed cross impact matrix is diagonal.
Thus, the existence and uniqueness of the Nash equilibrium derives immediately by following the same argument as in \cite{schied2018market} and \cite{luo_schied}.
 All the proofs are given in Appendix \ref{app_1}. 
    
    \begin{os}\label{multi_orto}
   If we suppose that $Q$ is the identity matrix, then the multi-asset
    market impact game is 
    a straightforward generalization of the \cite{luo_schied} model. Indeed, each order of the players for the $i$-th stock does not affect 
    any other asset. 
    % So, the Nash-equilibrium can be 
    % easily found by solving $M$ \cite{schied2018market} models.
    \end{os}
     
       In general, if we assume that $\bm{S}_t^0$ has uncorrelated components, i.e., 
the variance-covariance matrix $\Sigma$ is diagonal, then
the following result holds.
    
\begin{lem}[Nash Equilibrium for Diagonal Cross-Impact Matrix]\label{lem_NE_diagonal_multi_agent}
If $\bm{S}_t^0$ has uncorrelated components,
for any strictly positive definite decay kernel $G$, time 
grid $\T$, parameters $\theta,\gamma\geq0$, initial
inventory $X\in\R^{M\times J}$ and 
diagonal positive cross impact 
matrix $D=\mbox{diag}(\lambda_1,\lambda_2,\ldots,
\lambda_M)$, there exists a unique 
Nash Equilibrium $\Xi^*\in 
  \mathscr{X}_{\det}(X,\T)$ for the mean-variance optimization problem and it is given by 
  \begin{equation}\label{eq_NE_multi_agent_asset_diagonal}
       \bm{\xi}_{i,j,\cdot}^{*} =\overline{X }_{i,\cdot} \bm{v}_i+(X_{i,j}-\overline{X}_{i,\cdot} )\bm{w}_i, 
      \quad j=1,2,\ldots,J, \quad i=1,2,\ldots,M,
  \end{equation}
    where 
    $\overline{X}_{i,\cdot}=\frac{1}{J}
    \sum_{j=1}^J X_{i,j}$, 
    $\bm{v}_i$ and $\bm{w}_i$ are 
  the fundamental solutions associated with the
  decay kernel $G_i(t)=G(t)\cdot \lambda_i$
  and same parameter $\theta$. Moreover, if
  $\bm{S}^0_t$ follows a Bachelier model,  then \eqref{eq_NE_multi_agent_asset_diagonal}
    is also a Nash equilibrium for CARA expected utility maximization.
  
\end{lem}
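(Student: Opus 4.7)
My plan is to exploit the block-diagonal structure induced by the assumptions on $D$ and $\Sigma$ in order to reduce the $M$-asset, $J$-agent problem to $M$ independent single-asset, $J$-agent problems of the type solved in \cite{luo_schied}, one per asset, with decay kernel $G_i(t):=\lambda_i G(t)$ and the same transaction-cost parameter $\theta$. Existence and uniqueness of the multi-asset Nash equilibrium, together with the explicit formula, then drop out of the corresponding single-asset result applied asset by asset.

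The first step is to verify that the execution cost splits additively across assets. Since $\<D\bm{\xi},\bm{\eta}\ra=\sum_i \lambda_i\,\xi_i\eta_i$ and $\<\bm{\xi},\bm{\xi}\ra=\sum_i \xi_i^2$, and since the affected price process decouples componentwise as
\[
S_{t_k,i}^{\Xi}=S_{t_k,i}^0-\sum_{t_l<t_k}\lambda_i\,G(t_k-t_l)\sum_{j=1}^J \xi_{i,j,l},
\]
substituting into the definition of the multi-asset execution cost gives
\[
C_{\T}(\Xi_{\cdot,j,\cdot}\,|\,\Xi_{\cdot,-j,\cdot})=\sum_{i=1}^M C_{\T}^{(i)}(\bm{\xi}_{i,j,\cdot}\,|\,\Xi_{i,-j,\cdot}),
\]
where $C_{\T}^{(i)}$ is precisely the Luo--Schied single-asset cost for asset $i$ with decay kernel $\lambda_i G$ and temporary cost $\theta$. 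Because $\Sigma$ is diagonal, the random term $\<\bm{S}_{t_k}^0,\bm{\xi}_{\cdot,j,k}\ra=\sum_i S_{t_k,i}^0\,\xi_{i,j,k}$ is a sum of pairwise uncorrelated asset-specific contributions whenever the strategies are deterministic, so $\mbox{Var}[C_{\T}]=\sum_i \mbox{Var}[C_{\T}^{(i)}]$ and the mean-variance functional also splits as $MV_\gamma(\Xi_{\cdot,j,\cdot}|\Xi_{\cdot,-j,\cdot})=\sum_{i=1}^M MV_\gamma^{(i)}(\bm{\xi}_{i,j,\cdot}|\Xi_{i,-j,\cdot})$.

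Since the admissibility constraint $\sum_{k=0}^N \xi_{i,j,k}=X_{i,j}$ also splits asset by asset, agent $j$'s best-response problem decomposes into $M$ uncoupled single-asset best responses. A multi-asset Nash equilibrium therefore exists and is unique if and only if a single-asset Nash equilibrium exists and is unique for each $i$; since $\lambda_i>0$ the kernel $\lambda_i G$ is strictly positive definite, and the existence-uniqueness theorem of \cite{luo_schied} delivers the explicit formula \eqref{eq_NE_multi_agent_asset_diagonal}. For the CARA statement, under a Bachelier model with diagonal $\Sigma$ the single-asset costs $C_{\T}^{(i)}$ are \emph{independent} Gaussian functionals along deterministic strategies, so $\E[e^{\gamma C_{\T}}]=\prod_{i=1}^M \E[e^{\gamma C_{\T}^{(i)}}]$ and agent $j$'s CARA optimization again splits over $i$; applying the \cite{luo_schied} single-asset CARA result to each virtual asset yields the same formula. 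The main obstacle is essentially bookkeeping: one must check that multi-asset admissibility of $\Xi$ is equivalent to asset-wise admissibility of each $\bm{\xi}_{i,j,\cdot}$, and that the uncorrelatedness (resp. independence) hypothesis on $\bm{S}^0$ is indeed strong enough to justify the variance (resp. MGF) factorization for deterministic strategies; once this is in place the conclusion reduces to a direct invocation of the single-asset theorem on each of the $M$ virtual assets.
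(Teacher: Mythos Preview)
Your mean-variance argument is essentially identical to the paper's: diagonal $D$ makes the execution cost split as $C_{\T}=\sum_i C_{\T}^{(i)}$, uncorrelated components of $\bm{S}^0$ make the variance split (the paper spells this out via $\mathrm{Cov}(Y_i,Y_l)=\sum_{h,k}\xi_{i,j,k}\xi_{l,j,h}\,\mathrm{Cov}(S^0_{t_k\wedge t_h,i},S^0_{t_k\wedge t_h,l})=0$ using the martingale property, which is exactly your ``bookkeeping''), and then Luo--Schied applies asset by asset with kernel $\lambda_i G$.

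The CARA part takes a different route from the paper, and there is a small gap. You factor the moment generating function via independence of the $C_{\T}^{(i)}$ and then invoke the single-asset Luo--Schied CARA result on each $i$. This factorisation is valid for \emph{deterministic} $\Xi_{\cdot,j,\cdot}$, so it gives you the CARA best response over deterministic multi-asset strategies. But the CARA Nash equilibrium is defined over $\mathscr{X}(\bm X_j,\T)$, i.e.\ over \emph{adapted} strategies, and an adapted strategy in asset $i$ may condition on the price paths of the other assets; for such strategies the $C_{\T}^{(i)}$ need not be independent and the product decomposition of $\E[e^{\gamma C_{\T}}]$ breaks down. Knowing that each $\bm\xi^*_{i,j,\cdot}$ is optimal over adapted strategies \emph{in the single-asset filtration} does not by itself yield optimality over adapted strategies in the full multi-asset filtration.

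The paper avoids this by not factorising at all: it observes that for deterministic $\Xi_{\cdot,j,\cdot}$ the total cost $C_{\T}(\Xi_{\cdot,j,\cdot}|\Xi_{\cdot,-j,\cdot})$ is Gaussian under Bachelier, hence $U_\gamma=u_\gamma(-MV_\gamma)$ on deterministic strategies, so the MV equilibrium is already the CARA optimum among deterministic strategies; it then lifts this to adapted strategies by applying Theorem~2.1 of \cite{schied2010optimal} (deterministic maximisers of CARA remain maximisers over adapted strategies) directly at the multi-asset level. You can close your gap either by inserting exactly this step, or by arguing separately that under independent Bachelier components the multi-asset adapted CARA optimum must be deterministic.
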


    \begin{os}\label{os_risk_neutral_agents}
We observe that for risk-neutral agents, i.e., $\gamma=0$, the assumptions of uncorrelated assets is no more necessary
to prove Lemma \ref{lem_NE_diagonal_multi_agent}. 
Indeed, the mean-variance functional is restricted only to
the expected cost and for linearity $MV_0(\Xi_{\cdot,j,\cdot}|\Xi_{\cdot,-j,\cdot})=\sum_{i=1}^M 
MV_0(\bm{\xi}_{i,j,\cdot}|\Xi_{i,-j,\cdot}; G_i)$, 
where $MV_0(\bm{\xi}_{i,j,\cdot}|\Xi_{i,-j,\cdot}; G_i)=
\E[C_{\T}(\bm{\xi}_{i,j,\cdot}|\Xi_{i,-j,\cdot}; G_i)]$ is the expected cost of Definition \ref{de_cost_luo}
where the decay kernel is multiplied by $\lambda_i$,
and we have the same conclusion of Lemma \ref{lem_NE_diagonal_multi_agent} regardless the covariance matrix of $\bm{S}_t^0$.
\end{os} 
    
    We first introduce some notation and then we state the main
    results. We say that assets are orthogonal if the 
    corresponding cross-impact matrix is diagonal.
    Let us consider the spectral decomposition of $Q$, 
    i.e., $QV=VD$, where $V$ and $D$ are the orthogonal and diagonal matrices containing the eigenvectors and eigenvalues, respectively.
    Since we assume that $Q$ is a non singular symmetric matrix, then 
    $D$ is diagonal with all elements different from zero. 
     We define the prices of
  the virtual assets as $\bm{P}_t:= V^T \bm{S}_t^{\Xi}$ and we observe 
  that 
  \begin{equation}\label{eq_virtual_assets}
  \begin{split}
     \bm{P}_t &=\bm{P}_t^0-\sum_{t_k<t} G(t-t_k) \cdot D \cdot V^T \cdot 
     \bigg(\sum_{j=1}^J \bm{\xi}_{\cdot,j,k}\bigg)
     \\
     &=\bm{P}_t^0-\sum_{t_k<t} G(t-t_k) \cdot D \cdot
     \bigg(\sum_{j=1}^J \bm{\xi}^P_{\cdot,j,k}\bigg),
  \end{split}
  \end{equation}
  where $\bm{P}_t^0:=V^T \bm{S}_t^0$ and 
  $\bm{\xi}_{\cdot,j,k}^P:=V^T \bm{\xi}_{\cdot,j,k}$. This last quantity is 
  the strategy of trader $j$ at time step $k$ in the virtual assets, 
  which is admissible for inventory $\bm{X}_j^P=V^T \bm{X}_j$, i.e, 
  $\sum_{k=0}^N \bm{\xi}_{\cdot,j,k}^P=\sum_{k=0}^N
  V^T \bm{\xi}_{\cdot,j,k}=V^T \bm{X}_j$.
The virtual assets are mutually orthogonal by construction and their 
corresponding (virtual) decay kernels $G_i(t)$ are obtained as the product of the original
decay kernel $G(t)$ and the corresponding eigenvalues $\lambda_i$ 
of the cross impact matrix, i.e., the decay kernel associated with the 
$i$-th virtual asset is $G_i(t):= G(t)\cdot \lambda_i$.
Indeed, from Equation \eqref{eq_virtual_assets} the decay kernel $G(t)$
is multiplied by the eigenvalues of the cross impact matrix for each trading time $t_k$,
  \[
   G(t-t_k) \cdot D=\begin{bmatrix}
                     G(t-t_k) \lambda_1 \\
                      & G(t-t_k) \lambda_2\\
                     & & \ddots \\
                     & & & G(t-t_k) \lambda_M
                    \end{bmatrix}.
  \]
Then, as observed in Remark \ref{multi_orto}, the multi-asset market impact game where each asset is orthogonal to others
is equivalent to $M$ one-asset market impact games, i.e., 
\cite{luo_schied} models. 
The (virtual) decay kernels $G_i(t)$ satisfy the assumptions 
of strictly positive definite kernels as far as $\lambda_i>0 \ \forall i=1,2,\ldots,M$, i.e., 
$Q$ is positive definite
(see also \cite{alfonsi2016multivariate}).
If $\mbox{Cov}(\bm{S}_t^0)=\Sigma$, then $\mbox{Cov}(\bm{P}_t^0)=V^T \Sigma V$. So, if $Q$ and $\Sigma$ are simultaneously diagonalizable
  then $\mbox{Cov}(\bm{P}_t^0)$ is diagonal, i.e., the 
  components of $\bm{P}_{\cdot}^0$ are uncorrelated and 
  by Lemma \ref{lem_NE_diagonal_multi_agent}
we obtain the associated Nash equilibria $\Xi^{*,P}$,
  whose components are defined as
  \begin{equation}
  \label{eq_NASH_multi_asset_agent_virtual}
      \bm{\xi}_{i,j,\cdot}^{*,P} =\overline{X }_{i,\cdot}^P \bm{v}_i+(X_{i,j}^P-\overline{X}_{i,\cdot}^P)\bm{w}_i, 
      \quad j=1,2,\ldots,J, \quad i=1,2,\ldots,M,
  \end{equation}
  where $\overline{X}_{i,\cdot}^P=\frac{1}{J}\sum_{j=1}^J X^P_{i,j}$
  is the average inventory on the $i$-th virtual asset among the traders
   and
  $\bm{v}_i$ and $\bm{w}_i$ are the previously defined fundamental 
  solutions of \cite{luo_schied} for the $i$-th virtual asset $P_{\cdot,i}$. For them, the decay kernel is given by 
  $G_i(t)=G(t) \cdot \lambda_i$ and the corresponding $\varphi_i(t)$
  is given by $\mbox{Var}(P_{t,i}^0)$.
   Since, $Q$ and $\Sigma$ are both symmetric, so diagonalizable, $Q$ and $\Sigma$ are simultaneously diagonalizable if and only if 
 $Q$ and $\Sigma$ commute. Therefore, we consider the following assumption.
\begin{as}\label{ass_cross_impact}
The cross-impact matrix, $Q$, and the covariance matrix of the unaffected 
price process $\bm{S}_t^0$, $\Sigma$, commute, i.e., $Q\Sigma=\Sigma Q.$
\end{as}

This assumption is frequently made in the literature and approximately valid in real data, e.g., \cite{mastromatteo2017trading} makes this assumption
on the correlation matrix.
% ("we make the assumption that the impact matrix has the same set of eigenvectors as those of the correlation matrix") 
% stating that the eigenvectors of the correlation matrix "correspond to portfolios with uncorrelated returns. Our assumption means, quite naturally, that trading one of these portfolios will only impact (to first approximation) the returns of that portfolio, but not of any other orthogonal one." 
The empirical observation that the matrix $Q$ has a large eigenvalue with a corresponding eigenvector with almost constant components (as the market factor) and a block structure with blocks corresponding to economic sectors (as in the correlation matrix) indicates that the eigenvectors of $Q$ and $\Sigma$ are the same, i.e. that $Q$ and $\Sigma$ (approximately) commute. Notice also that \cite{garleanu2013dynamic} propose a model of optimal portfolio execution where the quadratic transaction cost is characterized by a matrix which is proportional to $\Sigma$.

We enunciate the following 
theorem of existence and uniqueness of Nash equilibrium which extends Theorem 2.4 of 
  \cite{luo_schied}.
  
  \begin{te}[Nash Equilibrium for Multi-Asset and Multi-Agent Market Impact Games]\label{te_NE_multi_asset_multi_agent}
For any strictly positive definite decay kernel $G$, time 
grid $\T$, parameter $\theta,\gamma\geq0$, initial
inventory $X\in\R^{M\times J}$ and 
symmetric positive definite cross impact 
matrix $Q$ such that Assumption \ref{ass_cross_impact} holds, 
there exists a unique 
Nash Equilibrium $\Xi^*\in 
  \mathscr{X}_{\det}(X,\T)$ for the mean-variance optimization problem and it is given by 
  \begin{equation}\label{eq_NASH_multi_asset_agent}
      \Xi_{\cdot,j,\cdot}^{*}= V\Xi_{\cdot,j,\cdot}^{*,P},
      \quad j=1,2,\ldots,J
  \end{equation}
  where $V$ is the matrix of eigenvectors of $Q$ 
  and $\Xi^{*,P}\in 
  \mathscr{X}_{\det}(X^P,\T)$ is the Nash 
  Equilibrium \eqref{eq_NASH_multi_asset_agent_virtual} of the corresponding
  orthogonalized virtual asset market impact game where 
  $X^P=V^T X$.
  Moreover, if
  $\bm{S}^0$ follows a Bachelier model  then \eqref{eq_NASH_multi_asset_agent}
    is also a Nash equilibrium for CARA expected utility maximization.
\end{te}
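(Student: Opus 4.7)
The plan is to reduce the general problem to the diagonal cross-impact setting already handled by Lemma 2.2, via the spectral decomposition of $Q$.

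First, I would write $Q = VDV^T$ with $V$ orthogonal and $D = \mbox{diag}(\lambda_1,\ldots,\lambda_M)$ with $\lambda_i>0$ (since $Q$ is symmetric positive definite), and introduce the change of variables $\bxi^P_{\cdot,j,k} := V^T \bxi_{\cdot,j,k}$, $\bm{P}^0_t := V^T \bm{S}^0_t$, $\bm{P}_t := V^T \bm{S}^\Xi_t$, and inventories $X^P := V^T X$. Since $V$ is orthogonal, the map $\Xi \mapsto \Xi^P$ is a linear bijection on $\mathscr{X}_{\det}(X,\T)\leftrightarrow\mathscr{X}_{\det}(X^P,\T)$ (preserving measurability, boundedness, and inventory constraints). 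From the price equation rewritten in virtual coordinates (already derived in the text via $G(t-t_k)\cdot D$), the virtual price dynamics is exactly the multi-asset model with diagonal cross-impact $D$, i.e.\ $M$ decoupled Luo--Schied kernels $G_i(t)=\lambda_i G(t)$.

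Next, I would verify that each term of the cost functional is invariant under this orthogonal change of variables. Using $V V^T=I$, we have $\langle Q\bxi,\bxi\rangle = \langle D\bxi^P,\bxi^P\rangle$, $\langle \bm{S}^\Xi_{t_k},\bxi\rangle = \langle \bm{P}_{t_k},\bxi^P\rangle$, the latency cross-term transforms identically, and the quadratic temporary cost $\theta\langle\bxi,\bxi\rangle = \theta\langle\bxi^P,\bxi^P\rangle$. Hence
\[
C_\T(\Xi_{\cdot,j,\cdot}\,|\,\Xi_{\cdot,-j,\cdot}) \;=\; C_\T^{(D)}(\Xi^P_{\cdot,j,\cdot}\,|\,\Xi^P_{\cdot,-j,\cdot}),
\]
where $C_\T^{(D)}$ is the execution cost of the virtual game. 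Consequently $MV_\gamma$ and $U_\gamma$ coincide in the two coordinate systems, so a strategy is a best response in the original game if and only if its $V^T$-image is a best response in the virtual game, and Nash equilibria are in bijection.

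The key step is then to certify that Lemma 2.2 applies to the virtual game, which requires that the virtual unaffected prices $\bm{P}^0_t = V^T \bm{S}^0_t$ have uncorrelated components. This is where Assumption 2.3 enters: since $Q$ and $\Sigma$ are symmetric and commute, they are simultaneously diagonalizable by $V$, so $\mathrm{Cov}(\bm{P}^0_t) = V^T \Sigma V$ is diagonal. Applying Lemma 2.2 to the diagonal virtual game yields a unique mean-variance Nash equilibrium $\Xi^{*,P}$ given component-wise by \eqref{eq_NASH_multi_asset_agent_virtual}. Pulling back via $\Xi^*_{\cdot,j,\cdot} = V \Xi^{*,P}_{\cdot,j,\cdot}$ gives the desired unique equilibrium of the original game, which is \eqref{eq_NASH_multi_asset_agent}.

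For the CARA statement, if $\bm{S}^0$ is a Bachelier model with covariance $\Sigma$, then $\bm{P}^0 = V^T\bm{S}^0$ is again a Bachelier model (Brownian motion with covariance $V^T\Sigma V$, which is diagonal under Assumption 2.3), so Lemma 2.2 gives that the same $\Xi^{*,P}$ is a Nash equilibrium for CARA expected utility maximization in the virtual game; invariance of $U_\gamma$ under the orthogonal pullback then transfers this to the original game. The main obstacle, and the reason Assumption 2.3 cannot be dropped, is precisely the variance term: without commutativity, $V^T \Sigma V$ would be non-diagonal and the virtual problem would not decouple into the $M$ independent Luo--Schied problems handled by Lemma 2.2; Remark 2.3 shows this obstruction disappears only in the risk-neutral case $\gamma=0$.
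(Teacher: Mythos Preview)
Your proposal is correct and follows essentially the same route as the paper: spectral decomposition $Q=VDV^T$, verification that the execution cost is invariant under the orthogonal change of variables (so that $C_\T(\Xi_{\cdot,j,\cdot}|\Xi_{\cdot,-j,\cdot})=C_\T^{(D)}(\Xi^P_{\cdot,j,\cdot}|\Xi^P_{\cdot,-j,\cdot})$), use of the commutativity assumption to diagonalize the covariance of the virtual unaffected prices, and then an appeal to the diagonal lemma followed by the pullback via $V$. The paper's proof differs only cosmetically, writing out the four cost terms explicitly rather than your slightly more abstract phrasing, and handles the CARA part identically by noting that the Bachelier structure is preserved under $V^T$.
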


  However, we observe that for risk-neutral agents, i.e., $\gamma=0$,
  Assumption \ref{ass_cross_impact} is unnecessary. 
  We remark this result in the following Corollary.
   \begin{co}\label{cor_riks_neutrals}
   If the agents are risk-neutral, i.e., $\gamma=0$, then
for any strictly positive definite decay kernel $G$, time 
grid $\T$, parameter $\theta\geq0$, initial
inventories $X\in\R^{M\times J}$ and 
symmetric positive definite cross impact 
matrix $Q$, 
there exists a unique 
Nash Equilibrium $\Xi^*\in 
  \mathscr{X}_{\det}(X,\T)$ for the mean-variance optimization problem and it is given by 
  \begin{equation}\label{eq_NASH_multi_asset_agent_cor}
      \Xi_{\cdot,j,\cdot}^{*}= V\Xi_{\cdot,j,\cdot}^{*,P},
      \quad j=1,2,\ldots,J
  \end{equation}
  where $V$ is the matrix of eigenvectors of $Q$ 
  and $\Xi^{*,P}\in 
  \mathscr{X}_{\det}(X^P,\T)$ is the Nash 
  Equilibrium associated to the corresponding
  orthogonalized virtual asset market impact game where 
  $X^P=V^T X$ .
  Moreover, if
  $\bm{S}^0_t$ follows a Bachelier model  then \eqref{eq_NASH_multi_asset_agent_cor}
    is also a Nash equilibrium over the set $ \mathscr{X}(X,\T)$.
   \end{co}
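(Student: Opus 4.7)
The plan is to mirror the proof of Theorem \ref{te_NE_multi_asset_multi_agent} and observe that Assumption \ref{ass_cross_impact} was only invoked to diagonalize the covariance matrix of the unaffected price process after rotating into virtual assets, whereas for $\gamma=0$ the covariance structure plays no role at all. Consequently, once we pass to the virtual-asset formulation, Lemma \ref{lem_NE_diagonal_multi_agent} can be applied via Remark \ref{os_risk_neutral_agents}, without any hypothesis on $\Sigma$.

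First, I would spectrally decompose the symmetric positive definite cross-impact matrix as $Q = V D V^T$, with $V$ orthogonal and $D=\mbox{diag}(\lambda_1,\ldots,\lambda_M)$ having strictly positive entries, and pass to virtual prices, strategies and inventories $\bm{P}_t := V^T \bm{S}_t^{\Xi}$, $\bm{\xi}^P_{\cdot,j,k} := V^T \bm{\xi}_{\cdot,j,k}$, $\bm{X}_j^P := V^T \bm{X}_j$. Using orthogonality of $V$, one checks term by term that the execution cost is invariant under this change of coordinates:
\begin{equation*}
C_{\T}(\Xi_{\cdot,j,\cdot}|\Xi_{\cdot,-j,\cdot}) \;=\; C^P_{\T}(\Xi^P_{\cdot,j,\cdot}|\Xi^P_{\cdot,-j,\cdot}),
\end{equation*}
where $C^P_{\T}$ denotes the execution cost of the virtual-asset game with diagonal cross-impact matrix $D$ and unaffected price $\bm{P}_t^0 = V^T\bm{S}_t^0$. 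Indeed $\<Q\bm{\xi},\bm{\xi}\ra = \<D\bm{\xi}^P,\bm{\xi}^P\ra$, $\<\bm{S}_{t_k}^{\Xi},\bm{\xi}\ra = \<\bm{P}_{t_k},\bm{\xi}^P\ra$, the cross term $\<Q\bm{\xi}_{\cdot,l,k},\bm{\xi}_{\cdot,j,k}\ra$ becomes $\<D\bm{\xi}^{P}_{\cdot,l,k},\bm{\xi}^{P}_{\cdot,j,k}\ra$, and the temporary impact $\theta\<\bm{\xi},\bm{\xi}\ra$ is invariant because $V^TV=I$. Moreover the correspondence $\bm{\xi} \leftrightarrow V^T\bm{\xi}$ is a bijection between $\mathscr{X}_{\det}(X,\T)$ and $\mathscr{X}_{\det}(X^P,\T)$ (and similarly for $\mathscr{X}$), preserving the Nash equilibrium structure.

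Second, since $\gamma=0$ the mean-variance functional reduces to the expected cost, and by Remark \ref{os_risk_neutral_agents} the virtual-asset game with diagonal cross-impact $D$ decouples additively into $M$ single-asset Luo--Schied games with decay kernels $G_i(t)=\lambda_i G(t)$, \emph{irrespective} of the covariance $V^T\Sigma V$ of $\bm{P}_t^0$. Lemma \ref{lem_NE_diagonal_multi_agent} then yields a unique Nash equilibrium $\Xi^{*,P}\in\mathscr{X}_{\det}(X^P,\T)$ with components given by \eqref{eq_NASH_multi_asset_agent_virtual}. Transforming back via $\Xi^*_{\cdot,j,\cdot}=V\Xi^{*,P}_{\cdot,j,\cdot}$ produces the unique Nash equilibrium for the original game, establishing \eqref{eq_NASH_multi_asset_agent_cor}.

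Third, for the Bachelier extension over $\mathscr{X}(X,\T)$, I use that $u_0(x)=x$, so that $U_0(\Xi_{\cdot,j,\cdot}|\Xi_{\cdot,-j,\cdot}) = -\E[C_{\T}(\Xi_{\cdot,j,\cdot}|\Xi_{\cdot,-j,\cdot})]$ and maximizing $U_0$ is identical to minimizing the expected cost. The task then reduces to showing that, when the other agents play the deterministic profile $\Xi^*_{\cdot,-j,\cdot}$, the best response of agent $j$ is also deterministic. This follows from the standard martingale argument used in \cite{luo_schied}: the martingale property of $\bm{S}^0$ renders the linear price terms in $\E[C_{\T}]$ insensitive to replacing $\bm{\xi}_{\cdot,j,\cdot}$ by its conditional mean given $\mathcal{F}_0$, while the quadratic self-impact and temporary impact terms are strictly convex, so by Jensen's inequality deterministic strategies achieve a strictly lower expected cost. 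The main (mild) obstacle is the careful term-by-term bookkeeping in the first step verifying invariance of the cost under $V$; apart from this, the argument is a direct consequence of Lemma \ref{lem_NE_diagonal_multi_agent}, Remark \ref{os_risk_neutral_agents}, and the risk-neutral collapse of the CARA functional.
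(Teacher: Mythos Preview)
Your proposal is correct and follows essentially the same route as the paper: both arguments rotate into virtual assets via the spectral decomposition of $Q$, invoke Remark \ref{os_risk_neutral_agents} to decouple the $\gamma=0$ mean-variance functional across assets without any condition on $\Sigma$, and then apply Lemma \ref{lem_NE_diagonal_multi_agent} together with the cost-invariance computation from the proof of Theorem \ref{te_NE_multi_asset_multi_agent}. The only minor difference is in the Bachelier/CARA part: the paper simply notes that for $\gamma=0$ the CARA functional coincides with $-\E[C_{\T}]$ and defers the adapted-versus-deterministic step to the machinery already established in Lemma \ref{lem_NE_diagonal_multi_agent} (ultimately Theorem 2.1 of \cite{schied2010optimal}), whereas you spell out the underlying martingale-plus-Jensen argument directly; both justifications are valid and lead to the same conclusion.
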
 

% {\color{blue}
% Nessuno, infatti per come è definito $u_\gamma (x)$, quando $\gamma=0$, il CARA functional diventa un MV ma la differenza è che la soluzione è ottimale non solo per le strategie deterministiche ma per tutte.
% }

\section{Trading Strategies in Market Impact Games}
\label{cross_trading_conditions}

Before studying market stability we investigate
how the cross-impact effect and the presence of many competitors may affect trading strategies, in terms of Nash equilibria.
To understand the rich phenomenology that can be observed in a market impact game, we introduce three types of traders: 
\begin{itemize}
    \item  the \emph{Fundamentalist} wants to trade one or more assets in the same direction (buy or sell). Notice that a Fundamentalist can have zero initial inventory for some assets; 
    \item the \emph{Arbitrageur} has a zero inventory to trade in each asset and tries to profit from the market impact payed by the other agents;
    \item the \emph{Market Neutral} has a non zero volume to trade in each asset, but in order to avoid to be exposed to market index fluctuations, the sum of the volume traded in all assets is zero\footnote{Real Market Neutral agents follow signals which are orthogonal to the market factor, thus they typically are short on approximately half of the assets and long on the other half. The sum of trading volume is not exactly equal to zero but each trading volume depends on the $\beta$ of the considered asset with 
    respect to the market factor. In our stylized market setting, we assume that all assets are equivalent with respect to the market factor.}. 
\end{itemize}    
     We remark that an Arbitrageur is a particular case of a 
Market Neutral agent in the limit case when the volume to trade
in each asset is zero.
       Clearly in a single-asset market we have only two types of the previous agents, since a Market Neutral strategy requires at least two assets.

    \subsection{Cross-impact effects and liquidity strategies}

To better understand how  cross-impact
affects optimal liquidation strategies,
we consider the 
case of two risk-neutral agents which can (but not necessarily must) trade $M$ assets.
  We show below that the presence of multiple assets and of cross-impact can affect the trading strategy of an agent interested in liquidating only one asset. In particular, we find, counterintuitively, that it might be convenient for such an agent to trade (with zero inventory) the other asset(s) in order to reduce transaction costs.      
   
 %  {\color{blue} Visto che consideriamo sempre il kernel exp per comodita', potremmo citare il lavoro di Schöneborn che tratta proprio questo e appunto dire che anche se il lavoro di SZ generalizza ad un kernel qualsiasi noi per comodità ci concentriamo su quello exp. Magari in una nota, che dici?}
   
   We focus on the two-asset case, $M=2$, and we analyse the Nash equilibrium when the kernel function has an exponential decay\footnote{All our numerical experiments are performed with exponential kernel as in (\cite{Obizhaeva}). Schied and Zhang shows that the form of the kernel does not play a key role for stability, given that the conditions given above are satisfied.}, $G(t)=e^{-t}$. The first trader is a 
    Fundamentalist who wants to liquidate the position in the first asset, i.e., $X_{1,1}=1$, while the second agent is an 
    Arbitrageur, i.e., 
    $X_{1,2}=0$.  We set an equidistant trading time grid 
    with $26$ points
    and $\theta=1.5$.
    The second asset is available for trading, but let us consider as a benchmark case when both agents trade only the first asset. This is a standard \cite{schied2018market} game. 
    Figure \ref{fig_round_schied}
    exhibits the Nash Equilibrium for the two players. 
    We observe that the optimal solution for the Fundamentalist 
    is very close to the classical U-shape derived under the Transient Impact Model (TIM)\footnote{Given the initial inventory $X$, the optimal strategy in the standard TIM is $\bm{\xi}=\frac{X}{\bm{e}^T \Gamma_{\theta}^{-1} \bm{e}}\Gamma_{\theta}^{-1}\bm{e},$ see
    for further details \cite{schied2018market}.}, i.e., our model when only one agent is present. However,
    the solution is asymmetric and it is more convenient for the 
    Fundamentalist to trade more in the last period of trading.
    This can be motivated by observing that at equilibrium the Arbitrageur
    places buy order at the end of the trading day, and thus she pushes up the price. Then, the Fundamentalist exploits 
    this impact to liquidate more orders 
    at the end of the trading session. We remark that the Arbitrageur earns at equilibrium, since her expected cost is negative (see the caption).

      \begin{figure}[!t]
\centering
{\includegraphics[width=0.85\textwidth]{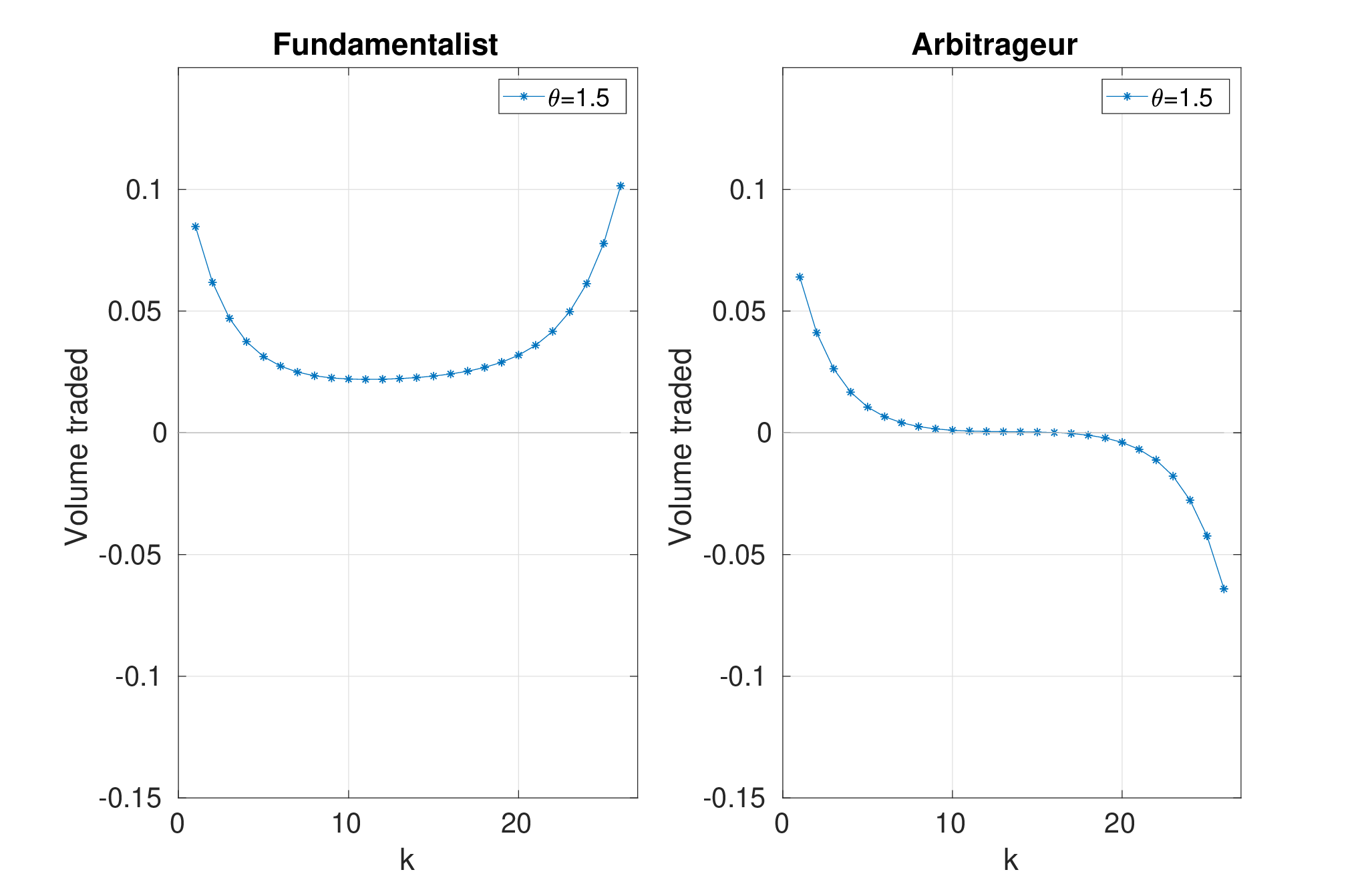}}
\caption{ Nash equilibrium $\bm{\xi}_{1}^*$ of the Fundamentalist
and $\bm{\xi}_{2}^*$ of the Arbitrageur trading only one asset. The trading time grid is equidistant with 26 points and 
$\theta=1.5$. The expected costs are equal to
$\E[C_{\T}(\bm{\xi}_{1}^*|\bm{\xi}_{2}^*)]=0.4882$, 
$\E[C_{\T}(\bm{\xi}_{2}^*|\bm{\xi}_{1}^*)]=-0.0370$.} 
\label{fig_round_schied}
\end{figure}

Now we examine the previous situation when the two traders 
solve the optimal execution problem taking into account the possibility
of trading the other asset. 
We define the cross impact matrix $Q=\begin{bmatrix}
1 &q \\ q & 1
\end{bmatrix}$, where $q=0.6$. In Figure \ref{fig_ne_q2}
we report the optimal solution where the
inventory of the agents 
are set to be ${\bm X}_1= \begin{pmatrix}
1&0
\end{pmatrix}^T$ and ${\bm X}_2= \begin{pmatrix}
0&0
\end{pmatrix}^T$. 
The Fundamentalist wants to
liquidate only one asset, but, as clear from the Nash equilibrium, the cross-impact influences the
optimal strategies in such a way that it is optimal for him/her to trade also the other asset. In terms of cost, for the Fundamentalist trading the two assets is worse off than in the benchmark case (see the values of  $\E[C_{\T}(\Xi_{\cdot,1,\cdot}^*|\Xi_{\cdot,2,\cdot}^*)]$ in captions).
However, if the Fundamentalist trades only asset 1 and Arbitrageur trades both assets, the former has a cost of $0.4935$ which is greater than the expected costs associated with Figure \ref{fig_ne_q2}. Thus, the Fundamentalist {\it must} trade the second asset if the Arbitrageur does (or can do it).

For completeness in Table \ref{tab_exp_costs} we compare the expected costs of both 
Fundamentalist and Arbitrageur when the two agents may decide to trade 
i) both assets, i.e., they consider market impact game and cross-impact
effect, or ii) one asset, i.e., they only consider the market impact game.
It is clear that both agents prefer to trade both assets. Actually, the state where both agents trade two assets is the Nash equilibrium of the game where each agent can choose how many assets to trade.
\begin{table}[t]
\centering
  \setlength{\extrarowheight}{2pt}
    \begin{tabular}{cc|c|c|}
      & \multicolumn{1}{c}{} & \multicolumn{2}{c}{Arbitrageur}\\
      & \multicolumn{1}{c}{} & \multicolumn{1}{c}{$1$ Asset}  & \multicolumn{1}{c}{$2$ Asset} \\\cline{3-4}
      \multirow{6}*{
      \rot{\rlap{Fundamentalist}}}  & $1$ Asset & $(0.4882,-0.0370)$ & $(0.4935,-0.0412)$ \\\cline{3-4}
      & $2$ Asset & $(0.4836,-0.0334)$ & {\color{red}${\bm(0.4885,-0.0377)}$} \\\cline{3-4}
    \end{tabular}
    \vspace{15pt}
    \caption{Payoff matrix of expected costs when the Fundamentalist 
 and Arbitrageur inventories are equal 
 to $(1~ 0)^T$ and $(0~ 0)^T$, respectively. We have highlighted in red 
 the Nash Equilibrium associated with this payoff matrix. The payoff in the $i$-th row and $j$-th column correspond to the game when the Fundamentalist and Arbitrageur decide to trade $i$ and $j$ assets, respectively, i.e., 
 the element in the first row and second column is the payoff when the Fundamentalist trades only the first asset while the Arbitrageur trades both assets.}
 \label{tab_exp_costs}
\end{table}

   \begin{figure}[t]
\centering
{\includegraphics[width=0.85\textwidth]{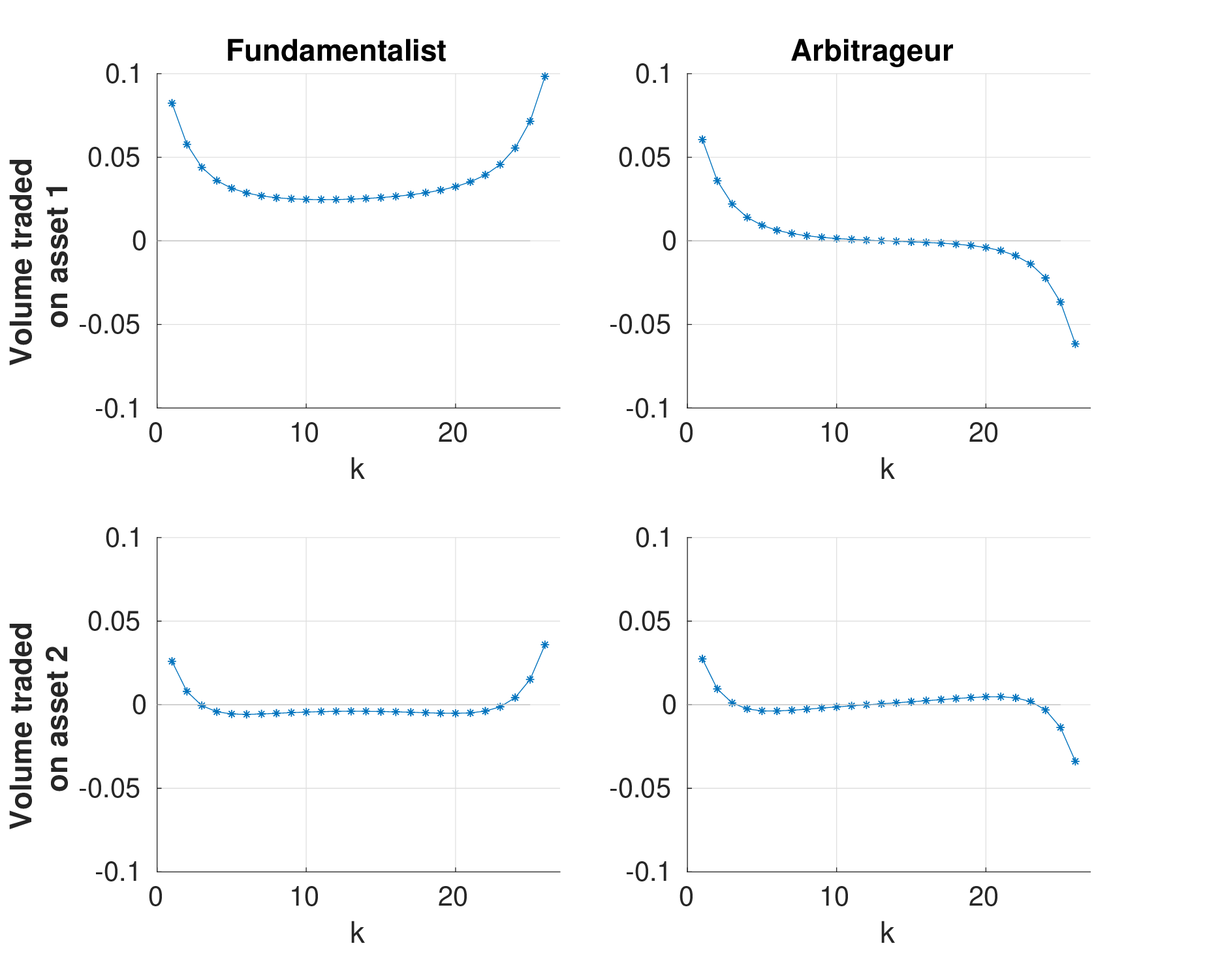}}
\caption{Optimal strategies for a Fundamentalist ($\Xi_{\cdot,1,\cdot}^*$) and an Arbitrageur
($\Xi_{\cdot,2,\cdot}^*$), where their inventories are 
equal to $(1~ 0)^T$ and $(0~ 0)^T$, respectively.
$Q=\protect\begin{bmatrix}
    1 & 0.6 \\ 0.6 & 1    
   \protect\end{bmatrix}$, and the trading time grid is an equidistant time grid with $26$ points. 
 The expected costs are equal
$\E[C_{\T}(\Xi_{\cdot,1,\cdot}^*|\Xi_{\cdot,2,\cdot}^*)]=0.4885$, 
$\E[C_{\T}(\Xi_{\cdot,2,\cdot}^*|\Xi_{\cdot,1,\cdot}^*)]=-0.0377$ when $\theta=1.5$.}
\label{fig_ne_q2}
\end{figure}

The solution presented above is generic, but an important role is played by the transaction cost modeled by the temporary impact. 
When the temporary impact parameter $\theta$ increases,  the benefit of the cross-impact vanishes, and the optimal strategy of the Fundamentalist tends to the solution provided by the simple TIM with one asset and no other agent.
% Indeed, Figure \ref{fig_diff_costTIM}  ({\bf FORSE LEVEREI LA FIGURA)} shows the difference between the expected cost of the Fundamentalist in the Nash equilibrium and when solving the standard TIM model without the other agent as a function of $\theta$. 
% The impact of the Arbitrageur 
% is always consider 
%  as those of the optimal solution according to the full model.
%  {\color{red}Non so se è chiaro.}
 We find that the difference between these expected costs 
 is negative, i.e. it is always optimal to trade also the second asset, but converges to zero for large $\theta$, see
 Figure \ref{fig_diff_costTIM} panel (a).
Furthermore, it is worth noting that, if $S=\sum_k |\xi_{k,2}|$  denotes the total absolute volume traded by the 
Fundamentalist on the second asset, then 
$\lim_{\theta\to0}S=0$ and $\lim_{\theta\to\infty}S=0$ as
exhibited from Figure \ref{fig_diff_costTIM} panel (b). This means, that
when the cost of trades increases, it is not anymore convenient for 
both traders to try to exploit the 
cross impact effect.
%as observed in Section \ref{se_MUlti_assets_case}. 
%Also, we observe that for small values of $\theta$ seems to be a  minimum value such that the oscillations of the equilibrium make useless the cross impact. {\bf NON SONO SICURO DI CAPIRE CHE VOGLIA DIRE E PERCHE' E' RILEVANTE} However this effect may be attributed to the instability of the equilibrium solution.

%   \begin{figure}[t]
% \centering
% {\includegraphics[width=0.85\textwidth]{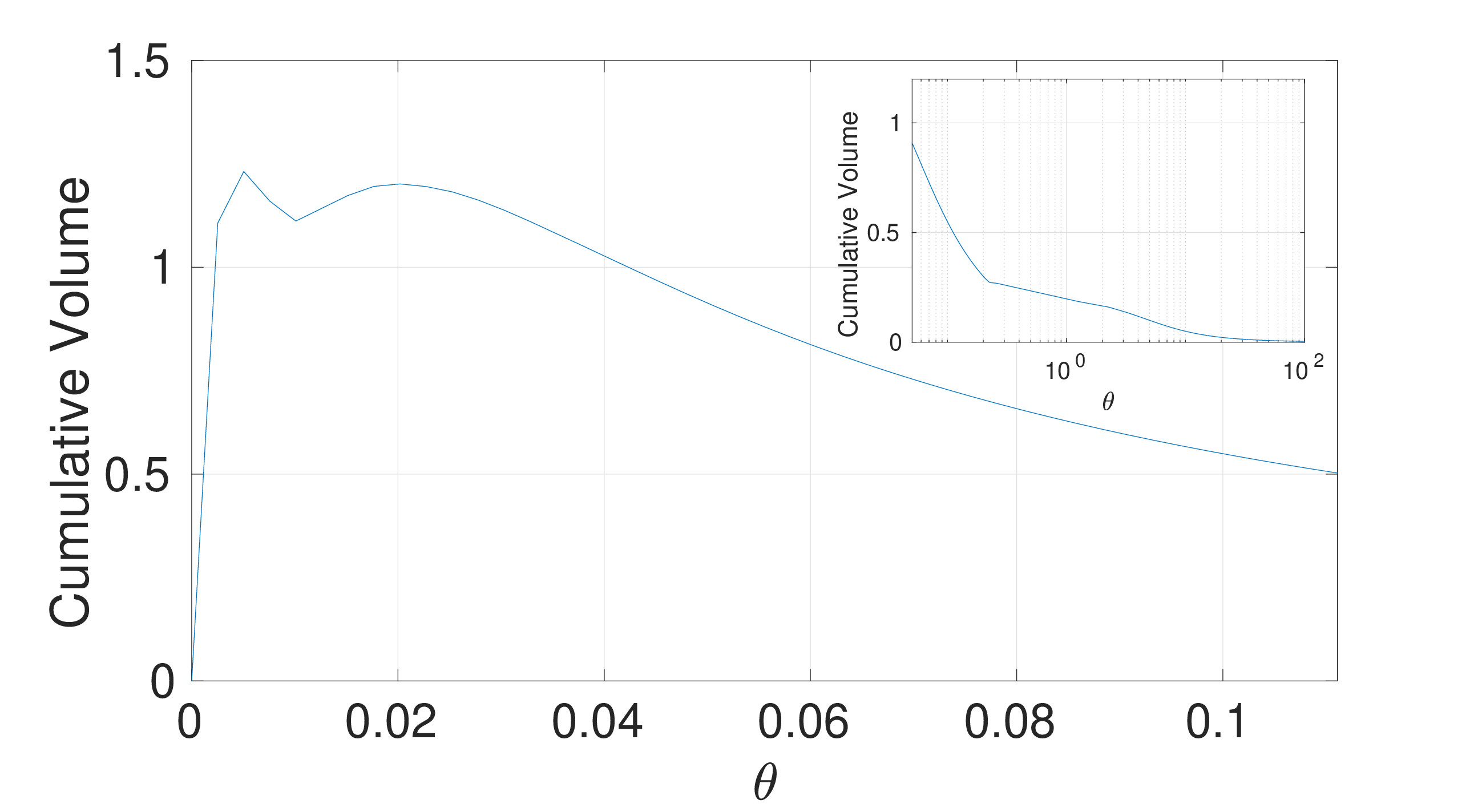}}
% \caption{Cumulative traded volume of the second asset by the Fundamentalist when playing against an Arbitrageur as a function of $\theta$. The inset shows the same curve in semi-log scale.}
% \label{fig_cumu_fund}
% \end{figure}
 
%  Indeed the 
% Figure \ref{fig_diff_costTIM} means that  when $\theta$ goes to infinity
% the penalization of considering only the TIM vanishes.

  \begin{figure}[t]
\centering
\subfloat[][]
{\includegraphics[width=0.5\textwidth]{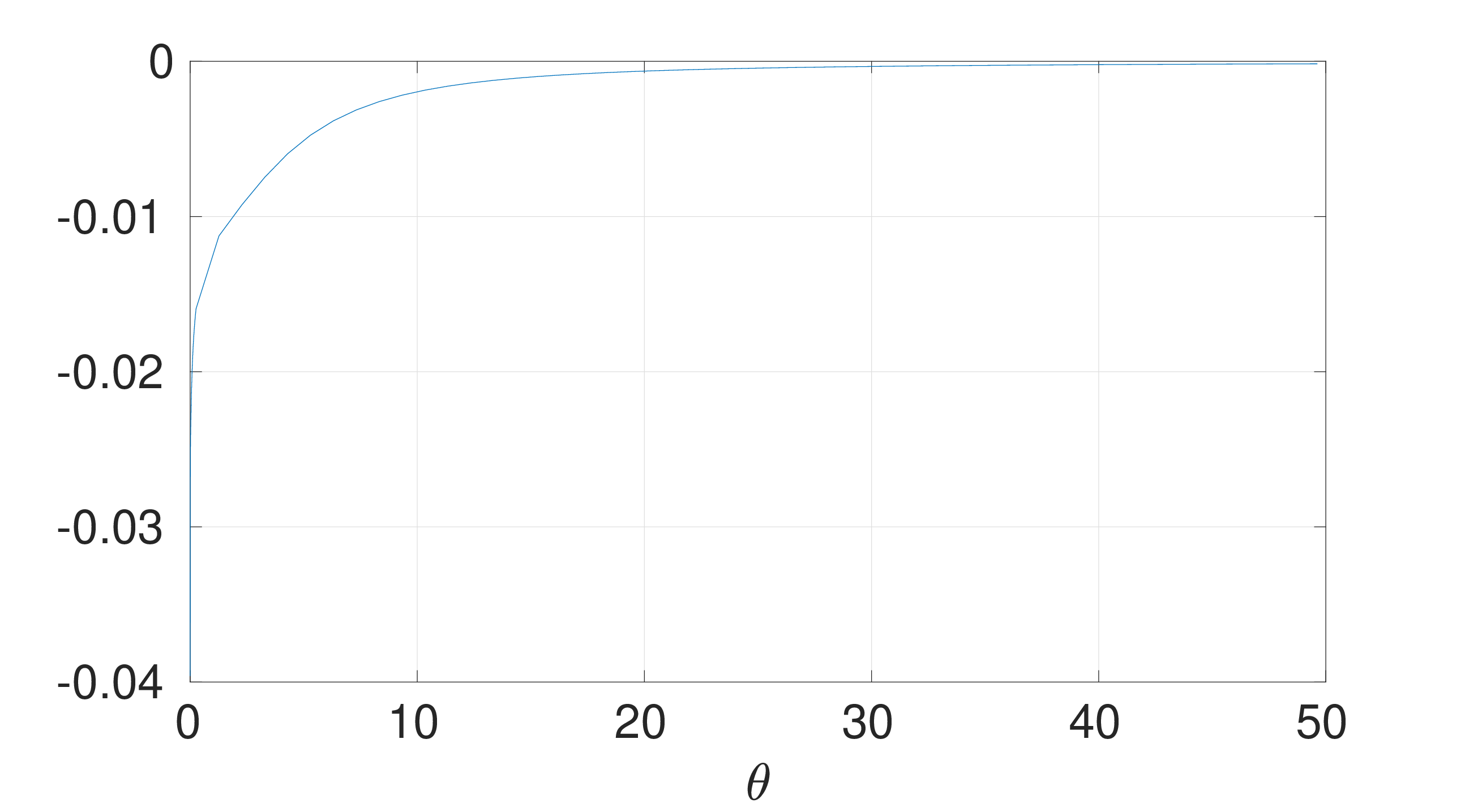}}
\subfloat[][]
{\includegraphics[width=0.5\textwidth]{plot_new/cumulative_order_fund.eps}}
\caption{Figure (a). The y axis shows the difference between the expected cost of the Fundamentalist when he/she consider the cross-impact effect and the Arbitrageur and 
the expected cost when he/she places order following the
classical one asset TIM model and the x axis the cost parameter $\theta$. 
Figure (b). Cumulative traded volume of the second asset by the Fundamentalist when playing against an Arbitrageur as a function of $\theta$. The inset shows the same curve in semi-log scale.
The setting is the same of Figure \ref{fig_ne_q2}.
}
\label{fig_diff_costTIM}
\end{figure}

\subsection{Do arbitrageurs act as market makers at equilibrium?}
   
%  \textcolor{red}{Come vedrai ho riorganizzato questa sezione, e ho modificato il titolo, enfatizzando subito la domanda principale e relegando il caso di due sellers alla fine (e forse si potrebbe anche tagliare, perche' non mi e' chiaro il messaggio che diamo). Se sei d'accordo con questa impostazione, si dovrebbe rifare la Fig 5. Infine l'esempio non e' secondo me super calzante perche' alla fine, mi sembra, il fatto che ci siano 2 assets e' accessorio, si vedrebbe lo stesso anche con un solo asset, giusto? Quindi il risultato segue da Luo e Schied. Magari ne riparliamo...}  
   
 We now consider the cases when the agents are of different type. In particular, we focus on the role of an Arbitrageur
as an intermediary between two Fundamental traders of opposite sign. When a Fundamental seller and a Fundamental buyer trade the same asset(s), are the Arbitrageurs able to profit, acting as a sort of market maker by buying from the former and selling to the latter? 
 
 To answer this question, we  compute the Nash equilibrium of a market impact game with $M=2$ assets and $J=3$ agents, namely a Fundamentalist seller with inventory $(1\ 0)^T$, a Fundamentalist buyer with inventory $(-1\ 0)^T$,
and an Arbitrageur. We assume that agents are risk-neutrals, $\gamma=0$, and 
$Q=\begin{bmatrix}1 & 0.6 \\ 0.6 & 1\end{bmatrix}
$. As panels (a) of Figure \ref{fig_3_agent} show, the Arbitrageur does not longer trade and the expected costs are $0.1056$ and $0$ for the two Fundamentalists and the Arbitrageur, respectively. This indicates that the two Fundamentalists are able to reduce significantly their costs with respect to the previous case, increasing their protection against predatory trading strategies and that the Arbitrageur is unable to act as a market maker.  
The previous cases are particular examples of the following more general result.

\begin{pr}\label{pr_kick_out_arbi}
Under the assumptions of Theorem \ref{te_NE_multi_asset_multi_agent}, 
the following are equivalent:
\begin{itemize}
\item[a)] The aggregate net order flow is zero for each asset, 
i.e., $$\overline{X}_{i,\cdot}=\frac{1}{J}\sum_{j=1}^J X_{i,j}=0\quad \forall i=1,2,\ldots,M;$$
\item[b)] The optimal solution for an Arbitrageur is equal to 
zero for all assets.
\end{itemize} 
\end{pr}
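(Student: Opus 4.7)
The plan is to exploit the orthogonalization introduced in Section~\ref{solution_linear_cross}: the multi-asset Nash equilibrium is assembled from $M$ decoupled Luo--Schied single-asset equilibria in the virtual-asset coordinates defined by the eigendecomposition $QV=VD$. Denote the Arbitrageur by $j_0$, so $\bm{X}_{j_0}=0\in\R^M$ and, by construction, $\bm{X}^P_{j_0}=V^T\bm{X}_{j_0}=0$ vanishes in every virtual asset as well. Substituting $X^P_{i,j_0}=0$ into the virtual-asset Nash equilibrium \eqref{eq_NASH_multi_asset_agent_virtual} reduces her optimal strategy to
\begin{equation*}
\bm{\xi}^{*,P}_{i,j_0,\cdot}\;=\;\overline{X}^P_{i,\cdot}\,(\bm{v}_i-\bm{w}_i),\qquad i=1,\dots,M,
\end{equation*}
and since $V$ is nonsingular, $\Xi^*_{\cdot,j_0,\cdot}=V\,\Xi^{*,P}_{\cdot,j_0,\cdot}=0$ if and only if $\bm{\xi}^{*,P}_{i,j_0,\cdot}=0$ for every $i$. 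Similarly, $\overline{\bm{X}}^P=V^T\overline{\bm{X}}$ so $\overline{X}^P_{i,\cdot}=0$ for all $i$ iff (a) holds.

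With these two reductions in hand the direction (a)$\Rightarrow$(b) is immediate: if $\overline{X}^P_{i,\cdot}=0$ for each $i$, the product $\overline{X}^P_{i,\cdot}(\bm{v}_i-\bm{w}_i)$ vanishes identically and the Arbitrageur's physical strategy is zero by the transformation above. For the converse (b)$\Rightarrow$(a), I need to pass from the vector equation $\overline{X}^P_{i,\cdot}(\bm{v}_i-\bm{w}_i)=0$ to the scalar conclusion $\overline{X}^P_{i,\cdot}=0$, and this is legitimate precisely when $\bm{v}_i-\bm{w}_i\neq 0$ in $\R^{N+1}$ for every $i$. Combining the resulting equivalences yields the complete chain
\begin{equation*}
\text{(b)}\ \Longleftrightarrow\ \bm{\xi}^{*,P}_{i,j_0,\cdot}=0\ \forall i\ \Longleftrightarrow\ \overline{X}^P_{i,\cdot}=0\ \forall i\ \Longleftrightarrow\ V^T\overline{\bm{X}}=0\ \Longleftrightarrow\ \text{(a)}.
\end{equation*}

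The main (and only non-routine) obstacle is therefore the non-degeneracy statement $\bm{v}_i\neq\bm{w}_i$. I would prove it by contradiction: if $\bm{v}_i=\bm{w}_i=:\bm{u}$ for some $i$, then subtracting the two defining linear systems gives $J\,\widetilde{\Gamma}_i\bm{u}=(c_v-c_w)\bm{e}$, where $c_v,c_w$ are the normalizing scalars, forcing $\widetilde{\Gamma}_i\bm{u}$ to be a multiple of $\bm{e}$. Using the strictly lower-triangular plus positive-diagonal structure of $\widetilde{\Gamma}_i$ (whose diagonal equals $\tfrac12 G(0)\lambda_i>0$), together with the admissibility constraint $\bm{e}^T\bm{u}=1$ and the strict positive definiteness of $\Gamma_i^{\gamma,\theta}\pm\widetilde{\Gamma}_i$ established in \cite{schied2018market,luo_schied}, one readily rules this out for any $J\ge 2$ and $N\ge 1$, which are the only cases in which the Proposition has content. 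This closes the argument.
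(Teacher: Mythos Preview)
Your proof takes the same route as the paper's: transform to the virtual-asset coordinates via $V^T$, note that the Arbitrageur's virtual inventory is zero so her optimal virtual strategy collapses to $\overline{X}^P_{i,\cdot}(\bm{v}_i-\bm{w}_i)$, and argue both directions from there. For (b)$\Rightarrow$(a) you are actually more explicit than the paper, which simply asserts that the vanishing of the virtual strategy implies $\overline{X}^P_{i,\cdot}=0$ without addressing why $\bm{v}_i\neq\bm{w}_i$.

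Your final sketch for that non-degeneracy is the only part that is not yet a proof. Deriving $J\widetilde{\Gamma}_i\bm{u}=(c_v-c_w)\bm{e}$ is correct, but ``one readily rules this out'' does not supply the concluding linear-algebraic step, and the appeal to ``strict positive definiteness of $\Gamma^{\gamma,\theta}_i\pm\widetilde{\Gamma}_i$'' is imprecise since these matrices are not symmetric (what the cited references establish is invertibility, via positive definiteness of the symmetric part). To close the gap you would need to show explicitly that $\widetilde{\Gamma}_i^{-1}\bm{e}$ and $(\Gamma^{\gamma,\theta}_i)^{-1}\bm{e}$ cannot be collinear, which hinges on $G$ being nonconstant and $N\ge1$. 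Since the paper's own proof leaves this same point entirely unaddressed, your version is already at least as complete.
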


 In other words, when the aggregate net order flow is zero for each asset then there are no arbitrageurs in the market, i.e., 
  the Nash equilibrium for Arbitrageurs
 is \emph{zero}, so that the optimal schedule corresponds to place no orders in the market. 

                \begin{figure}[t]
\centering
{\includegraphics[width=1\textwidth]{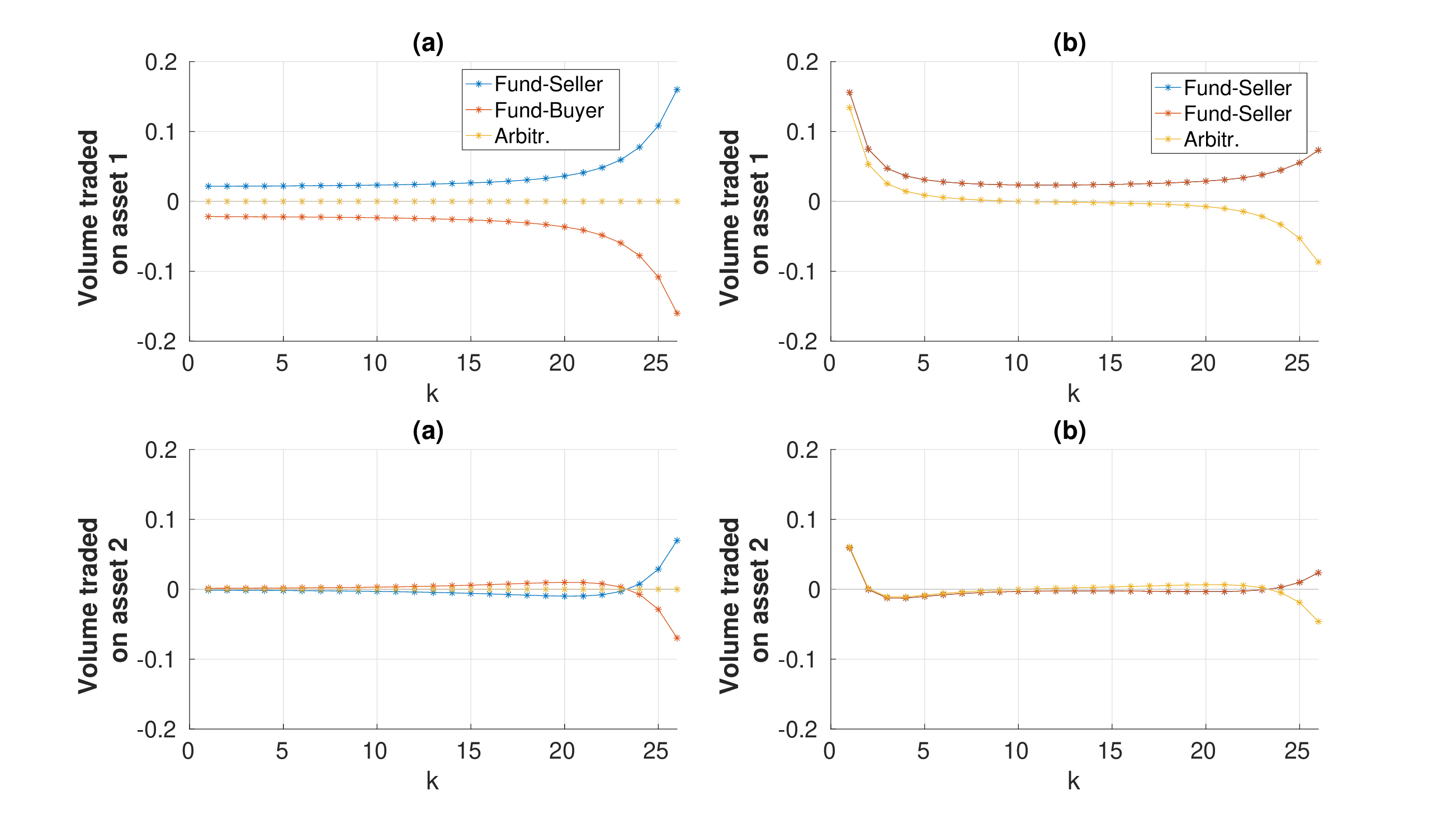}}
\caption{
Optimal schedule for market impact game with $M=2$ assets and $J=3$ risk-neutral agents.
Panels (a) exhibit the optimal schedule for a 
Fundamentalist seller, buyer (with inventory $(1 \ 0)^T$
and $(-1 \ 0)^T$, respectively),  and an Arbitrageur. 
Panels (b) exhibit the optimal schedule for two identical 
Fundamentalist sellers (with inventories $(1 \ 0)^T$, respectively),  and an Arbitrageur. 
Blue and red lines are the Nash equilibrium for the Fundamentalist traders. The yellow line refers to the equilibrium of the Arbitrageur.
The trading time is equidistant with $26$ points, where the cross impact is set to $q=0.6$, $\gamma=0$ and $\theta=1.5.$
}\label{fig_3_agent}
\end{figure} 

As a comparison, we consider two identical Fundamentalist sellers (with inventories $(1\ 0)^T$) and the other parameters are the same as above. Figure \ref{fig_3_agent}, panels (b), displays the equilibrium solution. The solution of the Fundamentalists are identical.
While the trading pattern of the Arbitrageur is qualitatively similar to the one of the two agent case (see Fig. \ref{fig_ne_q2}), the Fundamentalists trade significantly less toward the end of the day. This is likely due to the fact that it might be costly to trade for one Fundamentalist given the presence of the other. The expected costs of the two Fundamentalists is equal to  $0.8911$ (which is approximately two times of the 
two players game) and $-0.0996$ for the Arbitrageur.

\section{Instabilities in Market Impact Games}
\label{sec_inst_comm}
We now turn to our attention to the study of market stability.
Since the seminal work of \cite{schied2018market} we known that, when two risk-neutral agents trade one asset, stability is fully determined by the value $\theta$ of the transaction cost, see Theorem 2.7 of \cite{schied2018market}. Here we extend their results for the multi-asset case and we derive a general result which involves the spectrum of the cross-impact matrix.
However, the proof of  \cite{schied2018market}
  cannot\footnote{The proofs provided of 
 \cite{schied2018market} rely on general results of Toeplitz matrix, which cannot be used in the multi-agent framework, since the involved decay kernel matrices are no longer Toeplitz.} be extended to the multi-agent case with $J$
 risk-averse agents, even though in the one asset case,
 as highlighted by \cite{luo_schied}.
Therefore, we study market stability by using numerical analyses for the general setting of multi-agent and multi-asset case from which we deduce a new conjecture which is in line with the analyses carried out by \cite{luo_schied}. We conclude by presenting some 
advice to policy regulators which want to prevent market instability.

To clarify better our results, we introduce two definitions of market stability in a market 
with $M$ assets and $J$ traders:

\begin{de}[Strong Stability]\label{de_ss}
The market is 
\emph{strongly (uniformly) stable}
if $\forall$ $\theta\geq0$ the
Nash equilibrium 
$\bm{\xi}_{i,j,\cdot}^*\in \mathscr{X}(X_{i,j},\T)$ does 
not exhibit spurious oscillations 
$ \forall\ X_{i,j} \in \R$ initial inventory, for all assets $i=1,2,\ldots,M$ and agents $j=1,2,\ldots,J$.
\end{de}

\begin{de}[Weak Stability]\label{de_ws}
The market is 
\emph{weakly stable}
if there exists an interval $ I \subset \R_+$ such that
$\forall$ $\theta \in I$ the Nash equilibrium
$\bm{\xi}_{i,j,\cdot}^*\in \mathscr{X}(X_{i,j},\T)$ does 
not exhibit spurious oscillations 
$ \forall\ X_{i,j}\in \R $ initial inventory, for all assets $i=1,2,\ldots,M$ and agents $j=1,2,\ldots,J$
\end{de}

We recall that a spurious oscillations is 
a sequence of trading times such that the orders are consecutively composed by buy and sell trades,
see Section \ref{sec_market_impact}. 
Therefore, \cite{schied2018market} showed that for $M=1$ and $J=2$
 the market is not strongly but only 
weakly stable where $I$, the stability region, 
is equal to $[\theta^*,+\infty)$ where 
$\theta^*=G(0)/4.$

\subsection{Scaling of impact with $J$ and $M$}\label{sec:scaling}
Up to now we have not discussed how the function ${\mathcal Q}^{(J,M)}$, and therefore its components $Q$ and $G(t)$, depend on the number of agents $J$ and the number of assets $M$. While this is not important for finding the Nash equilibrium, we will show below that the behavior of ${\mathcal Q}^{(J,M)}$ is critical to study the stability properties of markets. In this subsection we review what the theoretical and empirical literature tells us about this dependence.

Concerning the dependence of impact on $J$,  \cite{Bagnoli} generalizes the \cite{Kyle} model to the case when $J\ge 1$ symmetrically informed agents are simultaneously present, and shows that the Kyle's lambda, i.e. the proportionality factor between price impact and aggregated order flow,  scales as $J^{-1/\alpha}$, where $\alpha$ is the exponent of the stable law describing the price and uninformed order flow distribution. Moreover if the second moment of both variables is finite, \cite{Bagnoli} shows that the Kyle's lambda scales as $1/\sqrt{J}$ (see also \cite{Lambert} for the non symmetrical case when distributions are Gaussian). 
In our impact model, this property can be modeled by assuming that the decay kernel depends on $J$ as
$G(t):=J^{-\beta} \cdot \bar G(t)$ where $\bar G(t)$ is 
 the $J$ independent part of the decay kernel and $\beta\geq 0$.
%In a market with many agents it is quite natural to expect that impact of the single trader becomes negligible when the number of competitors increases.
%Therefore, we might introduce a scaling parameter $\beta\geq0$ for the kernel function when $J$ increases, i.e., the decay kernel is given by $G^{sc}(t):=J^{-\beta} \cdot G(t)$ where $G(t)$ is the standard non-scaled decay kernel. 
 The case $\beta=0$ corresponds to the 
 additive case, while for $\beta=1$ the total instantaneous impact does not depend on the number of agents $J$.
 On the empirical side, there are some recent evidences suggesting that the impact strength depends on the number of agents simultaneously trading. Figure 3 of \cite{coimpact} indicates that market impact of a metaorder\footnote{A metaorder is a sequence of trades  executed in the same
direction (either buys or sells) and originating from the
same market participant. Thus in our framework each trader $j$ executes a metaorder of size $X_j$.}  decreases with the number of metaorders simultaneously present.
% In the following we first consider $G(t)$ independent from $J$, while we investigate in detail how the market stability is affected by the scaling parameter $\beta$ in Section \ref{numerical_analysis_stability}.

 Concerning the dependence of impact from $M$, the recent work of \cite{garcia2020multivariate} proposes a multi-asset version of the Kyle model. 
  In particular, they prove the existence and uniqueness of 
 the linear equilibrium and show
 in Proposition 3.4 that the cross-impact matrix $Q$ ($\Lambda$ in their notation) satisfies
 $\frac{1}{4} \Sigma_0=Q \Omega Q$ where $\Sigma_0$ and $\Omega$ 
 are the covariance matrices of the fundamental price and of the bids of 
 the noise trader, respectively.
If we assume that these matrices have a one factor structure and can be decomposed as 
$\Sigma_{0}=s_d I+s_n \bm{e}\bm{e}^T$ and $\Omega=\omega_d I+ \omega_n \bm{e}\bm{e}^T$, 
where $s_d,s_n,\omega_d,\omega_n\in \R$ and $\bm{e}$ is the vector with all components equal to one, then 
necessarily\footnote{If $A=aI+b\bm{e}\bm{e}^T$  is a one-factor matrix, a particular case of 
rank-one update matrix, where $a, b\in \R$, then $A^{-1}$ is one-factor
 and moreover $L$ is 
also a one-factor matrix
where $LL^T=A$. Therefore, by Theorem 3.5 equation (3.6) of \cite{garcia2020multivariate}
$Q$ is a one-factor matrix.} $Q=q_d I+q_n \bm{e}\bm{e}^T$, where $q_d,q_n \in \R$.
 However, since $\frac{1}{4} \Sigma_0=Q \Omega Q$, 
 $q_d=\sqrt{\frac{s_d}{4\omega_d}}$ which is independent from $M$ 
(this was also empirically observed by \cite{benzaquen2017dissecting})
and more interestingly, $\lim_{M\to \infty}q_n=\lim_{M\to \infty} \left(-q_d\pm \sqrt{
\frac{s_d}{4\omega_n}}\right)\frac{1}{M}=0$. \footnote{It is worth to nothing that even though 
$\omega_n=0$, i.e.,  the order flow of the uninformed trader is uncorrelated,
the cross-impact Kyle lambda, $q_n$, is different from zero and $\lim_{M\to \infty}q_n=
\lim_{M\to \infty} 
\sqrt{\frac{s_n}{4\omega_d M}}=0$. } Thus, in the model the off-diagonal terms of $Q$ scale as $1/M$
and asymptotically, when $M$ becomes large, the cross-impact terms would vanish
and $Q$ would
converge towards a diagonal matrix. The decay of cross-impact coefficients with the number of considered assets $M$ has been empirically observed in \cite{benzaquen2017dissecting}.
%For instance, when the cross-impact is a one-factor matrix $Q= (1-q)I_M+q \bm{e} \bm{e}^T$ and $q=O\left(\frac{1}{M}\right)$, the critical value $\theta^*$ will not depend on $M$ for large values of $M$, i.e., the market would be asymptotically weakly stable. Therefore, when all other parameters are fixed, i.e., $J$, $N$ and $\gamma$, this type of scaling may act as a stabilization effect.

In conclusion, theoretical and empirical studies have shown that market impact is generally dependent on the number of assets and on the number of agents. While we have some indication of the scaling properties in some specific cases, the general form of this dependence is still an open issue. In the following we will show that if market impact does not properly scale with $J$ and $M$, markets become more unstable when more assets and/or more agents are present.

\subsection{Market stability and cross impact structure }\label{sec_inst_comm_cross_impact}
%arrivato qui
In this Section we consider $J=2$ risk-neutral agents which trade $M>1$ assets.
We study whether the
increase of the number of assets and the structure of cross impact matrix help avoiding 
oscillations and market instability at equilibrium
according to the previous definitions. To this end, we consider different structures of the cross-impact matrix $Q$ describing the complexity of the market for what concerns commonality in liquidity. %As already mentioned, we focus on the Schied and Zhang framework, of two risk-neutral agents, extended to the multi-asset case.

We first show that instabilities are generically observed also in the multi-asset case and that actually more assets generally make the market less stable if the elements of the cross-impact matrix do not depend on $M$. For simplicity let us consider $M=2$ assets and a game between a Fundamentalist and an Arbitrageur (similar results hold for different combinations of agents).
%Let us first discuss about the strong stability with an example on a two-assets market.We consider for instance the games between a Fundamentalist-Arbitrageur and Fundamentalist-Market-Neutral, see Figure \ref{fig_insta_2_ass} and \ref{fig_insta_2_ass_MN}, respectively. 
We choose $G(t)=e^{-t}$, the cross impact matrix equal  to $Q=\begin{bmatrix}
1 & 0.9\\ 0.9 &1
\end{bmatrix}$, and we consider $\theta=0.3$; remember that for the one asset case the market is stable for this value of $\theta$. Figure \ref{fig_insta_2_ass} shows that for this value of $\theta$ the strategies are oscillating and therefore the market is not strongly stable. More surprisingly, the fact that oscillations are observed for $\theta=0.3$ indicates that the transition between the two stability regimes depends on also on the number of assets and that more assets require larger values of $\theta$ to ensure stability. In the following we prove that this is the case and we determine the threshold value. 
 Figure \ref{fig_insta_2_ass} shows also the case  $\theta=0$. Notably, in this case the oscillations in the second asset disappear. This is due to the fact that, since $\Gamma_{0}^1$, ($\Gamma_0^2$), the 
$\Gamma$ matrix associated with the first (second)
virtual asset is equal to $(1+q) \Gamma$, ($(1-q) \Gamma$), 
 the combination of ``fundamental" solutions $\bm{v}$ and 
$\bm{w}$ are the same for the two virtual assets. Thus, at equilibrium the two solutions for the second asset are exactly zero.
%However, both cases exhibit examples of market instability even in a multi-asset scenario and we conclude that market is still not strongly stable.
   \begin{figure}[t]
\centering
{\includegraphics[width=0.85\textwidth]{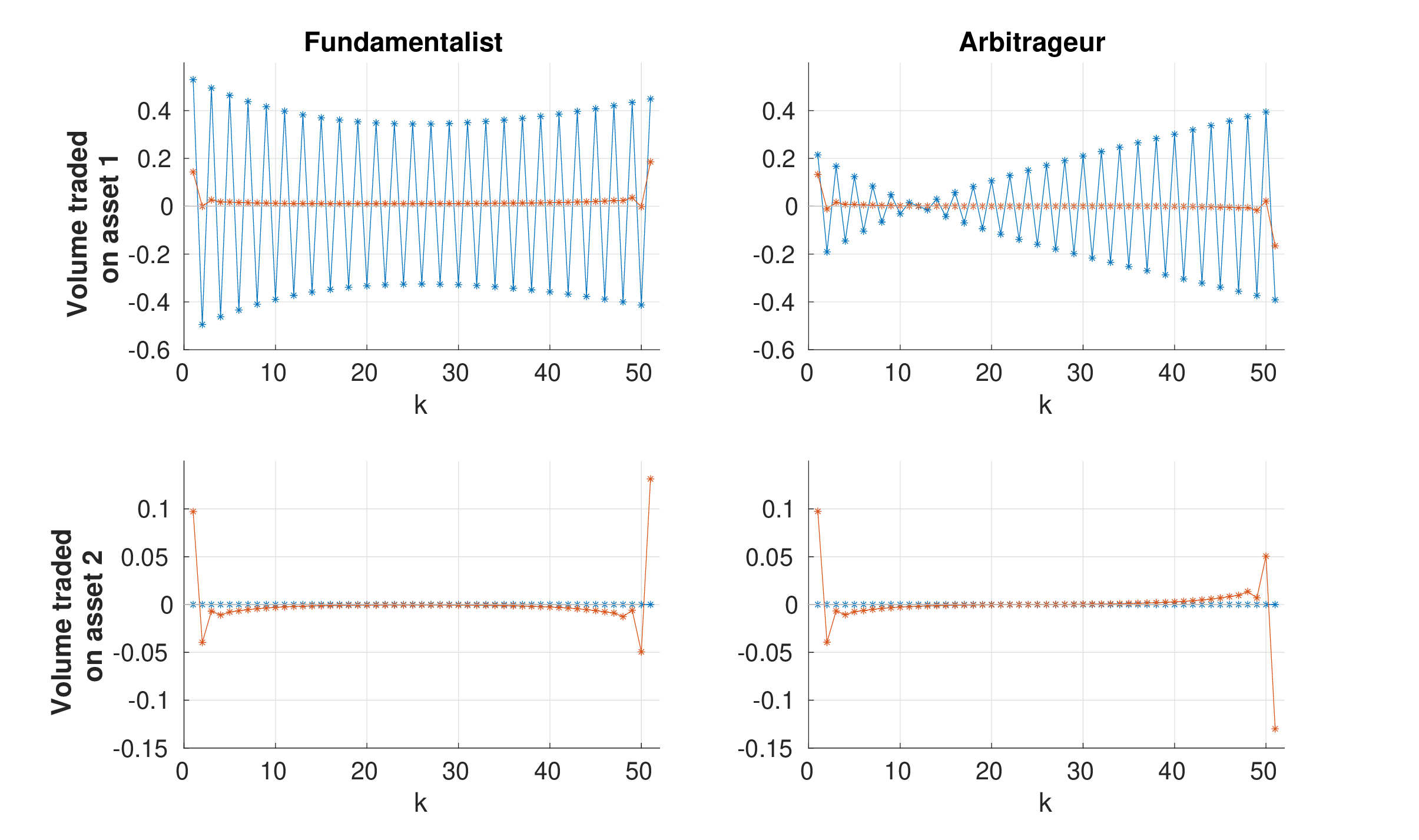}}
\caption{Nash Equilibrium for a Fundamentalist and an Arbitrageur, where 
their inventories are equal to $(1~ 0)^T$ and $(0~ 0)^T$ respectively. The blue lines are the optimal solution when $\theta=0$ and the red lines when $\theta=0.3$. The trading time 
has $51$ points and $Q=\protect\begin{bmatrix}
1 & 0.9\\ 0.9 &1
\protect\end{bmatrix}$.
}
\label{fig_insta_2_ass}
\end{figure}
% \iffalse
%   \begin{figure}[t]
% \centering
% {\includegraphics[width=1\textwidth]{}}
% \caption{Nash Equilibrium for 
% a Fundamentalist and a Market-Neutral,
%  where 
% their inventories are equal to $(1~0)^T$ and $(1~ -1)^T$ respectively,
% for 
% unstable $\theta$. The blue lines is the optimal solution when $\theta=0$ and the red lines when $\theta=0.1$. The trading time 
% has $51$ points and $Q=\protect\begin{bmatrix}
% 1 & 0.6\\ 0.6 &1
% \protect\end{bmatrix}$.}
% \label{fig_insta_2_ass_MN}
% \end{figure}
% \fi

We have shown in a simple setting that having more than one available asset 
 does not help improving the 
strong stability of the market and increases the threshold value between stable and unstable markets. Now, we show that 
when the number of assets tends to infinity and $G$ does not depend on $M$, the market becomes unstable. 
%Indeed, in the one asset setting, if we choose a sufficiently large $\theta$ the instability vanishes. Therefore, this raises the question of whether the equilibrium instability is still present when the number of assets increases.
To this end we introduce the definition of asymptotic stability.

\begin{de}[Asymptotic weak stability]
The market is asymptotically weakly stable if it is 
weakly stable when $M\to \infty.$
\end{de}

%In Section  \ref{se_MUlti_assets_case} we have solved the multi-assets games by reducing the problem to $M$ one dimensional cases using a spectral decomposition of the cross impact matrix $Q$. We denote the solution of this problem as the equilibrium for the ``virtual" assets (which are essentially the eigenvectors of $Q$). Then, we recover the solution on the original ones as a linear combinationof the virtual assets.

Given this definition, we prove the following:

\begin{te}[Instability in Multi-Asset Market Impact Games]\label{te_inst}
 Suppose that G is a continuous, positive definite, strictly positive, log-convex decay kernel and that the time grid is equidistant. Let $({\lambda_i})_{i=1,..,M}$ be the eigenvalues of  the cross-impact matrix $Q$. 
 If $\theta<\theta^*$ the market is unstable, where  
\begin{equation}\label{theta_num}
\theta^*= \max_{i=1,2,\ldots,M} \frac{G(0)\cdot
\lambda_i}{4}.
\end{equation}
 
\end{te}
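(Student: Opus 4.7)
The plan is to reduce the multi-asset instability analysis to several independent one-asset Schied--Zhang problems via the virtual-asset decomposition from Section~\ref{solution_linear_cross}, and then invoke the single-asset instability result (Theorem 2.7 of \cite{schied2018market}) applied to the virtual asset with the largest eigenvalue. The overall strategy mirrors how Theorem~\ref{te_NE_multi_asset_multi_agent} and Corollary~\ref{cor_riks_neutrals} reduced the existence/uniqueness of the multi-asset Nash equilibrium to the orthogonal case.

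Concretely, first diagonalize $Q = V D V^T$ with $D=\mathrm{diag}(\lambda_1,\ldots,\lambda_M)$ and $\lambda_i>0$ (since $Q$ is symmetric positive definite). By Corollary~\ref{cor_riks_neutrals}, the physical equilibrium is $\Xi^*_{\cdot,j,\cdot}=V\,\Xi^{*,P}_{\cdot,j,\cdot}$, where each row $\bm{\xi}^{*,P}_{i,\cdot,\cdot}$ of the virtual equilibrium is, by Lemma~\ref{lem_NE_diagonal_multi_agent}, the Nash equilibrium of an independent one-asset, two-agent game on the same equidistant grid, with the same quadratic cost parameter $\theta$, and with decay kernel $G_i(t):=\lambda_i G(t)$. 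Because $\lambda_i>0$, each $G_i$ inherits from $G$ continuity, strict positivity, strict positive-definiteness, and log-convexity, so the assumptions of Theorem 2.7 of \cite{schied2018market} are satisfied for each virtual asset. That theorem then yields, for every $i$, a critical value $\theta_i^\star = G_i(0)/4 = \lambda_i G(0)/4$ below which the fundamental solutions $\bm{v}_i,\bm{w}_i$ of the $i$-th virtual game alternate in sign on consecutive trading times.

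Second, pick the index $i^\star$ realizing $\theta^\star = \max_i \lambda_i G(0)/4$ and assume $\theta<\theta^\star$. I need to exhibit physical inventories that generate oscillating strategies. Choose $X\in\mathbb{R}^{M\times 2}$ by setting $X=V\widetilde{X}$, where $\widetilde{X}$ has all rows zero except its $i^\star$-th row, which equals $(1,-1)$. Then the virtual inventories satisfy $X^{P}_{l,j}=0$ for $l\neq i^\star$ and $X^{P}_{i^\star,\cdot}=(1,-1)$, so by \eqref{eq_NASH_multi_asset_agent_virtual} the virtual equilibrium is identically zero for every $l\neq i^\star$ and reduces, for $l=i^\star$, to $\bm{\xi}^{*,P}_{i^\star,j,\cdot}=\pm\bm{w}_{i^\star}$, which alternates in sign by the previous step. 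Transforming back via $V$, the physical equilibrium reads $\bm{\xi}^{*}_{m,j,\cdot}=V_{m,i^\star}\,\bm{\xi}^{*,P}_{i^\star,j,\cdot}$, so every coordinate $m$ with $V_{m,i^\star}\neq 0$ inherits the alternating sign pattern. Since $V_{\cdot,i^\star}$ is a unit eigenvector, at least one such $m$ exists, so the physical market exhibits spurious oscillations and is therefore unstable.

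The main obstacle is conceptual rather than computational: one must make sure that an oscillation confined to a single virtual-asset coordinate survives the orthogonal change of basis $V$ and produces oscillations in some physical asset. The construction above is tailored precisely for this by ``switching off'' all other virtual coordinates through the careful choice of $\widetilde{X}$, so linearity of the map $\Xi^{*,P}\mapsto V\Xi^{*,P}$ does the rest. Minor technical checks---that $G_i=\lambda_i G$ preserves the hypotheses of the single-asset instability theorem, and that the sign alternation of $\bm{w}_{i^\star}$ is strict enough not to be destroyed by the scalar factor $V_{m,i^\star}$---are straightforward given $\lambda_i>0$ and $V_{m,i^\star}\neq 0$.
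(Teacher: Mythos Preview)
Your proposal is correct and follows essentially the same approach as the paper's proof: both diagonalize $Q$, identify the virtual asset corresponding to the largest eigenvalue $\lambda_{i^\star}$, pick opposite inventories equal to the eigenvector $\bm{\nu}_{i^\star}$ (your $X=V\widetilde X$ with $\widetilde X_{i^\star,\cdot}=(1,-1)$ is exactly $\bm{X}_{\cdot,1}=-\bm{X}_{\cdot,2}=\bm{\nu}_{i^\star}$), and then observe that the resulting equilibrium is $\bm{\nu}_{i^\star}\otimes\bm{w}_{i^\star}$, so the one-asset Schied--Zhang instability for $G_{i^\star}=\lambda_{i^\star}G$ transfers directly to the physical assets. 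Your write-up is in fact a bit more explicit than the paper in checking that $G_i=\lambda_i G$ inherits the log-convexity and positive-definiteness hypotheses and that a nonzero entry of $V_{\cdot,i^\star}$ guarantees survival of the sign alternation after the orthogonal change of basis.
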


% We remark that in our model $\theta$ is a general
% cost of trade related to the agents. 
% We investigate the case with heterogeneous $\theta$ with numerical 
% simulations in the next section. 
 Moreover, if the largest eigenvalue of the cross-impact matrix diverges for $M\to \infty$, i.e., $\lim_{M\to +\infty}\lambda_{max}=+\infty$, then 
the market is not asymptotically weakly stable.
The theorem tells that the instability of the market is related 
to the spectral decomposition of the 
cross-impact matrix, i.e. to the liquidity factors. 

We analyze some realistic 
cross-impact matrices and their implications 
for the stability of the Nash equilibrium. 
%\cite{schneider2018cross} have derived constraints on the structure of the cross-impact for the absence of dynamic arbitrage. They showed that the symmetry of the cross-impact matrix is one of these conditions. \cite{mastromatteo2017trading} estimated the cross-impact matrix on 150 US stocks showing that it is roughly symmetric and has a block structure with blocks related to economic sectors. 
%{\color{blue} Non mi sembra più necessario questo commento}"Therefore, we focus on symmetric positive definite cross-impact matrix. The positive definiteness is required in order to have a positive kernel on the virtual assets, recall Subsection \ref{solution_linear_cross} where it is shown that for each virtual assets $i$ the decay kernel is given by $G(t-t_k) \cdot \lambda_i$, for all  trading times $t_k$, where $\lambda_i$ is the $i$-th eigenvalue of $Q$." 
Specifically, we consider the one-factor and 
block matrices.
% We find 
% that when the number of assets increases the 
% equilibrium tends to be unstable independently from the choice of $\theta$.
%qui
\subsubsection{One Factor Matrix}\label{sec_1_fac}
We say that $Q$ is a one 
    factor matrix if
    $Q=(1-q)I+q\cdot  \bm{e}\bm{e}^T$, where $\bm{e}=(1,\ldots,1)^T \in \R^M$ and 
   $q\in (0,1)$ to guarantee the positive definiteness $Q$. As we have seen in Section \ref{sec:scaling}, $q$ can be a function of $M$.
  Then it holds:
   \begin{co} \label{co_1fac}
   Under the assumptions of Theorem \ref{te_inst},
  if the cross-impact matrix is a one factor matrix, then the market is not asymptotically weakly stable if $\lim_{M\to \infty} qM$ diverges.
   \end{co}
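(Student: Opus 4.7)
The plan is to reduce the statement directly to Theorem \ref{te_inst}, which rules out asymptotic weak stability whenever the largest eigenvalue of $Q$ diverges with $M$. My task therefore boils down to computing the spectrum of the one-factor matrix $Q=(1-q)I+q\bm{e}\bm{e}^T$ explicitly and tracking how $\lambda_{\max}$ behaves as $M\to\infty$.

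First I would diagonalize $\bm{e}\bm{e}^T$: it is a rank-one symmetric matrix whose only nonzero eigenvalue is $M$, attained on $\bm{e}/\sqrt{M}$, while the orthogonal complement of $\bm{e}$ supplies an $(M-1)$-dimensional eigenspace with eigenvalue $0$. Since $Q$ is a scalar shift of this matrix, its eigenvalues are obtained by adding $(1-q)$ to each one: $Q$ has a simple eigenvalue $(1-q)+qM$ on $\bm{e}$ and an $(M-1)$-fold eigenvalue $(1-q)$ on $\bm{e}^{\perp}$. For $q\in(0,1)$ all eigenvalues are strictly positive, so $Q$ is symmetric positive definite (and nonsingular), which places us squarely under the hypotheses of Theorem \ref{te_inst}.

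Next I would invoke Theorem \ref{te_inst} with $\lambda_{\max}=(1-q)+qM$. The stability threshold is
\[
\theta^*_M \;=\; \max_{i} \frac{G(0)\,\lambda_i}{4} \;=\; \frac{G(0)\bigl((1-q)+qM\bigr)}{4},
\]
and since $q>0$ we have $\theta^*_M\to+\infty$ as $M\to\infty$. Hence for any fixed $\theta\in\R_+$, the inequality $\theta<\theta^*_M$ eventually holds, so by the theorem the market develops spurious oscillations for all sufficiently large $M$. In particular, no fixed interval $I\subset\R_+$ can serve as a weak stability region uniformly as $M\to\infty$, which is precisely the negation of asymptotic weak stability.

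There is essentially no obstacle beyond this reduction: the only point worth verifying carefully is that the one-factor parametrization indeed satisfies the structural hypotheses of Theorem \ref{te_inst} (symmetry, positive definiteness, nonsingularity), and that the divergence of $\lambda_{\max}$ is strict rather than merely asymptotic---both of which follow immediately from $q\in(0,1)$ being bounded away from zero.
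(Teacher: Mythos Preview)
Your proof is correct and follows essentially the same approach as the paper: compute the eigenvalues of the one-factor matrix explicitly, observe that $\lambda_{\max}=(1-q)+qM\to\infty$ as $M\to\infty$, and invoke Theorem \ref{te_inst} to conclude that $\theta^*_M$ diverges, ruling out asymptotic weak stability. The only difference is that you spell out the rank-one diagonalization and the positive-definiteness check in more detail than the paper does.
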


   This implies that when $M$ increases and $q$ is independent from $M$, the transactions cost $\theta$ must raise in order to 
   prevent market instability, since $\theta^*=G(0)\lambda_{max}/4\sim G(0)qM/4$, because $\lambda_{max}=1+q(M-1)$. On the contrary, when, as in the multi-assset Kyle model of \cite{garcia2020multivariate}, it is $q=O\left(\frac{1}{M}\right)$, the market is asymptotically stable. Thus the market stability conditions critically depends on the scaling of market impact with $M$.
   
   Figure \ref{fig_fund_arbi_2e2} exhibits 
   the equilibrium for a Fundamentalist and an Arbitrageur, when
   $\theta=1.5$, $q=0.2$ and $M=2000$. The inventory of the Fundamentalist
   is $1$ for the first $1000$ assets and zero for the others.
   The solutions clearly show spurious oscillations of buy and sell 
   orders. Notice that in the one asset case this value of $\theta$ gives a stable market.
          \begin{figure}[t]
\centering
{\includegraphics[width=1\textwidth]{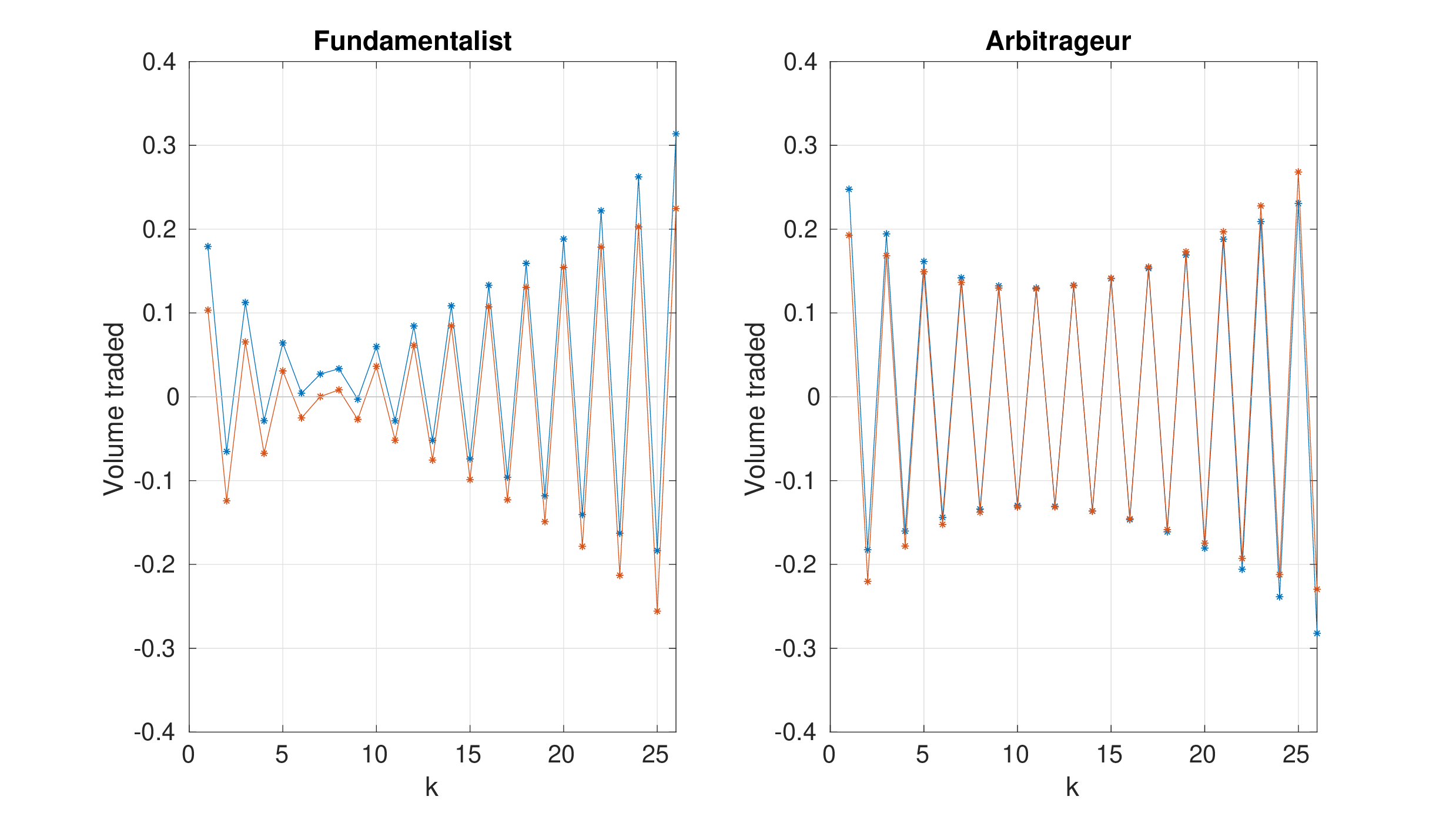}}
\caption{Nash equilibrium when
  $\theta=1.5$ between a Fundamentalist with  inventory
$(1,\ldots,1,0,\ldots,0)^T\in \R^M$ and an Arbitrageur with inventory
$(0,\ldots,0)^T\in \R^M$, where
$M=2,000$. The cross impact matrix is a one factor matrix with $q=0.2$. The blue lines exhibits the volume traded 
for any of the first $1,000$ assets, while the red ones are those for any of the last $1,000$ assets.
The equidistant time grid has 26 points. 
}\label{fig_fund_arbi_2e2}
\end{figure}
   We observe that
   the eigenvector corresponding to $\lambda_{max}$ is given by $\bm{e}$, which 
   represents an equally weighted 
   portfolio. As a consequence, if we consider a Market Neutral agent
   against an Arbitrageur the solution becomes stable $\forall \ \theta > (1-q)/4$, since both traders have zero inventory on the first virtual asset. Thus, oscillations might disappear when the inventory of the agents in the first virtual asset is zero.   
   %(see Figure \ref{fig_mn_arbi_2e2}). {\bf Non capisco molto questo punto e la sua utilita'}
   %{\color{blue} Questo serve per far capire che se ho degli agenti che hanno un inventario tale per cui la soluzione virtuale instabile (in questo caso $\bm{e}$) viene "annullata" allora ho delle soluzioni stabili. Quindi tutto dipende molto dai tipi di inventari che gli agenti hanno.Però concordo che al fine della nostra definizione di stabilità questo punto non è rilevante, serve più per capire da dove provengono queste oscillazioni.   }

A generalization of the above model considers $Q$ as a rank-one modification matrix, i.e. $Q= D+ \bm{\beta} \bm{\beta}^T$, where $D=\mbox{diag}(1- \beta_1^2 ,\ldots, 1- \beta_M^2)$  and $\bm{\beta}\in \R^M$ is a fixed vector. In this way the cross impact is not the same across all pairs of stocks. 
%However, this kind of cross impact has some clear limitations. For instance, we assume that all the stocks have the same cross impact with all the stocks in the universe. We study qualitatively the equilibrium with a generalization of the one factor matrix, the so called rank-one modification matrix  $Q= D+ \beta \beta^T$, where $\mbox{diag}(D)=[1- \beta_1^2 ,\ldots, 1- \beta_M^2]$ and $\beta\in \R^M$ is a fixed vector whose components may be represent a similarity metric, so that the cross-impact between two stocks is related to this similarity.
 We find again that the market is not asymptotically stable if the $\beta$s do not suitably scale with $M$, i.e. instability occurs when $\lim_{M\to \infty} \langle \beta^2_i\rangle M$ diverges, where $\langle ...\rangle$ is the average value over the $M$ values. %because the threshold increases with $M$. Differently from the previous case this is observed also in the case of a Market Neutral against an Arbitrageur\footnote{For the sake of simplicity we omit the figure which exhibits the strategies of the two traders and it is available upon request.}.

   \subsubsection{Block Matrix}
  We now assume 
    that the cross impact matrix has a block structure in such a way that cross impact between two stocks in the same block $i$ is $q_i$, while when the two stocks are in different blocks the cross impact is $q$,
    which we assume to be $0\leq q<q_i \ \forall i$. This is consistent with the empirical evidence in \cite{mastromatteo2017trading}, where blocks are in good correspondence with economic sectors.
    
   Let us denote with $M_i$ the number of stocks in block $i$, ($i=1,\ldots K$), and
    let $Q_i=(1-q_i)I+q_i\cdot  \bm{e}_i \bm{e}_i^T \in {\mathbb R}^{M_i}\times {\mathbb R}^{M_i}$ with $q_i\in(0,1)$ and $\bm{e}_i=(1,\ldots,1)^T \in \R^{M_i}$, where $K$
   is the number of blocks.
 We define the cross impact matrix as:
   \[
                            Q:=\begin{bmatrix}
      Q_1 &  q \bm{e}_1 \bm{e}_2^T & \cdots & q \bm{e}_1 \bm{e}_K^T\\
      q \bm{e}_2 \bm{e}_1^T &  Q_2 & \cdots & q\bm{e}_2 \bm{e}_K^T\\
      \vdots & & \ddots &  \vdots\\
      q \bm{e}_K \bm{e}_1^T & \cdots & q \bm{e}_K \bm{e}_{K-1}^T &Q_{K}\\
                              \end{bmatrix}.
                           \]
 %                          where $q,q_i\in \R$, $q\neq q_i$, are such that the off-diagonal elements of $Q$ are between 0 and 1, and $Q$ is positive definite, which is in line with the empirical evidence that the self-impact of a stock is always greater than the relative cross-impact, see \cite{mastromatteo2017trading} and \cite{schneider2018cross}.
 
%   If the average number 
%     of stocks of a cluster tends to infinity 
%     when $M$ goes to infinity, we prove 
%     an analogue result as for the one factor matrix case:

 In a similar way of Section \ref{sec_1_fac} $q_i$ might depend on the number of stocks in the $i$-th cluster, $M_i$.
 We prove 
     an analogue result as for the one factor matrix case:

    \begin{co}	\label{cor_block}
    Under the assumptions of Theorem \ref{te_inst},
 if $Q$ is a block matrix, where each block 
is a one factor matrix,
if (i) $\lim_{M_i\to +\infty} M_i (q_i -q) \to +\infty$
for all $i=1,2,\ldots,K$
and (ii) $\lim_{M\to+ \infty}\frac{M}{K}\to +\infty$, then 
the market is not asymptotically weakly stable.
\end{co}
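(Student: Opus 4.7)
The plan is to reduce the statement to a single spectral estimate, namely that $\lambda_{\max}(Q)\to\infty$ as $M\to\infty$, and then invoke Theorem \ref{te_inst}. Indeed, Theorem \ref{te_inst} gives the instability threshold $\theta^{*}=G(0)\lambda_{\max}(Q)/4$, and a weakly stable interval must be contained in $[\theta^{*},+\infty)$. Thus if $\theta^{*}\to\infty$, no fixed interval $I\subset\R_{+}$ can remain in the stability region for all $M$, and the market fails to be asymptotically weakly stable in the sense of the relevant definition.

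The main computation is therefore a Rayleigh-quotient lower bound on $\lambda_{\max}(Q)$. Let $M_{i^{*}}=\max_{i=1,\ldots,K} M_{i}$ be the size of the largest block and let $\bm{u}\in\R^{M}$ be the normalized indicator of that block, i.e.\ $\bm{u}$ has entries $1/\sqrt{M_{i^{*}}}$ on the coordinates belonging to block $i^{*}$ and zero elsewhere. Using the explicit form of the block matrix $Q$, one gets
\begin{equation*}
\bm{u}^{T} Q\, \bm{u} \;=\; \frac{1}{M_{i^{*}}}\sum_{k,l\in i^{*}} Q_{kl}\;=\;\frac{1}{M_{i^{*}}}\bigl[M_{i^{*}} + M_{i^{*}}(M_{i^{*}}-1)q_{i^{*}}\bigr] \;=\; 1+(M_{i^{*}}-1)q_{i^{*}},
\end{equation*}
because within block $i^{*}$ the diagonal entries are $1$ and all off-diagonal entries equal $q_{i^{*}}$. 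By the Rayleigh characterization of the largest eigenvalue of a symmetric matrix, $\lambda_{\max}(Q)\geq 1+(M_{i^{*}}-1)q_{i^{*}}$.

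To finish, use the hypothesis $M/K\to\infty$: by pigeonhole, $M_{i^{*}}\geq M/K$, so $M_{i^{*}}\to\infty$ as $M\to\infty$. Assuming (as implicit in the one-factor block structure) that the intra-block couplings $q_{i}$ are uniformly bounded below by some $q_{\min}>0$, this gives $\lambda_{\max}(Q)\geq 1+(M/K-1)q_{\min}\to\infty$, hence $\theta^{*}\to\infty$, which by Theorem \ref{te_inst} rules out asymptotic weak stability. This mirrors exactly the argument already used in Corollary \ref{co_1fac}, where for a single block the eigenvalue was $1+q(M-1)$; the block case merely localizes that blow-up inside the largest cluster.

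The only delicate point is the quantifier on the intra-block couplings: the Rayleigh estimate is driven by $q_{i^{*}}$, and if the sequence of block matrices were allowed to have $q_{i}\to 0$ faster than $M_{i}\to\infty$, the bound would degrade. In the spirit of the paper (and of the preceding results, where $q\in(0,1)$ is a fixed structural parameter of the cross-impact matrix) it is natural to assume $\inf_{i}q_{i}>0$ uniformly in $M$; under this reading the proof above is essentially immediate and the whole argument fits in a few lines.
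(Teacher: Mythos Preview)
Your proof is correct and takes a more elementary route than the paper. The paper splits off the rank-one inter-block part, writing $Q$ as a block-diagonal matrix $\widehat{Q}$ (with blocks $Q_i-q\,\bm{e}_i\bm{e}_i^T$) plus $q$ times the all-ones rank-one matrix, then invokes an eigenvalue monotonicity result (Theorem 8.1.8 of Golub and Van Loan) to obtain $\lambda_1(Q)\ge\lambda_1(\widehat{Q})=\max_i\bigl[1-q_i+M_i(q_i-q)\bigr]$, and finally argues by contradiction that some $M_i$ must be unbounded when $M/K\to\infty$. You instead test the Rayleigh quotient directly against the normalized indicator of the largest block, which yields the bound $\lambda_1(Q)\ge 1+(M_{i^*}-1)q_{i^*}$ (in fact tighter than the paper's by exactly $qM_{i^*}$) and makes the pigeonhole step $M_{i^*}\ge M/K\to\infty$ immediate rather than by contradiction. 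Your argument avoids both the matrix decomposition and the external eigenvalue theorem; the paper's, by contrast, makes the role of the inter-block coupling $q$ visible by peeling it off. Both proofs implicitly require the intra-block couplings not to degenerate---the paper never states $\inf_i(q_i-q)>0$ either, but its divergence step needs it just as yours needs $\inf_i q_i>0$---so your explicit flagging of this point is entirely appropriate.
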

       \begin{figure}[t]
\centering
{\includegraphics[width=1\textwidth]{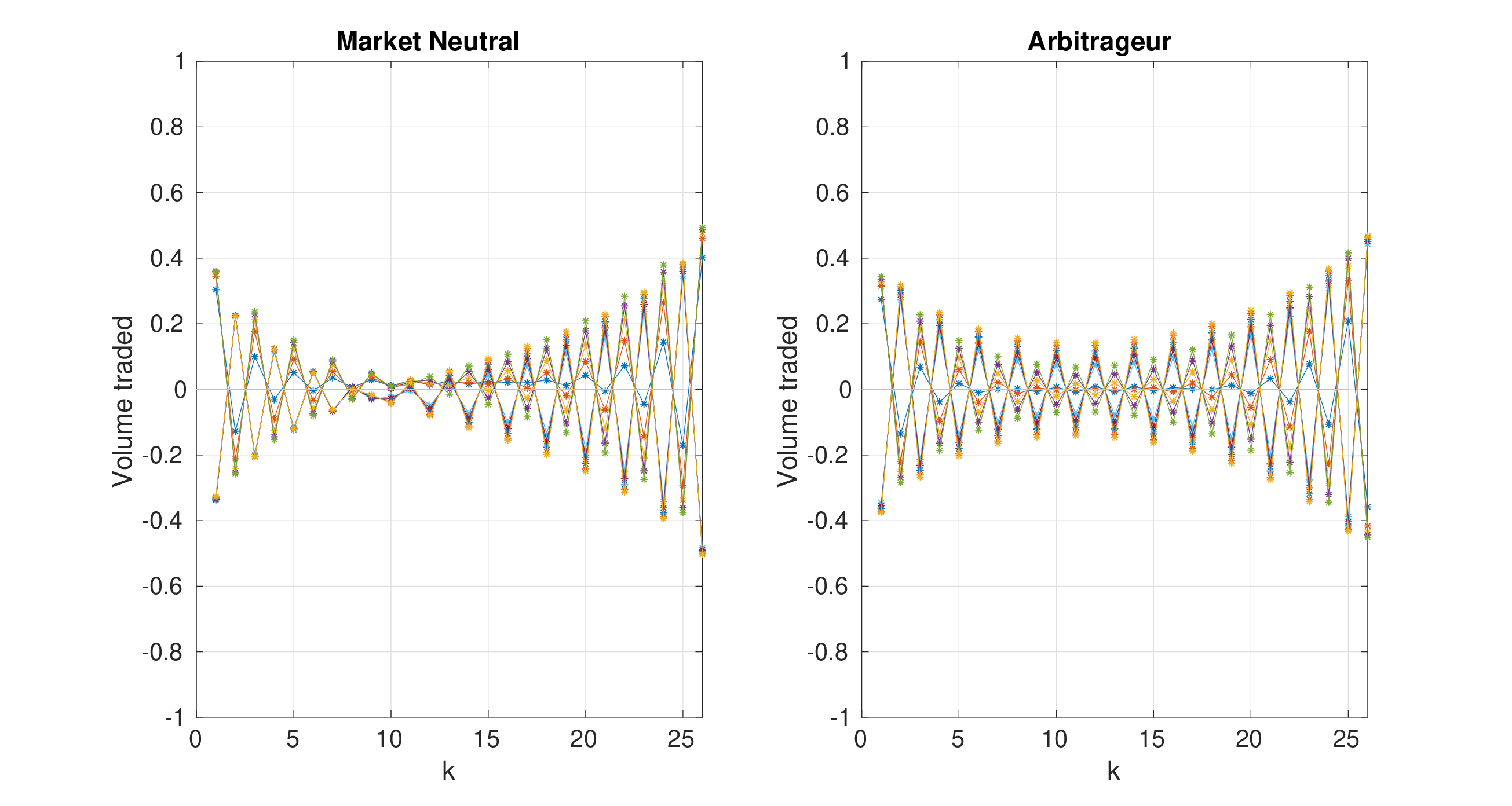}}
\caption{Nash equilibrium when
  $\theta=1.5$ with inventories for the Market Neutral
$X_0=(1,\ldots,1,-1,\ldots,-1)^T\in \R^M$ and
for the Arbitrageur $Y_0=(0,\ldots,0)^T\in \R^M$, where
$M=2000$. The cross impact matrix is a block matrix with $K=10$.
The figure exhibits the equilibria  related to one 
(the first) asset for each block.
 The trading time grid is an 
equidistant time grid with $26$ points. Each block has a cross-impact $q_i$ equal to $0.1, 0.2, \ldots, 0.9$ for $i=1,2,\ldots,9$ and $0.95$ for the last one.}
\label{fig_mn_arb_block_one}
\end{figure}      

When $(q_i-q)$ is independent from $M_i$, for all $i$,
and the average number 
     of stocks of a cluster tends to infinity 
     when $M$ goes to infinity, the transaction 
     costs level parameter must raise in order to prevent instability.
 As an example, we consider $K=10$ equally sized blocks from an universe of $M=2,000$ assets and set $q=0.05$ 
 where each block has a fixed cross-impact $q_i$. 
With this kind of cross impact matrix, we have $K$ large eigenvalues whose eigenvectors correspond to virtual assets displaying oscillations. The optimal trading strategies for stocks belonging to the same block are the same. Thus in Figure \ref{fig_mn_arb_block_one} we show the Nash equilibrium for the first asset in each of the $10$ blocks when the two agents are a Market Neutral and an Arbitrageur. The oscillations are evident, as expected, in all traded assets.
%This qualitative study of the Nash Equilibrium with a particular cross impact structure shows that the market tends to be intrinsically unstable when the number of traded assets increases. We have found that this instability may arise in specific prefixed inventory in the  simplest structures of cross impact. On the other hand, if we assume more complex and realistic cross impact effects, the instability naturally comes up from the Nash equilibrium. 

%\subsection{Numerical Analysis of Stability}

% We denote with $\theta^*$ the critical value of $\theta$ such that
% when $\theta\geq \theta^*$ the Nash equilibrium does not 
% exhibits oscillations, i.e., it is stable, and vice versa when 
% $\theta < \theta^*$ the equilibrium is unstable.
% The results of \cite{schied2018market}, Theorem 2 and Proposition 2 of 
% \cite{schied2018market}, allow to determine the critical value 
% of $\theta$ when the number of assets $M=1$. 
% In the multi-assets case, we can generalize their results by solving the equilibrium in the orthogonal base of the eigenvectors of $Q$ so that 
% \begin{equation}\label{theta_num}
% \theta^*=\max_{i=1,2,\ldots,M} G(0)\lambda_i/4,
% \end{equation}
% where 
% $\lambda_i$ are the eigenvalues of $Q$.

We now study how the critical value $\theta^*$ varies when the number of assets increases for different structures of the cross impact matrix and therefore of the liquidity factors.
Comparing different matrix structures is not straightforward since the critical value depends on the values of the matrix elements. To this end we consider the set of symmetric cross impact matrices of $M$ assets having one on the diagonal and fixed sum of the off diagonal elements. More precisely let $h\in \R$, then we introduce for each $M$ the set 
\[
\mathcal{A}_h^M:=
\{
A\in \R^{M\times M}| A^T =A,  \ \sum_{j=1}^N \sum_{i> j} a_{ij} =h, \  a_{ii}=1
\},
\]
One important element of this set is the cross impact matrix $Q_{1fac}\in {\mathbb R}^{M\times M}$ of a one factor model (see above) with off-diagonal elements equal to $2h/M(M-1)$. In Appendix \ref{app_1} we prove the following: \begin{te}\label{te_one_factor_largest}
For a fixed $h\in \R$, let us consider the related one-factor matrix $Q_{1fac}\in \mathcal{A}_h^M$, then
\[
 \lambda_1(Q)\geq \lambda_1(Q_{1fac}), \quad \forall Q \in \mathcal{A}_h^M,
\]
i.e. among all the matrices with one in the diagonal and constant sum of the off-diagonal terms, the one-factor matrix (i.e. where all the off-diagonal elements are equal) is one of the matrices with the smallest largest eigenvalue. 
\end{te}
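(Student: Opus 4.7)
The plan is to use the variational (Rayleigh--Ritz) characterization of the largest eigenvalue, with the constant vector $\bm{e}=(1,\ldots,1)^T$ as test vector, and then to observe that the one-factor matrix achieves equality in this bound because $\bm{e}$ is precisely its top eigenvector.

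First I would compute $\lambda_1(Q_{1fac})$ explicitly. Writing $Q_{1fac}=(1-q)I+q\,\bm{e}\bm{e}^T$, the condition that off-diagonal entries sum to $h$ above the diagonal fixes $q=2h/(M(M-1))$, and the eigenvalues are $1+q(M-1)=1+2h/M$ (with eigenvector $\bm{e}$) and $1-q$ with multiplicity $M-1$. So
\begin{equation*}
\lambda_1(Q_{1fac})=1+\tfrac{2h}{M}.
\end{equation*}

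Next, for an arbitrary $Q\in\mathcal{A}_h^M$, note that $\bm{e}^T Q\bm{e}=\sum_{i,j}Q_{ij}=M+2h$, since the diagonal contributes $M$ and the off-diagonal entries contribute $2h$ (each upper-triangular entry is counted together with its symmetric counterpart). The Rayleigh quotient at $\bm{e}$ therefore equals $(M+2h)/M=1+2h/M$, and by the standard variational principle for symmetric matrices,
\begin{equation*}
\lambda_1(Q)\;\geq\;\frac{\bm{e}^T Q\bm{e}}{\bm{e}^T\bm{e}}\;=\;1+\tfrac{2h}{M}\;=\;\lambda_1(Q_{1fac}).
\end{equation*}

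Since the bound is tight at $Q_{1fac}$, the proof is complete. There is no real obstacle here: the only point to double-check is that the normalization of the set $\mathcal{A}_h^M$ (sum over $i>j$ versus over $i\neq j$) is handled consistently, so that the one-factor representative indeed belongs to $\mathcal{A}_h^M$ with the claimed off-diagonal value $q=2h/(M(M-1))$. One could also remark that the argument shows more: any $Q\in\mathcal{A}_h^M$ having $\bm{e}$ as top eigenvector attains the minimum, so the minimizer is not unique, which matches the careful wording ``one of the matrices with the smallest largest eigenvalue'' in the statement.
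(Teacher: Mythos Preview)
Your proof is correct and follows essentially the same route as the paper: both use the Rayleigh--Ritz characterization with the constant vector $\bm{e}$ to obtain the lower bound $\lambda_1(Q)\ge 1+2h/M$ for every $Q\in\mathcal{A}_h^M$, and then verify that $Q_{1fac}$ attains this bound. Your closing remark on non-uniqueness of the minimizer also anticipates the paper's observation (made immediately after the proof) that a block-diagonal matrix with identical blocks achieves the same largest eigenvalue.
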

Moreover, we prove in the last part of Appendix \ref{app_1} that the previous is not a strict 
inequality, by showing that both a diagonal block matrix, with 
identical blocks, and the one-factor matrix have the same maximum eigenvalue. 
This theorem implies that among all the cross impact matrices belonging to $\mathcal{A}_h^M$, the one factor case is among the most stable cross-impact matrices. For example, it is direct to construct an example of a block diagonal cross impact matrix with non-zero off block elements (i.e. similar to what observed empirically) and to prove that its critical $\theta^*$ is larger than the critical value for the one factor matrix having the same value $h$ of total cross-impact.  

\subsection{Market stability in multi-agent and multi-asset market impact games}
  
  We now study how the stability of the market depends on the number of agents, $J$, the number of assets, $M$, the risk-aversion parameter $\gamma$, and the number of trading periods $N$.
  Specifically, we compute numerically the critical value of $\theta$ after which
   the market is not stable. However, we first 
   observe that to study the stability it is sufficient to analyse the fundamental solutions of each virtual assets.
   
   \subsubsection{Characterization of the fundamental solutions}
\label{sec_fund_solut_chara}
If all agents have the same inventory, i.e.,
$\bm{X}_{\cdot,j}=\bm{Z}\ \forall j$ where $\bm{Z} \in {\mathbb R}^M$ is a fixed inventory vector, then also the virtual inventories are all equal, since
$\bm{X}^P_{\cdot,j}=V^T \bm{Z} \equiv \bm{Z}^P \ \forall j$.
Then, $\overline{X}_{i,\cdot}^P=\frac{1}{J}\sum_{j=1}^J X_{i,j}^P=Z_i^P$ and by Eq. \eqref{eq_NASH_multi_asset_agent_virtual} the solution for all agent $j$ in virtual asset $i$ is given by
$\Xi_{i,j}^{*,P}=Z_{i,j}^P \bm{v}_i.$
So, let $V=\begin{bmatrix}
\bm{\nu}_1 |  \bm{\nu}_2|\cdots
|\bm{\nu}_M
\end{bmatrix}$ the matrix of eigenvectors of $Q$, which 
we may assume to be normalized, $\bm{\nu}_i^T \bm{\nu}_i=1$, 
if $\bm{X}_{\cdot,j}=\bm{\nu}_m \ \forall j$ then
the optimal schedule on the virtual assets is given
\[
\Xi_{i,j}^{*,P}=
\begin{cases}
 \bm{v}_m, & i=m\\
    \bm{0}, & \forall\ i\neq m
\end{cases}, \quad \forall j
\]
since $\bm{X}_{\cdot,j}^P=V^T \bm{X}_{\cdot,j}$ has $1$ in the $m$-th position and zero otherwise, so $
\Xi_{\cdot,j}^{*}=V\cdot \Xi_{\cdot,j}^{*,P}=\bm{\nu}_{m} \otimes \bm{v}_m, \ \forall j,
$
which means that the strategies for all traders 
is fully characterized by the fundamental solution $\bm{v}_m$. 

If $\overline{X}_{i,\cdot}=0, \ \forall i$ then 
$\overline{X}_{i,\cdot}^P=0$ and by equation \eqref{eq_NASH_multi_asset_agent_virtual} the solution for each agent $j$ is given by $\Xi^{*,P}_{i,j}=X_{i,j}^P \bm{w}_i, \ i=1,2,\ldots,M.$. Thus, as for the previous 
case, if the inventory of the $j$-th trader $\bm{X}_{\cdot,j}=\bm{\nu}_m$ (and if 
$\overline{X}_{i,\cdot}=0$ for all $i$), then his/her optimal schedule on the virtual assets is given by
\[
\Xi_{i,j}^{*,P}=
\begin{cases}
 \bm{w}_m, & i=m\\
    \bm{0}, & \forall\ i\neq m
\end{cases},
\] so that $
\Xi_{\cdot,j}^{*}=V\cdot \Xi_{\cdot,j}^{*,P}=\bm{\nu}_{m} \otimes \bm{w}_m,
$.

We summarize the previous results as follows:
\begin{itemize}
\item[a)] If all agents have the same inventories, i.e.
$\bm{X}_{\cdot,j}=\bm{\nu}_m\ \forall j$, then the Nash equilibrium for $j$ is proportional to $\bm{v}_m$, i.e,  $\Xi_{\cdot,j}^{*}=\bm{\nu}_{m} \otimes \bm{v}_m.$

\item[b)] If $\overline{X}_{i,\cdot}=0, \ \forall i$
and $\bm{X}_{\cdot,j}=\bm{\nu}_m$, then the Nash equilibrium 
for $j$ is proportional to $\bm{w}_m$, i.e,  $\Xi_{\cdot,j}^{*}=\bm{\nu}_{m} \otimes \bm{w}_m.$
\end{itemize}

We observe that, respectively, 
if $\bm{v}_m$, or $\bm{w}_m$, exhibits 
spurious oscillations also $\Xi_{\cdot,j}^{*}$ is affected by these oscillations, respectively.
  We recall that market is unstable 
if a particular initial inventories leads to optimal trading
strategies with spurious oscillations.
  So we can restrict the stability analysis on the 
  fundamental solutions among all assets.
  
   \subsubsection{Numerical analysis of stability}
   \label{numerical_analysis_stability}
   %commentare che nel caso 1 asset sappiamo che la stabilità dipende dalla matrice di cross impact--> e abbiamo risultati teorici.
   % nel caso multi -asset citare luo e dire che bisogna ricorrere a srumenti numerici
% From the results of Section \ref{sec_inst_comm_cross_impact} we known that market stability is affected by the cross-impact structure in a market with two risk-neutral agents. Thus, in 
In this section we study how $J$, $M$, $\gamma$, and $N$ 
%the number of agents $J$, the risk-averse parameter $\gamma$ together with the number of assets $M$ might 
affect the market stability 
  in the multi-agent and multi-asset case.
% We also examine the role of number of trading step $N$, even if we expect to have no role in stability, as also observed by \cite{luo_schied} for the one asset case.
In particular, we compute numerically $\theta^*$
   such that when $\theta<\theta^*$ the market is unstable. As observed in Section \ref{sec_fund_solut_chara} it is sufficient 
   to examine the oscillations of the fundamental solutions on 
   the virtual assets.
   
   We consider the following setting: %with $J$   risk-averse agents, where $\gamma$ is the risk-averse parameter, and $M$ assets:
  \begin{itemize}
      \item The time grid is equidistant $\T_N=\{\frac{kT}{N}|k=0,1,\ldots,N\}$, where $T=1$ and $N\in \N$;
      \item The decay kernel is exponential, $G(t)=e^{-t}$;
      \item The cross-impact matrix is a one factor matrix, $Q=(1-q)I_M+q \bm{e} \bm{e}^T$, where $q=1/2$;
      \item $\bm{S}_t^0$ follows a Bachelier model where the covariance matrix is equal to $Q$.
  \end{itemize}
 \cite{luo_schied}  conjectures that in the one-asset case $\theta^*$  satisfies 
 \[\sup_{N,\gamma} \theta^*(1,J,N,\gamma)=G(0)\cdot \frac{J-1}{4},\] 
 therefore, given the results of Section \ref{sec_inst_comm}, our conjecture is that 
 \begin{equation}\label{eq_conjecture}
 \sup_{N,\gamma} \theta^*(M,J,N,\gamma)=G(0)\cdot \frac{(J-1)\lambda_{max}}{4},
 \end{equation} where $\lambda_{max}$ is the maximum eigenvalue of $Q$.
 We recall that in the above setting, $\lambda_{max}=1+\frac{M-1}{2}$ and $G(0)=1$. Moreover we are assuming that $G$ and $Q$ do not depend explicitly on $J$ and $M$.
   Thus, in the first analysis we set $N=300$, $\gamma=10$ and we compute $\theta^*$ as a function of $M$ and $J$.
   Figure \ref{fig_theta*_M_J} exhibits the corresponding level curves.  
   It is worth noticing that the relation between $J$ and $M$ is very close to that of Equation \ref{eq_conjecture}.
   Indeed, the average relative discrepancy on $\theta^*$ is of the order of $10^{-3}$. 
       \begin{figure}[!t]
\centering
% \subfloat[][]
% {\includegraphics[width=0.75\textwidth]{}}\\
% \subfloat[][]
{\includegraphics[width=0.75\textwidth]{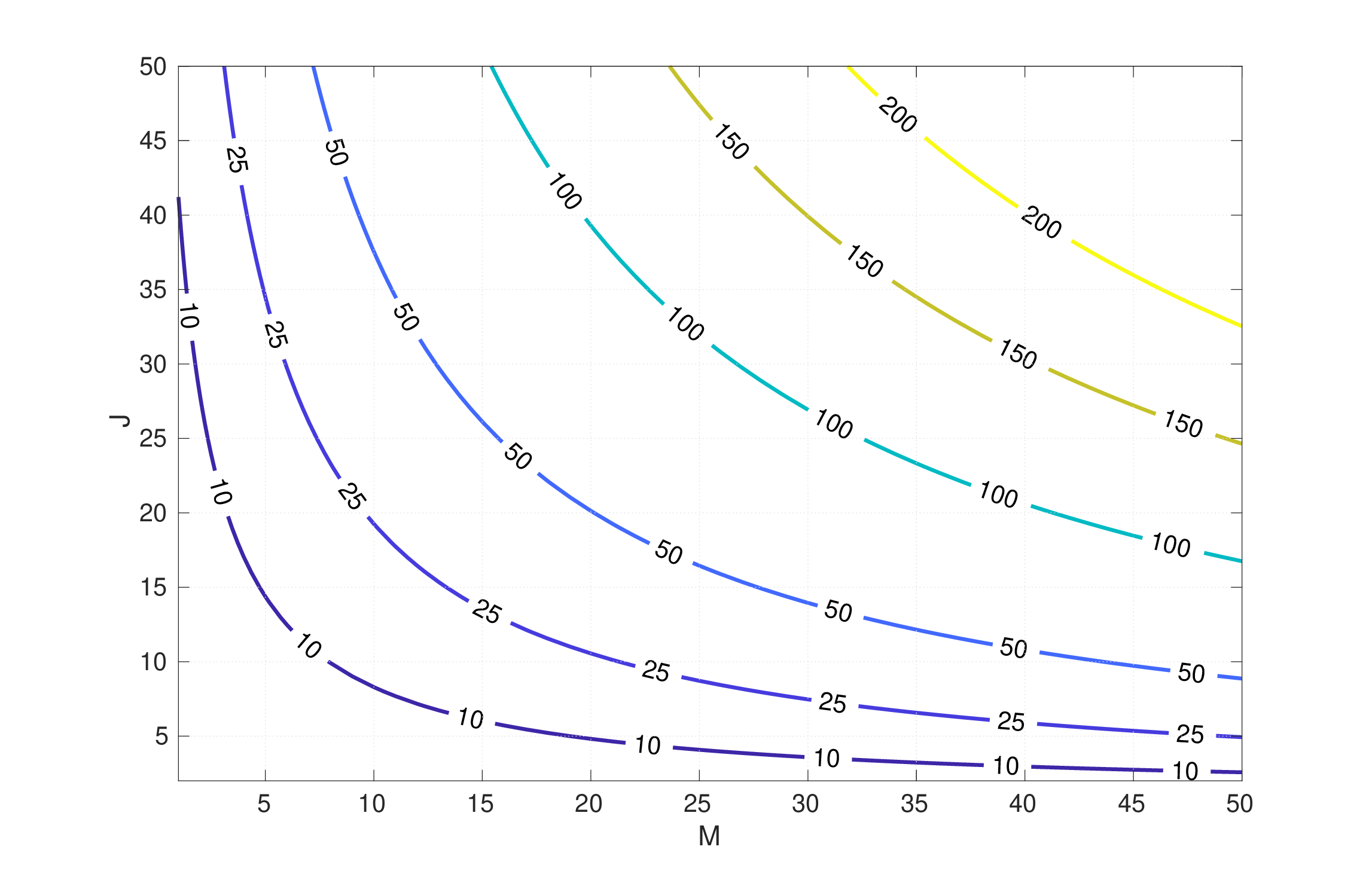}}
\caption{Level curves of $\theta^*$ in function of $M$ and $J$ for fixed $N=300$ and $\gamma=10$. The bottom left corner, corresponding to $M=1$ and $J=2$, is the case of \cite{schied2018market}.}
\label{fig_theta*_M_J}
\end{figure}
   Finally, we examine how $\theta^*$ 
   depends on $N$ and $\gamma$ for fixed $M$ and $J$, which are $M=J=11$, see Figure \ref{fig_theta*_M_J_fixed} which
   illustrates the related surface\footnote{We also compute the same surface for $M=J=3$ and $M=J=5$, and we obtain similar results, available upon request.}.
         \begin{figure}[!t]
\centering
{\includegraphics[width=0.75\textwidth]{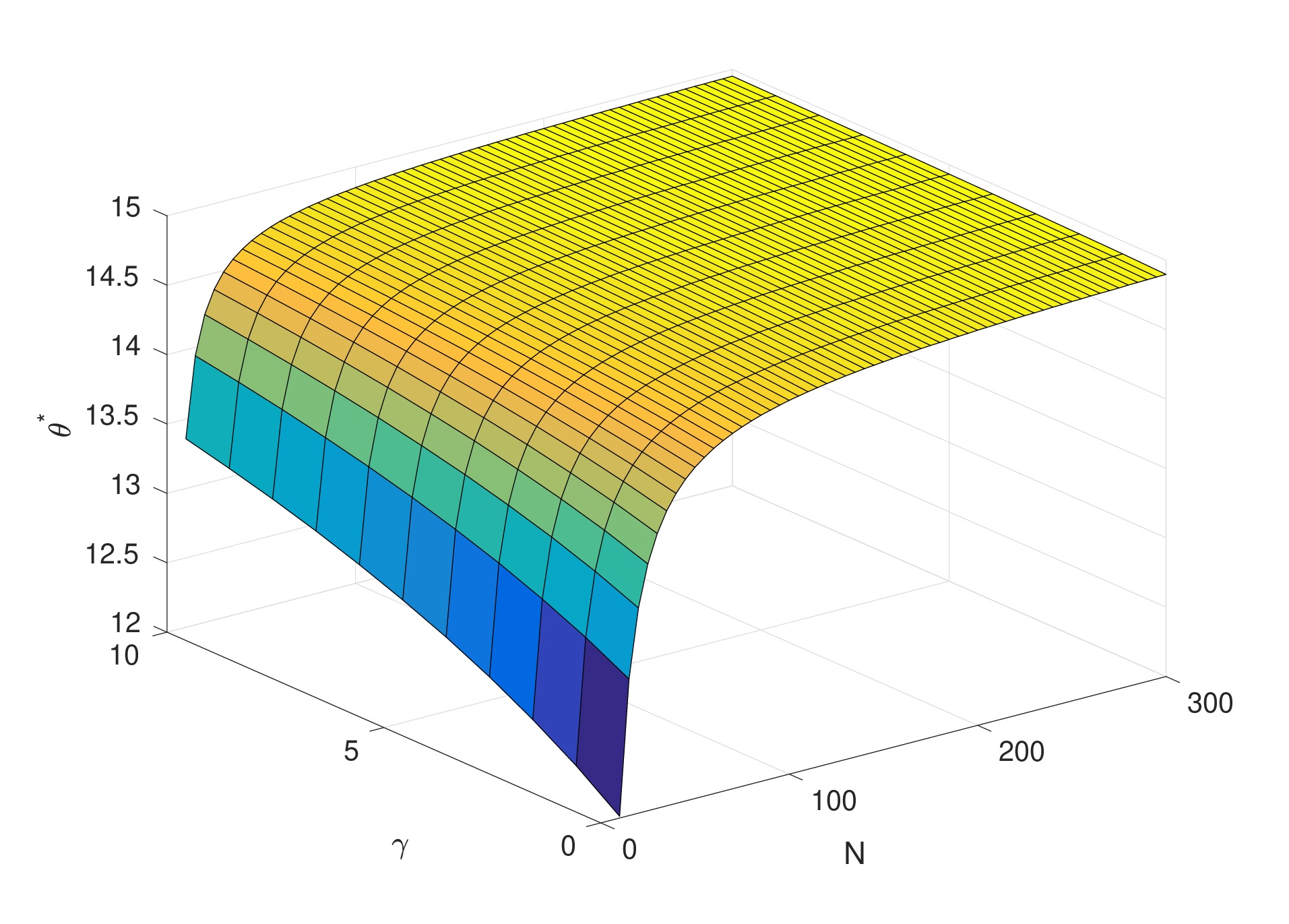}}
\caption{Surface plot of $\theta^*$ in function of $N$ and $\gamma$ for fixed $M=J=11$.}
\label{fig_theta*_M_J_fixed}
\end{figure} 
Overall, the numerical results suggests that for fixed $M$ and $J$ the relation \eqref{eq_conjecture} holds when $N$ is not too small, since for the chosen parameter Eq. \eqref{eq_conjecture} predicts $\theta^*=15$.
    
%  Therefore, under generic assumptions, when either the number of agents $J$ or of assets $M$ increase, market turns out to be unstable unless the transaction costs parameter $\theta$ increases appropriately.  
         
%         \begin{figure}[!t]
%\centering
%{\includegraphics[width=0.75\textwidth]{}}
%\caption{Plot dell'errore relativo}
%\end{figure} 

%   }

\subsubsection{
Scaling and market stability:
comparative statics for $J$ and $M$}
   \label{scaling_parame}

The conjecture in Eq. \ref{eq_conjecture} indicates that, if market impact does not suitably scale with the number of assets and of agents, market turns out to be unstable for large $J$ and $M$, unless the transaction costs parameter $\theta$ increases appropriately.

From Eq. \ref{eq_conjecture} it is clear that a sufficient condition for asymptotic stability, i.e. that there exists a finite value of $\theta^*$ above which the market is stable in a market with many agents and many assets, is that the two following limits are finite:
\begin{itemize}
    \item $\lim_{M\to \infty} \lambda_{max}$
    \item $\lim_{J \to \infty} JG(0)$
\end{itemize}
The first limit is finite, for example, in a multiasset Kyle model as in \cite{garcia2020multivariate} with a one factor structure for fundamental price and net order flow of noise traders (see Section \ref{sec:scaling}). For the second limit, we have empirical and theoretical evidences that the kernel scales with $J$ as $G(t)=J^{-\beta}\bar G(t)$. Clearly we would need $\beta \ge 1$ to have asymptotic stability in $J$. Multi agent Kyle models with finite variance of the fundamental price suggest $\beta=1/2$ (see \cite{Bagnoli} and Section \ref{sec:scaling}), while the empirical evidence is more ambiguous, due to the difficulty to identify empirically the trading activity of different agents. Thus it is still an open issue if this type of setting can provide asymptotically stable markets in the limit of large $J$. 

\cite{schied2018market} have highlighted the importance of transaction cost $\theta$ in determining whether the Nash equilibrium of a market impact game is stable or unstable. They prove the existence of a threshold value of $\theta$ below which the market is unstable. By extending the framework to many agents and assets, we have shown the importance of the scaling properties of market impact, which in these games is exogenous, in determining market stability. If the largest eigenvalue of the cross impact matrix diverges with $M$ or if the kernel goes to zero slower than $J^{-1}$ the market will be unstable for large values of $M$ or $J$, respectively.

\section{Conclusions}

In this paper we investigated the general problem of Nash equilibria in market impact games with an arbitrary number of assets and of agents. Specifically, we extended the results of \cite{schied2018market} and \cite{luo_schied} in several directions. We first considered a multi-asset market where we introduced the cross-impact effect among assets. We solved the Nash equilibrium, 
we analysed the optimal solution provided by the equilibrium, and we studied the impact of transaction costs
on liquidation strategies. We found that in the presence of cross impact it might be convenient to trade auxiliary assets in order to minimize market impact cost of a position a trader wants to liquidate. Thus the optimal execution problem should be handled intrinsically as a multi-asset problem.

We then used market impact games to investigate several potential determinants of market instabilities driven by finite liquidity and simultaneous trade execution of many agents. In addition to the existence of a transition between a stable and an unstable phase when the transaction cost is smaller than a given threshold (\cite{schied2018market}), we find that markets become asymptotically in $J$ and $M$ unstable when the impact does not scale suitably with these variables. On one side, these results set limits to the parameters of models that do not lead to instability and on the other, contributes to the theoretical and empirical literature on market impact in multi agent and multi-asset settings.

From the policy perspective, the conjecture above indicates that the critical 
transaction cost level $\theta$ below which instabilities are present grows with the impact coefficient $G(0)$ (or its scaled version), the number of traders $J$, and the largest eigenvalue $\lambda_{max}$ of the cross impact matrix. %The latter quantity is typically an increasing function of the number of assets $M$, for example when the cross-impact is described by aone-factor matrix. 
Thus, to ensure stability, the transaction cost parameter $\theta$ should be set taking into account the above variables, and be increased or decreased when they significantly change\footnote{In principle, regulators could also act on $G(0)$ by implementing measures making the market more liquid to individual trades, for example modifying the cost of limit orders.}.

%who wants to prevent instability. From the previous results and numerical analyses, a policy regulator has to monitor the cross-impact level together with the number of traded assets $M$ and traders $J$, in order to choose a suitable transaction cost level $\theta$.Moreover, when the cross-impact is described by aone-factor matrix, for instance when the policy regulator is interested to a specific market sector, the transaction cost level has to increase conveniently with the dimension of market, i.e., $M$ and $J$, so as to ensure stability.
Clearly, an increase of the transaction costs might discourage trading activity, therefore decreasing overall market participation and possibly price discovery. For example, in the one period multi-agent Kyle model of \cite{Bagnoli} the mean square deviation of the market price from the fundamental value goes to zero with the number of agents as $(J+1)^{-1}$. Thus regulators should fix transaction costs by balancing the contrasting objectives of increasing traders participation/price discovery and stabilizing markets.

\iffalse
An important aspect to consider in this trade-off is the way in which market impact of a single agent depends on the number of agents, i.e. what we modeled with the scaled impact $G^{sc}$, since $\beta$ affects significantly $\theta^*$. Despite some theoretical and empirical results are available (see Remark \ref{os_beta_scaling}), this is still an open issue, which is certainly worth of investigation. 
%On the other hand, if we assume that the single agent impact is negligible in comparison with the aggregate impact of all traders, e.g., by scaling the decay kernel, then the instability is in general mitigated, even if we have highlighted three possible scenarios.
A policy regulator
may decide to increase or reduce transaction costs to
stabilize market depending on the scaling parameter $\beta$.
{\color{blue}
Finally, 
a policy regulator should take into account a suitable scaling factor for the cross-impact terms
making market stability independent from an increase in the number of available assets.
}
\fi

       \clearpage

    \bibliography{bib2}

\clearpage       
% \processdelayedfloats
\clearpage 

\begin{appendices}
\titlelabel{\appendixname\ \thetitle.\quad}
\appendixtitleon
\section{Proofs of the results}\label{app_1}

 \begin{proof}[Proof of Lemma \ref{lem_NE_diagonal_multi_agent}]
    Since the cross-impact matrix is diagonal, 
each asset is not affected by the orders on other assets, i.e., the impact for each asset is provided only by the 
self-impact and there is no cross-impact effect.
% Then, the $M$ multi-asset
% market impact game is equivalent to consider $M$ stacked
% independent
% one-asset 
% market impact game, where 
% for each asset $i$ the decay kernel is scaled by the corresponding diagonal element of $D$, $\lambda_i$.
 In particular,
% since the cross-impact matrix is diagonal, 
$$
C_{\T}(\Xi_{\cdot,j,\cdot}|\Xi_{\cdot,-j,\cdot})=
\sum_{i=1}^M  C_{\T}(\bm{\xi}_{i,j,\cdot}|\Xi_{i,-j,\cdot}; G_i),
$$
where $C_{\T}(\bm{\xi}_{i,j,\cdot}|\Xi_{i,-j,\cdot}; G_i)$
is the liquidation cost of Definition \ref{de_cost_luo} where the decay kernel is multiplied by $\lambda_i$.
Moreover, the mean-variance functional can be splitted in the sum 
of mean-variance functionals of each asset $i$, i.e.,
\[
MV_{\gamma}(\Xi_{\cdot,j,\cdot}|\Xi_{\cdot,-j,\cdot})=
\sum_{i=1}^M MV_{\gamma}(\bm{\xi}_{i,j,\cdot}|\Xi_{i,-j,\cdot}; G_i),
\]
where $MV_{\gamma}(\bm{\xi}_{i,j,\cdot}|\Xi_{i,-j,\cdot}; G_i)$ is the mean-variance functional defined in equation \eqref{functional_luo_schied} with the related $C_{\T}(\bm{\xi}_{i,j,\cdot}|\Xi_{i,-j,\cdot}; G_i)$.
Indeed, $$MV_{\gamma}(\Xi_{\cdot,j,\cdot}|\Xi_{\cdot,-j,\cdot})=
\E[C_{\T}(\Xi_{\cdot,j,\cdot}|\Xi_{\cdot,-j,\cdot})]+\frac{\gamma}{2}
\mbox{Var}[C_{\T}(\Xi_{\cdot,j,\cdot}|\Xi_{\cdot,-j,\cdot})]$$ and
since $\E[\cdot]$ is a linear operator
$$E[C_{\T}(\Xi_{\cdot,j,\cdot}|\Xi_{\cdot,-j,\cdot})]=
\sum_{i=1}^M E[C_{\T}(\bm{\xi}_{i,j,\cdot}|\Xi_{i,-j,\cdot}; G_i)].$$
On the other hand, 
$\mbox{Var}[C_{\T}(\Xi_{\cdot,j,\cdot}|\Xi_{\cdot,-j,\cdot})]=
\mbox{Var}[\sum_{k=0}^{N} 
\< \bm{S}_{t_k}^\Xi, \bm{\xi}_{\cdot,j,k}\rangle]$ because $\Xi$ is deterministic. Let us denote
$Y_i=\sum_{k=0}^N S_{t_k,i}^\Xi \xi_{i,j,k}$, then
\begin{equation*}
\mbox{Var}\bigg[\sum_{k=0}^{N} 
\< \bm{S}_{t_k}^\Xi, \bm{\xi}_{\cdot,j,k}\rangle\bigg]=\mbox{Var}\bigg[\sum_{i=1}^M Y_i\bigg]=
\sum_{i=1}^{M}\mbox{Var}(Y_i)+\sum_{i\neq l}\mbox{Cov}(Y_i,Y_l).
\end{equation*}
However, if $\mbox{Cov}(Y_i,Y_l)=0$ for $i\neq l$, then $$\mbox{Var}[C_{\T}(\Xi_{\cdot,j,\cdot}|\Xi_{\cdot,-j,\cdot})]=
\sum_{i=1}^M \mbox{Var}[C_{\T}(\bm{\xi}_{i,j,\cdot}|\Xi_{i,-j,\cdot}; G_i) ],$$ 
where we used again that $\Xi$ is deterministic.
Therefore, the $M$ multi-asset 
market impact game with $J$ agents is equivalent to consider $M$ stacked
independent one-asset 
market impact game with $J$ agents, 
where the decay kernel
for each asset $i$  is scaled by the corresponding diagonal element of $D$, $\lambda_i$, which preserves the strictly positive definite property
since $\lambda_i>0\ \forall i$.
Thus,
for each asset $i$ and agent $j$ the existence, uniqueness and the closed formula of Nash Equilibrium $\bm{\xi}^*_{i,j,\cdot}$ for the mean-variance optimization
are straightforward 
from
Theorem 2.4 of \cite{luo_schied} where the decay kernel is multiplied by $\lambda_i$, respectively for each asset.
Moreover, since 
$MV_{\gamma}(\Xi_{\cdot,j,\cdot}|\Xi_{\cdot,-j,\cdot})=
\sum_{i=1}^M MV_{\gamma}(\bm{\xi}_{i,j,\cdot}|\Xi_{i,-j,\cdot}; G_i)$ we may conclude.
If $\bm{S}_{\cdot}^0$ follows a Bachelier model and $\Xi$ is deterministic, then
$C_{\T}(\Xi_{\cdot,j,\cdot}|\Xi_{\cdot,-j,\cdot})$ is a  Gaussian random variable, so that the mean-variance optimization and CARA expected utility maximization are equivalent over the class of deterministic strategies, indeed
$$ U_{\gamma}(\Xi_{\cdot,j,\cdot}|\Xi_{\cdot,-j,\cdot})=
u_{\gamma}(-MV_\gamma(\Xi_{\cdot,j,\cdot}|\Xi_{\cdot,-j,\cdot})),
%=
%u_{\gamma}\bigg(-\sum_{i=1}^M MV_{\gamma}(\bm{\xi}_{i,j,\cdot}|\Xi_{i,-j,\cdot})\bigg),
\quad \gamma >0,
$$
$$ \text{and } U_{0}(\Xi_{\cdot,j,\cdot}|\Xi_{\cdot,-j,\cdot})=
-\E[C_{\T}(\Xi_{\cdot,j,\cdot}|\Xi_{\cdot,-j,\cdot})], \quad \gamma =0.
$$
On the other hand, following the same reasoning of the proof of Theorem 2.4 of \cite{luo_schied}, when $\Xi_{\cdot,-j,\cdot}$ are deterministic,
% the strategies of the mean-variance optimization 
% problem
% form 
% a Nash equilibrium for CARA expected utility maximization.
from Theorem 2.1 of \cite{schied2010optimal}
if there exists a deterministic strategy $\Xi_{\cdot,j,\cdot}^*$ which maximizes 
the expected utility functional
$U_{\gamma}(\Xi_{\cdot,j,\cdot}|\Xi_{\cdot,-j,\cdot})$, 
over the class of deterministic strategies, 
then $\Xi^*_{\cdot,j,\cdot}$ is also a maximizer for 
the expected utility functional
 within the class  of all adapted strategies.
Then, we may use the same argument of Corollary 2.3 of \cite{schied2017state} to conclude that 
the Nash equilibrium for the mean-variance optimization problem form a Nash equilibrium for CARA expected utility maximization.

So, it remains to show that if $\bm{S}_{\cdot}^0$ has uncorrelated components, then 
$\mbox{Cov}(Y_i,Y_l)=0$ for $i\neq l$, where $Y_i=\sum_{k=0}^N S_{t_k,i}^\Xi \xi_{i,j,k}$.
However, ${S}_{t,i}^{\Xi}={S}_{t,i}^0 -\sum_{t_k<t} G(t-t_k)\cdot \sum_{j=1}^J (Q
 \cdot \bm{\xi}_{\cdot,j,k})_i$, where $(Q
 \cdot \bm{\xi}_{\cdot,j,k})_i$ denotes the $i$-th component of $Q\cdot \bm{\xi}_{\cdot,j,k}$, then
 \[ 
 \begin{split}
 Y_i&=\sum_{k=0}^N \bigg[{S}_{t_k,i}^0 \xi_{i,j,k}-\bigg(\sum_{t_k<t} G(t-t_k)
 \cdot \sum_{j=1}^J (Q
 \cdot\bm{\xi}_{\cdot,j,k})_i\bigg) \xi_{i,j,k}\bigg]\\
 &=\sum_{k=0}^N \bigg[{S}_{t_k,i}^0 \xi_{i,j,k}\bigg]-
 \sum_{k=0}^N\bigg[
 \bigg(\sum_{t_k<t} G(t-t_k)
 \cdot \sum_{j=1}^J (Q
 \cdot\bm{\xi}_{\cdot,j,k})_i\bigg) \xi_{i,j,k}\bigg]
 \end{split}
 \]
 so since $\Xi$ is deterministic and using the martingale property of $\bm{S}_{\cdot}^0$,
\[
\begin{split}
\mbox{Cov}(Y_i,Y_l)&=
\mbox{Cov}\bigg(\sum_{k=0}^N S_{t_k,i}^0 \xi_{i,j,k},\sum_{h=0}^N S_{t_h,l}^0 \xi_{l,j,h}\bigg)=\\
&=\E\bigg[ \sum_{k,h=0}^N S_{t_k,i}^0 S_{t_h,l}^0 \xi_{i,j,k} \xi_{l,j,h}\bigg]-\E\bigg[ \sum_{k=0}^N S_{t_k,i}^0 \xi_{i,j,k}\bigg]
\E\bigg[\sum_{h=0}^N S_{t_h,l}^0  \xi_{l,j,h}\bigg]\\
&=\sum_{h,k=0}^N \xi_{i,j,k} \xi_{l,j,h} \mbox{Cov}(S_{t_k,i}^0,S_{t_h,l}^0)=\sum_{h,k=0}^N \xi_{i,j,k} \xi_{l,j,h} \mbox{Cov}(S_{{t_k\wedge t_h},i}^0,S_{{t_k\wedge t_h},l}^0)
\end{split}
\]
which is zero if the components of $\bm{S}_{\cdot}^0$ are uncorrelated.
  \end{proof}

\begin{proof}[Proof of Theorem \ref{te_NE_multi_asset_multi_agent}]

  Let $Q=VDV^{T}$ be the spectral decomposition of $Q$, where, since $Q$ is symmetric,
  $V$ is orthogonal and $D$ is the
diagonal matrix which contains the eigenvalues of $Q$.
  By 
  Assumptions \ref{ass_cross_impact} $\mbox{Cov}(\bm{P}_t^0)=V^T \Sigma V$ is diagonal, so by Lemma \ref{lem_NE_diagonal_multi_agent}
there exists the Nash Equilibrium $\Xi^{*,P}\in 
  \mathscr{X}_{\det}(X^P,\T)$,
for each inventory $X^P$  associated to the orthogonalized virtual assets $\bm{P}_{t}=V^T \bm{S}_{t}$.
Moreover, if $\bm{S}_t^{0}$ follows a Bachelier model then also 
$\bm{P}_t^0$ follows a Bachelier model and
$\Xi^{*,P}$ is 
also a Nash equilibrium for the CARA expected utility maximization for Lemma \ref{lem_NE_diagonal_multi_agent}.
Therefore, to proof that $ \Xi^{*}$, where
$ \Xi_{\cdot,j,\cdot}^{*}= V\Xi_{\cdot,j,\cdot}^{*,P}$, is the Nash
Equilibrium is sufficient to show that the liquidation cost 
$C_{\T}({\Xi_{\cdot,j,\cdot}}|\Xi_{\cdot,-j,\cdot})$, when the cross impact matrix is $Q$,
is equivalent to 
$C_{\T}({\Xi_{\cdot,j,\cdot}^P}|\Xi_{\cdot,-j,\cdot}^P)$, when the cross impact is $D$, 
where the equivalence map is provided by $V^T.$
Writing explicitly for each trading time step $k$ the liquidation cost formula we have, since $V$ is orthogonal,

\[
\begin{split}
  C_{\T}({\Xi_{\cdot,j,\cdot}}|\Xi_{\cdot,-j,\cdot})]&=
  \sum_{k=0}^{N} \bigg (
 \frac{G(0)}{2}\<Q\bm{\xi}_{\cdot,j,k},\bm{\xi}_{\cdot,j,k}\rangle- 
 \<\bm{S}_{t_k}^{\Xi}, \bm{\xi}_{\cdot,j,k}\rangle+\\&+
 \frac{G(0)}{2}\sum_{l\neq j}
 \<Q\bm{\xi}_{\cdot,l,k},\bm{\xi}_{\cdot,j,k}\rangle+\theta\ \< \bm{\xi}_{\cdot,j,k},\bm{\xi}_{\cdot,j,k}    \rangle.
 \bigg)\\
 \end{split}
\]
\[\begin{split}
 &=\sum_{k=0}^{N} \bigg (
 \frac{G(0)}{2}\<DV^T\bm{\xi}_{\cdot,j,k},V^T\bm{\xi}_{\cdot,j,k}\rangle- 
 \<V^T\bm{S}_{t_k}^{\Xi},V^T \bm{\xi}_{\cdot,j,k}\rangle+\\&+
 \frac{G(0)}{2}\sum_{l\neq j}
 \<DV^T\bm{\xi}_{\cdot,l,k},V^T\bm{\xi}_{\cdot,j,k}\rangle+\theta\ \< V^T\bm{\xi}_{\cdot,j,k},V^T\bm{\xi}_{\cdot,j,k}    \rangle.
 \bigg)\\
  &=\sum_{k=0}^{N} \bigg (
 \frac{G(0)}{2}\<D\Xi^P_{\cdot,j,k}, \Xi^P_{\cdot,j,k}\rangle- 
 \< \bm{P}_{t_k},  \Xi^P_{\cdot,j,k}\rangle+\\&+
 \frac{G(0)}{2}\sum_{l\neq j}
 \<D \Xi^P_{\cdot,l,k}, \Xi^P_{\cdot,j,k}\rangle+\theta\ \<  \Xi^P_{\cdot,j,k}, \Xi^P_{\cdot,j,k}    \rangle.
 \bigg)\\
 &=C_{\T}({\Xi_{\cdot,j,\cdot}^P}|\Xi_{\cdot,-j,\cdot}^P).
\end{split}
\]
Finally,
in order to obtain that $\Xi^{*}$ is admissible for $X$, it is sufficient to 
  set $X^P=V^T X$ .
\end{proof}

   \begin{proof}[Proof of Corollary \ref{cor_riks_neutrals}]
     As observed in Remark \ref{os_risk_neutral_agents} the mean-variance functional is splitted as the sum of 
     mean-variance functionals of each asset $i$, since when $\gamma=0$
     the functional is restricted to the expected cost. 
     Then, the existence of the Nash equilibrium for the virtual 
     orthogonalized assets follows by Lemma \ref{lem_NE_diagonal_multi_agent} without requiring the assumptions of uncorrelated assets and the proof follows directly by the same reasoning of the proof of Theorem \ref{te_NE_multi_asset_multi_agent}.
    Moreover by definition, when $\gamma=0$ the CARA utility function is equal to the mean-variance functional, so that $\Xi^*$ is a Nash equilibrium over the set $\mathscr{X}(X,\T)$. 
    %  {\color{blue}Si capisce o devo specificare meglio? magari con un lemma preliminare?} \textcolor{red}{Secondo me e' chiaro}
   \end{proof}

\begin{proof}[Proof of Proposition \ref{pr_kick_out_arbi}]
Let the $j$-th trader be an Arbitrageur, i.e., $\bm{X}_{\cdot,j}=\bm{0}\in \R^M$.
Moreover, his/her inventory for the virtual assets is zero, $X_{i,j}^P= \sum_{m=1}^M V_{i,m}^T X_{m,j}=0$ for each $i=1,2,\ldots,M.$ Then, since
 for Theorem \ref{te_NE_multi_asset_multi_agent} Eq. \eqref{eq_NASH_multi_asset_agent_virtual}
 provides the optimal schedule 
 on each virtual assets $i$, the 
 optimal schedule of the
 Arbitrageur for the $i$-th virtual asset is characterized by the corresponding $\overline{X}_{i,\cdot}^P$.
 
 a) $\Rightarrow$ b).
     If  $\overline{X}_{i,\cdot}=0, \ \forall i$ then 
 $$\overline{X}_{i,\cdot}^P=\frac{1}{J}\sum_{j=1}^J X_{i,j}^P=
 \frac{1}{J}\sum_{j=1}^J \sum_{m=1}^M V_{i,m}^T X_{m,j}=
 \sum_{m=1}^M V_{i,m}^T \overline{X}_{m,\cdot}=0 \quad \forall i.$$
 So, the solution of the Arbitrageurs for each virtual assets is zero and hence also for the original assets by Theorem \ref{te_NE_multi_asset_multi_agent}.

    b) $\Rightarrow$ a).
    If the optimal solution for an Arbitrageur is zero for all assets, then by Theorem \ref{te_NE_multi_asset_multi_agent} and since $V$ is orthogonal, the optimal solution
    for the Arbitrageur is zero also for the virtual assets, so that $\overline{X}_{i,\cdot}^P=0 \ \forall i$ and then
    $\overline{X}_{i,\cdot}=0 \ \forall i$.
\end{proof}
 
 \begin{proof}[Proof of Theorem \ref{te_inst}]
% We first observe that when we transform the real assets 
% to the virtual ones also the inventories of the traders 
% are transformed.
% Indeed, if $\bm{P}_t$ denotes the virtual assets,
% \[
% \bm{P}_t=\bm{P}_t^0-\sum_{t_k<t} G(t-t_k) D V^T(\bm{\xi}_k+\bm{\eta}_k)=
% \bm{P}_t^0-\sum_{t_k<t}G(t-t_k) D (\bm{\xi}^{P}_k +\bm{\eta}^{P}_k),
% \]
% where $V$ is the matrix where the columns are the 
% eigenvectors of $Q$, $D$ is the diagonal matrix which contains the associated eigenvalues, and 
% $\bm{\xi}_k^P$ and $\bm{\eta}_k^P$ are the orders 
% of the two traders related to the virtual assets.

% 
Let $\bm{X}_1,\bm{X}_2$ be the inventories of 
trader first and second trader, respectively.
In order to show that market is unstable it is sufficient
to exhibit initial inventories which leads to 
optimal trading strategies with spurious oscillations.
WLOG we may assume that inventories are normalized to $1$, i.e., $\bm{X}_1^T \bm{X}_1=\bm{X}_2^T \bm{X}_2=1.$
 Therefore, let us consider $\bm{X}_1=-\bm{X}_2$,
so that $\bm{X}_1^P=V^T\bm{X}_1=-V^T\bm{X}_2=
-\bm{X}_2^P$ and the NE for the $i$-th virtual assets is 
fully characterized by the fundamental solutions $\bm{w}_i$.
So, for each 
virtual asset the instability is lead by the correspondent virtual kernel, i.e., the  kernel relative to the $i$-th virtual asset which is given by 
$G\cdot \lambda_i$, where $\lambda_i$ is the 
related $i$-th eigenvalues. Then, for the Schied and Zhang instability result we know 
that if we want non oscillatory solutions, $\theta$ has to be greater than $G(0)\cdot \lambda_i /4$ for all $i$. 
% This means that if one of the eigenvalues of $Q$
% increases with the number of assets, also 
% $\theta$ has to increases in order to 
% satisfy the stability condition of \cite{schied2018market}.
However, 
if $\bm{\nu}_i$ denotes the $i$-th eigenvector of $Q$, which may be assumed normalized $\bm{\nu}_i^T\bm{\nu}_i=1$, then 
when $\bm{X}_1=\bm{\nu}_i$ the virtual inventory  
$\bm{X}_1^P$ has $1$ in the $i$-th component and zero otherwise. Then, $\Xi^{*,P}$ is a matrix where the 
$i$-th row is equal to $\bm{w}_i^T$ and zero otherwise.
Therefore,  
\[
\Xi^{*}=V\cdot \Xi^{*,P}=
\begin{bmatrix}
\bm{\nu}_1 |  \cdots
\bm{\nu}_{i-1}|\bm{\nu}_i|\bm{\nu}_{i+1}|\cdots
|\bm{\nu}_M
\end{bmatrix}\cdot \Xi^{*,P}=
\begin{bmatrix}
\nu_{1,i}\bm{w}_i^T \\
\vdots \\
\nu_{M,i} \bm{w}_i^T
\end{bmatrix}=\bm{\nu}_{i} \otimes \bm{w}_i,
\]
i.e. the NE for the $j$-asset is 
given by $v_{j,i} \bm{w}_i$, so also the stability 
for the original asset $\bm{S}_t$ is characterized by
 $\bm{w}_i$. Then, if
  $\theta< \theta^* =\max_{i=1,2,\ldots,M} \frac{G(0)\cdot
\lambda_i}{4}$ and $i_{\max}$ denotes the position of 
the maximum eigenvalue, the NE for inventories
$\bm{X}_1=-\bm{X}_2=\bm{\nu}_{i_{\max}}$ exhibits spurious 
oscillations. 
% it is sufficient to set 
% the inventory $\bm{X}_0$ equals to the
%  in order to prevent instability
% we have to require that all $\bm{w}_i$ are stable, i.e.,

% Let us denote with $\lambda^*$ the eigenvalue of 
% $Q$ which diverges when $M\to \infty$. WLOG we may assume that $\lambda_1=\lambda^*$, the first 
% eigenvalue of $Q$ and we denote with $\bm{v}_1$
% the associated eigenvector. Then, 
% there exists an inventory $\bm{X}^0$ s.t. 
% $\bm{v}_1^T \bm{X}^0\neq0$, since $\bm{v}_1\neq\bm{0}$, so that 
% the solution of the first virtual asset is not trivial. Therefore, when $M\to \infty$ this solution 
% exhibits spurious oscillations $\forall\ \theta \in \R$ and we may conclude.
\end{proof}

\begin{proof}[Proof of Corollary \ref{co_1fac}]
   The eigenvalues of $Q$ are 
   $\lambda_1= 1-q+qM$ and $\lambda_{2:M}=1-q$, 
   where $\bm{v}_1=\bm{e}$, the vector with all 
   1, is the virtual asset associated with $\lambda_1$. Then, when 
   $qM\to \infty$ the first eigenvalue diverges so
   for Theorem \ref{te_inst} we conclude.
   \end{proof}
   
   \begin{proof}[Proof of Corollary \ref{cor_block}]
We first note that by Theorem \ref{te_inst} it is sufficient to prove that there exists a cluster which is unbounded. Indeed, we observe that 
    % We observe that 
 \[
                            Q=
                            \widehat{Q}+
                              q\begin{bmatrix}
                              \bm{e}_1 \\ 
                              \bm{e}_2\\ \vdots \\
                              \bm{e}_K
                              \end{bmatrix}
                              \begin{bmatrix}
                              \bm{e}_1 & 
                              \bm{e}_2& \cdots &
                              \bm{e}_K
                              \end{bmatrix}
                            \]
                            where
                            \[
                           \widehat{Q}= \begin{bmatrix}
      Q_1-q \bm{e}_1\bm{e}_1^T &  0 & \cdots & 0\\
      0 &  Q_2-q \bm{e}_2\bm{e}_2^T & \cdots & 0\\
      \vdots & & \ddots &  \vdots\\
      0 & \cdots & 0 &Q_{K}-q \bm{e}_K\bm{e}_K^T\\
                              \end{bmatrix}.
                           \]
                           % tau/M^2
                           % 
                           % 
                          Then by Theorem 8.1.8 pag.443 of \cite{golub13} 
                          $\lambda_1(Q)\geq \lambda_1(\widehat{Q})$
                        %   there
                        %   exists $m_i\geq0$ for $i=1,2,\ldots,M$, s.t.
                        %   $\lambda_i(Q)=\lambda_i(\widetilde{Q})+m_i \cdot \frac{q}{M}$,
                        %   with $m_1+m_2+\cdots+m_M=1$,
                           where $\lambda_i(Q)$ denotes the $i$-th largest eigenvalue of $Q$ and 
                           respectively of $\widehat{Q}$.
                          The eigenvalues of $\widehat{Q}$ are given 
                           by the eigenvalues
                           of $Q_i-q \bm{e_i}\bm{e_i}^T$ for $i=1,2,\ldots,K$
                           and for each $i$,
                           $\lambda_{1}(Q_i-q \bm{e_i}\bm{e_i}^T)=1-q_i+M_i(q_i-q)$ and the rests $M_i -1$ eigenvalues are equal to $1-q_i$. 
                           So, if there exists a cluster such that $M_i$ is unbounded for any value of $\theta$, then $\lambda_1(Q_i-q\bm{e}_i\bm{e}_i^T)$ is unbounded, thanks to (i), and also the respective eigenvalue of $Q$, so 
                           by 
                           Theorem \ref{te_inst} we conclude that there is no a finite value for $\theta$ such that the market is weakly stable.
                            
                           So, let us first start by fixing the number of cluster to $K<\infty$.
                            Then, when $M$ tends to infinity 
    at least one of the clusters will increase
    to infinity, which means that 
    there exists a cluster $i$ such that $\lambda_{1}(Q_i-q \bm{e_i}\bm{e_i}^T)\to \infty$, thanks to (i), and also
  the respective eigenvalue of $Q$ goes to infinity.
    Therefore, we conclude for Theorem \ref{te_inst}.
    
   For the general case we conclude by contradiction. If
    $K(M)$ is the number of cluster for a fixed $M$, and $K(M)\to \infty$ when $M\to \infty$
    then the set $\{M_i : i \in \N \}$ is unbounded. Indeed, if 
    $\sup_{i\in \N}{M_i}=S<\infty$, then 
    the average number of stocks in a cluster is $\frac{\sum_{i=1}^{K(M)}M_i }{K(M)}\leq 
    S$ for all $M$ and this is in contradiction with the assumptions that 
    $\lim_{M\to+ \infty}\frac{M}{K(M)}\to +\infty$.
    So since $\{M_i : i \in \N \}$ is unbounded we conclude that there is no finite value of $\theta$ such that it is greater than all the eigenvalues of $Q$ when $M\to \infty.$
    % For the general case we conclude by contradiction showing that there exists at least one 
    % of cluster which its dimension goes to infinity when $M\to \infty$. 
    % If we suppose that the number of stocks in a cluster is finite for all clusters
    % when $M\to \infty$,
    % then also the average number of stocks in a sector must be finite{\color{red}}. 
    % However the average number of stocks is given by $\frac{M}{K}$
    % for a fixed $M$, but we assume that 
    % $\lim_{M\to+ \infty}\frac{M}{K}\to +\infty$ .
    % Therefore, there exists at least one cluster such that 
    % its dimension diverges when $M$ goes to infinity and we may conclude as in the previous case.
\end{proof}

\begin{proof}[Proof of Theorem \ref{te_one_factor_largest}]
  The largest eigenvalue of a symmetric $M\times M$ matrix $Q$ can be defined as
$$
\lambda_{1}(Q)=\max_{\bm{x}\ne 0}\frac{\bm{x}^T Q \bm{x}}{\bm{x}^T \bm{x}}.
$$
If we consider the vector $\bm{e}=(1,1,...,1)^T$, we have the lower bound
$$
\lambda_{1}(Q)\ge \frac{\bm{e}^T Q \bm{e}}{\bm{e}^T \bm{e}}=\frac{\sum_{i,j}q_{ij}}{M}.
$$
The largest eigenvalue of a generic matrix $Q \in {\cal A}_h^M$ is then bounded by
$$
\lambda_{1}(Q)\ge 1+ \frac{2h}{M}.
$$
But the one-factor matrix $Q_{1fac}=(1-q)I_M+q \bm{e}\bm{e}^T$, 
with $q= \frac{2h}{M(M-1)}$, belongs to ${\cal A}^M_h$ and has 
$$
\lambda_{1}(Q_{1fac})=1+(M-1)\frac{2h}{M(M-1)}= 1+ \frac{2h}{M},
$$
i.e., the lower bound for the max eigenvalue of matrices in ${\cal A}^M_h$.
Therefore, $\forall Q \in {\cal A}^h$ it holds that
$$
\lambda_{1}(Q) \ge \lambda_{1}(Q_{1fac}).
$$
 \end{proof}

Note that the bound is not strict since the largest eigenvalue of a block diagonal matrix with identical blocks is also $1+ \frac{2h}{M}$. Indeed,
let consider the block diagonal matrix with $K$ identical clusters
   \[
                            Q:=\begin{bmatrix}
      Q(\rho) &  0 & \cdots & 0\\
     0 & Q(\rho) & \cdots &0\\
      \vdots & & \ddots &  \vdots\\
      0 & \cdots & 0 & Q(\rho)\\
                              \end{bmatrix}\in \R^{M\times M},
                           \]
    where $Q(\rho)\in \R^{M_c}$ is a one-factor matrix and $M_c \cdot K=M$. 
    We observe that $Q\in \mathcal{A}_h^M$ if and only if $\rho=\frac{2h}{(M_c -1)M}$, therefore \[
    \lambda_1(Q)=1+(M_c-1) \rho=1+\frac{2h}{M}.
    \]

\end{appendices}

    \clearpage

% \theendnotes

\end{document}